\newtheorem{theorem}{Theorem}
\newtheorem{corollary}{Corollary}
\newtheorem{definition}{Definition}
\newtheorem{example}{Example}
\newtheorem{lemma}{Lemma}
\newtheorem{proposition}{Proposition}
\newtheorem{remark}{Remark}
\tikzset{myptr/.style={decoration={markings,mark=at position 1 with %
       {\arrow[scale=2,>=stealth]{>}}},postaction={decorate}}}
\newcommand*\samethanks[1][\value{footnote}]{\footnotemark[#1]}
\DeclareFontFamily{T1}{calligra}{}
\DeclareFontShape{T1}{calligra}{m}{n}{<->s*[1.44]callig15}{}
\DeclareMathAlphabet\mathcalligra   {T1}{calligra} {m} {n}
\newcommand{\pablo}[1]{  \ifthenelse{\boolean{showcomments}}
{\textcolor{green!50!black}{(T: #1)}}{}}
\newcommand{\marcelo}[1]{\ifthenelse{\boolean{showcomments}}
{\textcolor{red}{(M: #1)}}{}}
\newcommand{\agustin}[1]{  \ifthenelse{\boolean{showcomments}}
{\textcolor{blue!50!black}{(T: #1)}}{}}
\begin{document}

\title{Regret-free truth-telling voting rules%
\thanks{%
We are grateful to Jordi Massó for valuable comments and suggestions.
 We acknowledge financial support
from UNSL through grants 032016, 030120, and 030320, from Consejo Nacional
de Investigaciones Cient\'{\i}ficas y T\'{e}cnicas (CONICET) through grant
PIP 112-200801-00655, and from Agencia Nacional de Promoción Cient\'ifica y Tecnológica through grant PICT 2017-2355.}}

\author{R. Pablo Arribillaga\thanks{
Instituto de Matem\'{a}tica Aplicada San Luis, Universidad Nacional de San
Luis and CONICET, San Luis, Argentina, and RedNIE. Emails: \href{mailto:rarribi@unsl.edu.ar}{rarribi@unsl.edu.ar} 
and \href{mailto:abonifacio@unsl.edu.ar}{abonifacio@unsl.edu.ar} 
} \and Agustín G. Bonifacio\samethanks[2] \and 
Marcelo A. Fernandez\thanks{%
Department of Economics, University of Nevada, Reno, USA. Email: \href{marceloarielf@unr.edu}{marceloarielf@unr.edu}}
}

\date{\today}

\maketitle

\begin{abstract}

We study the ability of different classes of voting rules to induce agents to report their preferences truthfully,  if agents want to avoid regret. 
First, we show that  regret-free truth-telling is equivalent to strategy-proofness among tops-only rules. Then, we focus on three important families of (non-tops-only) voting methods: maxmin, scoring, and Condorcet consistent ones. We prove positive and negative results for both neutral and anonymous versions of maxmin and scoring rules. In several instances we provide necessary and sufficient conditions. We also show that Condorcet consistent rules that satisfy a mild monotonicity requirement are not regret-free truth-telling. Successive elimination rules fail to be regret-free truth-telling despite not satisfying the monotonicity condition.   Lastly, we provide two characterizations for the case of three alternatives and two agents.

\bigskip

\noindent \emph{JEL classification:} D71.\bigskip

\noindent \emph{Keywords:} Strategy-proof, Regret-free truth-telling, Voting Rules, Social Choice. 

\end{abstract}

\section{Introduction}

Voting rules are procedures that allow a group of agents to select an alternative, among many, according to their preferences.
Naturally, their vulnerability to manipulation is a primary concern.
Thus, it is desirable that voting rules are strategy-proof, meaning that it is always in the best interest of agents to report their preferences truthfully regardless of the behavior of others. 

Unfortunately, outside of a dictatorship, there is no strategy-proof voting rule when more than two alternatives, and all possible preferences over alternatives, are considered \citep{gibbard1973manipulation,satterthwaite1975strategy}.
Despite this negative result, 
lack of strategy-proofness does not mean that a voting rule is easy to manipulate nor that agents have the information to safely do so. The required manipulation may have to be specifically tailored to the reported preferences of all other participants. 
Given that agents typically do not have access to such information in practice, we approach the incentives to manipulate from a different perspective; one that allows agents to evaluate the potential outcomes of a manipulation according to the partial information they hold and the inferences they can make based on what they observe. 

We examine the incentives to report preferences truthfully from the lenses of regret avoidance, in the sense of \cite{fernandez2020deferred}.%
\footnote{For evidence on the effect of anticipated regret on decision making see \citet[][psychology]{zeelenberg2018anticipated}, \citet[][neuroscience]{coricelli2005regret}, \citet[][experimental economics]{fioretti2018dynamic} and references therein. Other notions that account for the partial information that agents hold are also justified in terms of regret-avoidance; e.g.   
  ex-post  \citep{bergemann2008expost} and posterior equilibrium \citep{green1987posterior}. 
 Beyond the present context, regret-based incentives have helped explain behavior in 
voter turnout \citet{ferejohn1974paradox},
 auctions \citep{ozbay2007auctions}, monopoly pricing under ignorance \citep{bergemann2008pricing,bergemann2011robust}, residency matching \citep{fernandez2020deferred} and school choice \citep{chen2024regret}. See \cite{stoye2009minimax} for a wide-ranging list of applications.  
}

We assume that an agent knows their own preference over alternatives and  the voting rule used to select an outcome. Upon observing the outcome,
the agent infers which were the possible preferences profiles reported by the other agents. An agent suffers regret if the chosen report is dominated ex-post. 
A voting rule is regret-free truth-telling if it guarantees that no agent regrets   reporting their preferences truthfully. 

First, we show that if the voting rule only considers the most preferred alternative reported by each agent (i.e. it is tops-only), then regret-free truth-telling is equivalent to strategy-proofness. 
This equivalence implies that: (i) for problems with only two alternatives,  extended majority voting rules are the only regret-free truth-telling rules; and, 
(ii) there are no non-dictatorial tops-only regret-free truth-telling voting rules, on the universal domain. 
Thus, we examine non-tops-only rules when there are more than two alternatives. 
We study whether three of the most important families of non-tops-only voting methods satisfy regret-free truth-telling:
(i) maxmin methods, (ii) scoring methods, and (iii) Condorcet consistent methods. Maxmin methods select those alternatives that  ``make the least happy agent(s) as happy as possible'' \citep{rawls1971theory}. Scoring methods assign points to each alternative according to the rank it has in agents' preferences and selects one of the alternatives with the highest score. Condorcet consistent methods select the pairwise majority winner (Condorcet winner) whenever one exists.
To resolve potential multiplicity in the scoring and maxmin methods we consider two classical tie-breakers.   One is defined by picking according to the preference of a fixed agent (neutral).  The other is defined by a fixed order of the alternatives (anonymous). 

Given $n$ agents and $m$ alternatives, 
we show that all neutral maxmin rules are regret-free truth-telling. Anonymous maxmin rules are regret-free truth-telling if and only if $n\geq m-1$ or $n$ divides $m-1$. We also obtain general positive results for the negative plurality rule, a special scoring rule in which all the  rank positions get one point except the last one which gets zero. 
The results are similar to those of maxmin rules: all neutral negative plurality rules are regret-free truth-telling,  whereas an anonymous negative plurality rule is regret-free truth-telling if and only if $n \geq m-1$. 

For general scoring rules, we provide necessary and sufficient conditions for some classes of these rules to be regret-free truth-telling.  
Among others, these include $k$-approval rules. 
Notably, Borda, plurality and Dowdall rules, as well as all efficient and anonymous rules fail to be regret-free truth-telling.

We find that Condorcet consistent rules are incompatible with regret-free truth-telling under a mild monotonicity condition. 
The monotonicity condition says that if an alternative is ranked below the outcome of the rule for an agent and he changes his preferences modifying only the ordering of alternatives ranked above the outcome, then such alternative continues not to be chosen.  
In particular, we get that  the six  famous Condorcet consistent rules associated with the names of Simpson, Copeland, Young, Dodgson, Fishburn and Black (in both anonymous and neutral versions) are monotone, and therefore not regret-free truth-telling.\footnote{For the anonymous Simpson and Copeland rules, these  results have been previously obtained by \cite{endriss2016strategic}.}  We also show that a family of non-monotone Condorcet consistent rules, the successive elimination rules,  are not regret-free truth-telling on the universal domain.

Finally, for the case with two agents and three alternatives, we present two characterization results.
The first one says that a rule is regret-free truth-telling, efficient, and anonymous if and only if it is either a successive elimination  or an anonymous maxmin rule in which the tie-breaking device is an antisymmetric and complete (not necessarily transitive) binary relation.  The second one says that a rule is regret-free truth-telling and neutral if and only if it is a dictatorship or a maxmin rule with a specific type of tie-breaking that preserves neutrality.

\subsection{Related literature}

Two main approaches have been taken to circumvent Gibbard-Satterthwaite's impossibility theorem. The first approach restricts the domain of preferences that agents can have over alternatives \citep[see][and references therein]{barbera2011strategyproof}. 
This paper contributes to the literature following the second approach, that is, to consider different notions of strategic behavior.%
\footnote{Early examples of this approach include \cite{farquharson1969theory}'s sophisticated voting, \cite{moulin1979dominance}'s dominance-solvable voting schemes, as well as \cite{barbera1982implementability}'s protective strategies and \cite{moulin1981prudence}'s prudent strategies.}

Among the papers considering weakenings of strategy-proofness, a recent literature has emerged that incorporates the (possibly) partial information held by agents.

\cite{reijngoud2012voter} and \cite{endriss2016strategic} 
 study  
when an agent has an incentive to manipulate different voting rules subject to different information functions. Importantly, there, the concept of winner information function leads to a property equivalent to regret-free truth-telling.

\cite{gori2021manipulation} introduces the notion of WMG-strategy-proofness, where an individual's information about the preferences of others is limited to, for every pair of alternatives,  the number of
people preferring the first alternative to the second one. \cite{gori2021manipulation} presents a positive result showing a class of Pareto optimal, WMG-strategy-proof
and non-dictatorial voting functions; and a negative result proving that, when at least three alternatives are considered, no
Pareto optimal and anonymous voting function is  WMG-strategy-proof. 
 
 \cite{troyan2020obvious} introduce the concept of obvious manipulation in the context of market design, while \cite{aziz2021obvious} and \cite{arribillaga2024obvious} apply it in the context of voting.%
 \footnote{
\cite{troyan2020obvious} assume that an agent knows the possible outcomes of the mechanism conditional on  his own declaration of preferences,  and define a deviation from the truth to be an obvious manipulation if either the best possible outcome under the deviation is strictly better than the best possible outcome under truth-telling, or the worst possible outcome under the deviation is strictly better than the worst possible outcome under truth-telling.} 
A mechanism that does not allow any obvious manipulation is called not-obviously-manipulable. 
\cite{aziz2021obvious} present a general sufficient condition for a voting rule to be not-obviously-manipulable. 
They show that Condorcet consistent as well as some other strict scoring rules are not-obviously-manipulable.
Furthermore, for the class of $k$-approval voting rules, they give necessary and sufficient  conditions for obvious manipulability. \cite{arribillaga2024obvious}, in turn, focus on tops-only rules. In two well-known classes of rules, the (generalized) median voter schemes and voting by committees,   they identify which subset of rules  do not have obvious manipulations when defined on the universal domain of preferences.

The rest of the paper is organized as follows. In Section \ref{Preliminaries}, we introduce the model and the property of regret-free truth-telling. We  show the  equivalence of regret-free truth-telling and strategy-proofness for tops-only rules in Section \ref{Regret-freeness}, where we also characterize extended majority voting rules as the only regret-free truth-telling rules when there are only two alternatives to choose from. In Section \ref{sec:maxmin}, we provide necessary and sufficient conditions for maxmin rules to be regret-free truth-telling. Section \ref{sec:scoring} we provide positive and negative results regarding scoring rules. In Section \ref{section Condorcet}, we present negative results for  Condorcet consistent rules. The special case with two agents and three alternatives is analyzed in Section \ref{two and three}, where two characterizations are presented.

\section{Preliminaries}\label{Preliminaries}

\subsection{Model}

A set of \textit{agents }$N=\{1,\ldots ,n\}$, with $n\geq 2$, has to choose
an alternative from a finite and given set $X,$ with $|X|=m \geq 2.$  Each agent $i\in N$ has a strict \textit{preference} $%
P_{i}$ 
over $X.$ 
We denote by $R_{i}$ the weak preference over $X$
associated to $P_{i};$ i.e., for all $x,y\in X$, $xR_{i}y$ if and
only if either $x=y$ or $xP_{i}y.$ Let $\mathcal{P}$ be the set of all
strict preferences over $X.$ 
A \emph{(preference) profile} is a $n$-tuple $%
P=(P_{1},\ldots ,P_{n})\in \mathcal{P}^n,$ an ordered list of $n$ preferences, one for each agent.
Given a profile $P$ and an agent $i,$ $P_{-i}$ denotes the subprofile 
obtained by deleting $P_{i}$ from $P.$ 
For each $P_{i}\in \mathcal{P},$ denote by $t_{k}(P_{i})$ to the alternative in the \emph{$k$-th position} (from bottom to top). Many times we write $t(P_{i})$ instead of $t_{m}(P_{i})$, and refer to it as the \emph{top} of $P_i$. We often also write $P_i$ as an ordered  list
\begin{equation*}
P_{i}:t_{m}(P_{i}),t_{m-1}(P_{i}),\ldots, t_{1}(P_{i}).
\end{equation*}%
 A \textit{(voting) rule} is a function $f:\mathcal{P}^n%
\longrightarrow X$ that selects for each preference profile $P\in \mathcal{P}^n$ an alternative $f(P)\in X$. 
We assume throughout that a voting rule is an onto function. 
Next, we define several classical properties that a rule  may
satisfy and that we use in the sequel.    
A rule $f$ is: 

\begin{itemize}
    \item 
\textit{strategy-proof} if agents can never induce a strictly preferred outcome by misrepresenting their preferences; namely, for
each $P\in \mathcal{P}^n,$ each $i\in N$ and each $P_{i}^{\prime }\in \mathcal{P}%
,$%
\begin{equation*}
f(P)R_{i}f(P_{i}^{\prime },P_{-i});
\end{equation*}
\item
\emph{efficient} if, for each $P\in \mathcal{P}^n$,  there is no $y\in X$ such that $yP_{i}f(P)$ for each $%
i\in N$;

\item
\emph{tops-only} if $P,P^{\prime }\in \mathcal{P}^n$ such that $%
t(P_{i})=t(P_{i}^{\prime })$ for each $i\in N$ imply $f(P)=f(P^{\prime })$;

\item 
\emph{dictatorial} if there exists $i\in N$ such that for each $P\in
\mathcal{P}^n$, $f(P)=t(P_{i})$. In a dictatorial rule, in each profile of preferences, the same agent selects his most preferred outcome;

\item
 \emph{unanimous} if $t(P_{i})=x$
for each $i\in N$ imply $f(P)=x$. Unanimity is a natural and weak form of efficiency: if all agents consider an
alternative as being the most-preferred one, the rule should select it;  

\item
 \emph{anonymous} if for
each $P\in \mathcal{P}^n$ and each bijection $\pi :N\longrightarrow N,$ $%
f(P)=f(P^{\pi })$ where for each $i\in N,$ $P_{i}^{\pi }=P_{\pi (i)}$. Anonymity requires that the rule treats all agents equally because the social outcome is
selected without paying attention to the identities of the agents; 

\item
\emph{neutral} if for each $P\in \mathcal{P}^n$ and each bijection $\pi
:X\longrightarrow X,$ $\pi (f(P))=f(\pi P)$ where $\pi P_{i}:\pi (t(P_{i})),\pi
(t_{m-1}(P_{i})),...,\pi (t_{1}(P_{i})).$ 

\end{itemize}

In general,
the axioms of anonymity and neutrality are incompatible for voting rules.  A classical way to address such incompatibility is to consider rules
defined in two stages as follows:\footnote{``In practice, we will be happy with (voting) correspondences that respect the three principles (efficiency, anonymity and neutrality). If a deterministic election is called for, we will use either a non-anonymous tie-breaking rule or a non-neutral one'' \citep[see][p.234]{moulin1991axioms}.} 
\begin{enumerate}

\item[(i)] First, consider a \emph{voting correspondence} $%
\mathcal{Y}:\mathcal{P}^n \longrightarrow 2^{X} \setminus  \{ \emptyset \}$ that for
each preference profile $P \in \mathcal{P}^n$ chooses a (non-empty) subset  $\mathcal
Y(P) \subseteq X,$ and assume that $%
\mathcal{Y}$ satisfies both  anonymity and neutrality.

\item[(ii)] Second, given $P\in \mathcal{P}^n$, consider a strict order  on $X$ and choose the maximal  element according to that order in $\mathcal{Y}(P)$. There are two classical selections of such an order, one to preserve
anonymity and the other to preserve neutrality:

\begin{enumerate}

\item[(a)] The strict order $\succ $ is independent of $P$ and is part of
the rule's definition. In this case anonymity is preserved and the rule is
defined by\footnote{Throughout the paper, given a strict order $>$ defined on a set $A$ and a subset $B \subseteq A,$ we denote by $\max_> B$ to the maximum element in set $B$ according to order $>$.}
\begin{equation}\label{cond:anon}
f(P)=\underset{\succ }{\max } \ \mathcal{Y}(P).
\end{equation}

\item[(b)] There exists an agent $i \in N$ such that, for each $P\in \mathcal{P}
$, the strict order we consider is $P_{i}.$ In this case, neutrality is
preserved and the rule is defined by
\begin{equation}\label{cond:neutral}
f(P)=\underset{P_{i}}{\max } \ \mathcal{Y}(P).
\end{equation}

\end{enumerate}

\end{enumerate}

From Section \ref{maxmin and scoring}  onward, we study rules that can be defined by this two-stage procedure. This way to define a rule is flexible enough to encompass many well-known and long-studied families of rules.

\subsection{Regret-free truth-telling}

We now adapt the notion of regret-free truth-telling \citep{fernandez2020deferred} to the voting context. A regret-free truth-telling rule provides incentives to report preferences truthfully if agents want to avoid regret.
When observing an outcome, an agent regrets his report if it is dominated ex-post. That is, if there is an alternative report that would have guaranteed him the same or a better outcome for any preference report of others, consistent with his observation.
We can then define a rule to be regret-free truth-telling as one in which no agent ever regrets reporting their preferences truthfully.  
Equivalently, given an observed outcome after reporting his preferences truthfully, if the agent could have done better through an alternative report for some configuration of reports of others consistent with his observation, then, that same alternative report could have done worse for another configuration of reports consistent with his observation. %
Formally,  

\begin{definition}
The rule  $f:\mathcal{P}^n%
\longrightarrow X$ is \textbf{regret-free truth-telling } if for each  $i \in N,$ each $P \in \mathcal{P}^n,$ and each $P_i' \in \mathcal{P}$ such that  $f(P_{i}^{\prime },P_{-i})P_{i}f(P),$ there is $P^\star_{-i} \in \mathcal{P}^{n-1}$ such that 
$$f(P_i, P_{-i}^\star)=f(P) \text{  and  }f(P_i,P_{-i}^\star)P_if(P_i',P_{-i}^\star).$$
\end{definition}

\section{Tops-only rules and the case of two alternatives}\label{Regret-freeness}

It is clear that if a rule is strategy-proof, then it is also regret-free truth-telling. Our first result states that the converse is also true for tops-only rules.%
\footnote{The result holds regardless of whether the voting rule is onto.}

\begin{proposition}\label{Prop Tops-only}
If a rule is regret-free truth-telling and tops-only, then it is strategy-proof. 
\end{proposition}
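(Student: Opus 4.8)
The plan is to argue by contradiction, exploiting the freedom that tops-only grants to reshape the \emph{bottom} of an agent's reported preference without changing any outcome. Assume $f$ is regret-free truth-telling and tops-only, and suppose, toward a contradiction, that $f$ is not strategy-proof. Then there exist $i\in N$, a profile $P\in\mathcal{P}^n$, and a report $P_i'\in\mathcal{P}$ with $f(P_i',P_{-i})\,P_i\,f(P)$. Write $c:=f(P)=f(P_i,P_{-i})$ and $d:=f(P_i',P_{-i})$, so that $d\,P_i\,c$. I would first record two immediate consequences: $d\neq c$, and $c\neq t(P_i)$ (otherwise $c=t(P_i)$ would be weakly preferred to $d$, contradicting $d\,P_i\,c$).

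The key step is to move to a more convenient true preference that has the same top. Since $c\neq t(P_i)$, choose $\tilde P_i\in\mathcal{P}$ with $t(\tilde P_i)=t(P_i)$ and $t_1(\tilde P_i)=c$; that is, $\tilde P_i$ keeps the top of $P_i$ but places the observed outcome $c$ at the very bottom. Because $(\tilde P_i,P_{-i})$ and $(P_i,P_{-i})$ share the same tops for every agent, tops-only yields $f(\tilde P_i,P_{-i})=f(P_i,P_{-i})=c$, while $f(P_i',P_{-i})=d$ is unchanged. As $c$ is now the bottom of $\tilde P_i$ and $d\neq c$, we still have $d\,\tilde P_i\,c$, i.e.\ $f(P_i',P_{-i})\,\tilde P_i\,f(\tilde P_i,P_{-i})$, so $P_i'$ remains a profitable deviation at the profile $(\tilde P_i,P_{-i})$.

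Finally I would apply the definition of regret-free truth-telling to agent $i$ at $(\tilde P_i,P_{-i})$ with deviation $P_i'$: it provides $P_{-i}^\star$ with $f(\tilde P_i,P_{-i}^\star)=c$ and $f(\tilde P_i,P_{-i}^\star)\,\tilde P_i\,f(P_i',P_{-i}^\star)$, the latter reading $c\,\tilde P_i\,f(P_i',P_{-i}^\star)$. But $c$ is the bottom element of $\tilde P_i$, so no alternative is strictly worse than $c$ under $\tilde P_i$; hence $c\,\tilde P_i\,w$ is false for every $w\in X$, which is the desired contradiction. I expect the only real obstacle to be conceptual rather than computational: recognizing that tops-only can be used to force the observed outcome to the bottom of the true preference, which simultaneously preserves the profitable manipulation and removes any room for the ``rescuing'' profile $P_{-i}^\star$ that regret-freeness would demand. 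The one routine point to verify carefully is $c\neq t(P_i)$, which legitimizes the construction of $\tilde P_i$.
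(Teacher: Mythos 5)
Your proof is correct and is essentially the paper's own argument: both construct the same auxiliary preference $\tilde P_i$ keeping the top of $P_i$ but demoting the observed outcome $f(P)$ to the bottom, invoke tops-onlyness to preserve the outcome, and then note that no rescuing subprofile $P_{-i}^\star$ can exist because nothing lies below $f(P)$ in $\tilde P_i$. Your explicit check that $f(P)\neq t(P_i)$ is a small point the paper leaves implicit, but otherwise the two proofs coincide.
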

\begin{proof}
Let $f:\mathcal{P}^n \longrightarrow X$ be a regret-free truth-telling and tops-only rule. Assume $f$ is
not strategy-proof. Then, there are $P\in \mathcal{P}^{n},$ $i\in N,$ and $%
P_{i}^{\prime }\in \mathcal{P}$ such that $f(P_{i}^{\prime
},P_{-i})P_{i}f(P).$ Let $\widetilde{P}_{i}\in \mathcal{P}$ be such that $t(%
\widetilde{P}_{i})=t(P_{i})$ and $t_1(\widetilde{P}_{i})=f(P).$ Since $f$ is tops-only, $f(%
\widetilde{P}_{i},P_{-i})=f(P).$ Therefore, since $t_1(\widetilde{P}_{i})=f(P),$ it follows that 
\begin{equation}\label{new1}
f(P_{i}^{\prime },P_{-i})\widetilde{P}_{i}f(\widetilde{P}_{i},P_{-i}).
\end{equation}
Let $P_{-i}^{\star}\in \mathcal{P}^{n-1}$ be such that $f(\widetilde{P}_{i}, P_{-i}^{\star})=f(\widetilde{P}_{i},P_{-i}).$ Since $%
t_1(\widetilde{P}_{i})=f(P)=f(%
\widetilde{P}_{i},P_{-i})$, we
have 
\begin{equation}\label{new2}
f(P_{i}^{\prime },P_{-i}^{\star })\widetilde{R}_{i}f(\widetilde{P}%
_{i},P_{-i}^{\star }).
\end{equation}
By \eqref{new1} and \eqref{new2}, $f$ is not regret-free truth-telling.
\end{proof}

The result, in turn, leads to a complete characterization of the class of regret-free truth-telling rules for the case of two alternatives.
In order to present it, we first need to define the
family of extended majority voting rules on $\{x,y\}$.\footnote{These rules are equivalent to the ones presented in \cite{moulin1980strategy}, where fixed ballots are used to describe them instead of committees.} Fix $w\in \{x,y\}$
and let $2^{N}$ denote the family of all subsets of $N,$ referred to as
\emph{coalitions}. A family $\mathcal{C}_{w}\subseteq 2^{N}$ of coalitions is a
\textit{committee for} $w$ if it satisfies the following monotonicity
property: $S\in \mathcal{C}_{w}$ and $S\subsetneq T$ imply $T\in \mathcal{C}%
_{w}$. The elements in $\mathcal{C}_{w}$ are called \emph{winning coalitions (for $w$)}.

\begin{definition}
A rule $f:\mathcal{P}^n\longrightarrow \{x,y\}$ is an \textbf{extended majority voting} rule if there is a committee $\mathcal{C}_x$ for $x$ with the property that, for each $P \in \mathcal{P}^n,$ 
$$f(P)=x \text{ if and only if } \{i \in N \ : \ t(P_i)=x\} \in \mathcal{C}_x.$$
\end{definition}

The following corollary provides the characterization result.

\begin{corollary}\label{cor strategy-proof}
Assume $m=2.$ Then, 
\begin{enumerate}[(i)]
\item A rule is regret-free truth-telling if and only if it is
strategy-proof;
\item A rule is regret-free truth-telling if and only it is an
extended majority voting rule.
\end{enumerate}
\end{corollary}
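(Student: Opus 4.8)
The plan is to derive both equivalences from Proposition~\ref{Prop Tops-only} together with the observation that, when $m=2$, the tops-only property is automatic. Indeed, with $X=\{x,y\}$ each strict preference $P_i$ is completely determined by its top: $t(P_i)=x$ forces $P_i:x,y$ and $t(P_i)=y$ forces $P_i:y,x$. Hence any two profiles $P,P'$ with $t(P_i)=t(P_i')$ for all $i$ are in fact identical, so every rule $f:\mathcal{P}^n\to\{x,y\}$ is trivially tops-only. This observation is the device that lets the general machinery collapse onto the two-alternative case.

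For part (i), the implication from strategy-proofness to regret-free truth-telling holds for all $m$ and was already noted in the text. For the converse, I would simply invoke Proposition~\ref{Prop Tops-only}: since every rule is tops-only when $m=2$, a regret-free truth-telling rule is strategy-proof. This settles (i).

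For part (ii), by (i) it suffices to show that a rule is strategy-proof if and only if it is an extended majority voting rule. For the ``if'' direction I would fix a committee $\mathcal{C}_x$ and check directly that no agent can profitably deviate: an agent with true top $x$ can only shrink the set $\{i:t(P_i)=x\}$ by deviating, and by the monotonicity of $\mathcal{C}_x$ this cannot turn a $y$-outcome into an $x$-outcome; symmetrically, an agent with true top $y$ can only enlarge that set and thus cannot turn an $x$-outcome into a $y$-outcome. For the ``only if'' direction I would define $\mathcal{C}_x:=\{S\subseteq N : f(P)=x \text{ whenever } \{i:t(P_i)=x\}=S\}$, which is well defined precisely because $f$ is tops-only, and which by construction satisfies $f(P)=x$ if and only if $\{i:t(P_i)=x\}\in\mathcal{C}_x$. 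The one substantive point is to verify that $\mathcal{C}_x$ is a committee, i.e. monotone. I would prove this by adding a single agent: starting from a profile with top-set $S\in\mathcal{C}_x$ (so the outcome is $x$) and switching one agent $j\notin S$ to top $x$, if the outcome became $y$ then agent $j$, whose true top is now $x$, could report top $y$ and obtain $x$, contradicting strategy-proofness; iterating over the elements of $T\setminus S$ yields $T\in\mathcal{C}_x$ for every $T\supsetneq S$.

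The only real obstacle is this monotonicity step, and it is mild: it is exactly where strategy-proofness (rather than mere tops-onlyness) is used, and the remainder of the argument is bookkeeping once the tops-only observation trivializes the structure of the domain.
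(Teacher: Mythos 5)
Your proof is correct. Part (i) is exactly the paper's argument: observe that with two alternatives every rule is tops-only (a preference is determined by its top), then apply Proposition~\ref{Prop Tops-only}. For part (ii), however, you take a genuinely different route: the paper disposes of it in one line by citing the characterization of strategy-proof rules on two alternatives from \cite{moulin1980strategy}, whereas you re-derive that characterization from scratch. Your two directions are both sound: for sufficiency, the contrapositive of the committee's monotonicity shows that an agent with top $x$ cannot convert a $y$-outcome by shrinking the set of $x$-voters, and symmetrically for an agent with top $y$; for necessity, defining $\mathcal{C}_x$ as the family of top-sets yielding outcome $x$ is legitimate precisely because of tops-onlyness, and your one-agent-at-a-time argument correctly locates the only place strategy-proofness is needed, namely in establishing that $\mathcal{C}_x$ is closed under supersets. (Note that the degenerate committees $\mathcal{C}_x=\emptyset$ and $\mathcal{C}_x=2^N$, corresponding to the constant rules, are handled automatically by your construction.) What the paper's citation buys is brevity; what your version buys is a self-contained proof that makes the logical dependence on strategy-proofness explicit, at the cost of reproducing a known result.
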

\begin{proof}
(i) If $f$ is strategy-proof it is clear that $f$ is regret-free truth-telling. If $f$ is regret-free truth-telling, since when  $m=2$ every rule is tops-only, $f$ is strategy-proof by Proposition \ref{Prop Tops-only}.
(ii) It follows from (i) and \cite{moulin1980strategy}. 
\end{proof}

\begin{corollary}
Assume $m > 2.$ A rule is regret-free truth-telling 
and tops-only if and only if  it is a dictatorship.
\end{corollary}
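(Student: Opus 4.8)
The plan is to treat the two implications separately, with essentially all the content living in the converse. For the easy direction, I would simply verify that a dictatorship satisfies the three properties. If $f(P)=t(P_i)$ for a fixed $i\in N$, then $f$ is manifestly tops-only; it is efficient because no alternative $y$ can satisfy $yP_if(P)$ (nothing beats agent $i$'s own top for agent $i$), so there is no unanimously preferred alternative; and it is strategy-proof because agent $i$ always receives his top while no other agent can alter the outcome. Since strategy-proofness implies regret-free truth-telling (as already noted in the excerpt just before Proposition \ref{Prop Tops-only}), the dictatorship meets all three requirements.

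For the converse, the key move is to funnel the hypotheses into the Gibbard–Satterthwaite theorem. Suppose $f$ is regret-free truth-telling, efficient, and tops-only. Because $f$ is regret-free truth-telling and tops-only, Proposition \ref{Prop Tops-only} immediately gives that $f$ is strategy-proof. Next I would observe that efficiency forces $f$ to be onto: for any $x\in X$, take the profile in which every agent ranks $x$ first; efficiency (in fact unanimity, which efficiency implies) yields $f(P)=x$, so every alternative lies in the range of $f$. Now we have a strategy-proof rule on the full domain $\mathcal{P}^n$ whose range is all of $X$, with $|X|=m>2$. The Gibbard–Satterthwaite theorem then applies and concludes that $f$ is dictatorial, completing the argument.

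The point worth emphasizing is that this corollary is not really a fresh combinatorial argument but a clean corollary of two prior results: Proposition \ref{Prop Tops-only} does the work of upgrading regret-free truth-telling to strategy-proofness under the tops-only assumption, and the classical Gibbard–Satterthwaite theorem supplies the dictatorship. Consequently the only steps requiring genuine (if routine) verification are the easy direction and the implication from efficiency to surjectivity; there is no substantive obstacle beyond correctly invoking $m>2$ so that Gibbard–Satterthwaite is in force.
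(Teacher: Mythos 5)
Your proposal is correct and follows exactly the paper's route: the paper's own proof is the one-line observation that the result follows from Proposition \ref{Prop Tops-only} and the Gibbard--Satterthwaite theorem, and you have simply filled in the routine details (the easy direction and the surjectivity of $f$ via efficiency). Nothing further is needed.
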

\begin{proof}
It follows from Proposition \ref{Prop Tops-only} and
Gibbard-Satterthwaite's Theorem.
\end{proof}

\section{Non-tops-only rules and more than two alternatives}
\label{maxmin and scoring} 

From now on, we assume that $m>2.$ 

\subsection{Maxmin rules}\label{sec:maxmin}

Voting rules in the class of maxmin methods select those alternatives that  ``make the least happy agents as happy as possible'' \citep{rawls1971theory}.  
Given $P\in \mathcal{P}^n$ and $x \in X,$ the \emph{minimal position} of $x$
according to $P$ is defined by
\begin{equation*}
mp(x,P)=\min \{k:\text{there exists }i\in N\text{ such that }x=t_{k}(P_{i})\}.
\end{equation*}
An alternative is a \emph{maxmin winner} if there is no other alternative with higher minimal position. We denote the set of   \emph{maxmin winners} according to $P$ as $\mathcal{M}(P)$. Namely,
\begin{equation*}
\mathcal{M}(P)=\{x \in X :mp(x,P)\geq mp(y,P)\text{ for each }y\in X\}.
\end{equation*}

The idea of making  the least happy agents as happy as possible is captured by rules that pick, for each preference profile, a maxmin winner for that profile. We study both anonymous and neutral versions of these rules. Formally,

\begin{definition}
A rule  $f:\mathcal{P}^{n}\longrightarrow X$ is 
\begin{enumerate}[(i)]
\item  \textbf{$\boldsymbol{A}$-maxmin} if there is a strict order $\succ$ on $X$ such that, for each $P\in \mathcal{P}^{n},$ $$f(P)=\underset{\succ }{\max } \ \mathcal{M}(P).$$

\item \textbf{$\boldsymbol{N}$-maxmin}  if there is an agent $i \in N$  such that, for each $P\in \mathcal{P}^{n},$ $$f(P)=\underset{P_{i}}{\max } \ \mathcal{M}(P).$$
\end{enumerate}
\end{definition}

The following theorem summarizes the positive results concerning regret-free truth-telling for these rules:

\begin{theorem}\label{theo maxmin rules}
\;
\begin{enumerate}[(i)]
\item An $A$-maxmin rule is regret-free truth-telling if and only if $n\geq m-1$ or $n$ divides $m-1.$
\item Any  $N$-maxmin rule is regret-free truth-telling. 
\end{enumerate}
\end{theorem}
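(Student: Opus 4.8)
The plan is to reduce the regret-free truth-telling of a maxmin rule to a purely combinatorial question about minimal positions. Write $\mathrm{pos}(x,Q)$ for the position $k$ of $x$ in $Q$ (so $x=t_k(Q)$), and for a profile $P$ and agent $i$ set $m_{-i}(x)=\min_{j\neq i}\mathrm{pos}(x,P_j)$. Since a unilateral change of $i$'s report alters only $i$'s own positions, $mp(x,(P_i',P_{-i}))=\min\{\mathrm{pos}(x,P_i'),m_{-i}(x)\}$; thus $i$ can push any alternative's minimal position down to $1$ but up only as far as the cap $m_{-i}(x)$, which he cannot affect. I would record at the outset that observing $f(P)=a$ reveals $a\in\mathcal{M}(P)$, so $mp(a,P)=\min\{\mathrm{pos}(a,P_i),m_{-i}(a)\}$ equals the common maximal value $p$, and that the subprofiles consistent with this observation are exactly those $P_{-i}^\star$ for which $a$ remains a maxmin winner and wins the tie-break.

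First I would prove a structural lemma: any deviation $P_i'$ that is profitable against $P_{-i}$ (that is, $f(P_i',P_{-i})\mathrel{P_i}a$) must rank some alternative $c$ with $a\mathrel{P_i}c$ strictly higher than $P_i$ does. Indeed, if every alternative weakly worse than $a$ is weakly lowered, a pigeonhole argument forces the bottom $\mathrm{pos}(a,P_i)-1$ alternatives to stay in place and $a$ to remain weakly above its position, so the deviation merely permutes the alternatives weakly preferred to $a$; since each strictly preferred $b$ has cap $m_{-i}(b)\le p$, it either can never reach value $p$ (when $m_{-i}(b)<p$) or stays tied at value $\le p$ and loses the fixed tie-break to $a$ (when $m_{-i}(b)=p$), while $a$ keeps value $\ge p$, so the outcome stays $a$ — a contradiction. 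For the $N$-maxmin case with tie-breaking agent $i$ the same conclusion holds, since a strictly preferred $b$ lies above $a$ and, not being a winner, has $m_{-i}(b)<p$ and cannot be lifted into $\mathcal{M}$.

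The heart of the matter is then the equivalence between regret-free truth-telling and the adversary always being able to \emph{isolate} a raised worse alternative. Given a profitable $P_i'$ raising some worse $c$ to position $v:=\mathrm{pos}(c,P_i')$, I would build a consistent $P_{-i}^\star$ (so that $a$ wins truthfully) at which $c$ is the \emph{unique} maxmin winner of $(P_i',P_{-i}^\star)$; then the fixed tie-break is irrelevant and $c$, worse than $a$, is the outcome, punishing the deviation. Isolating $c$ at value $v$ requires placing, among the $n-1$ other agents, each of the $m-v$ alternatives lying weakly above $v$ in $P_i'$ (other than $c$) into some bottom-$(v-1)$ block, while ranking $c$ above $v$ everywhere and keeping $a$ the truthful winner; counting slots, this is possible whenever $(n-1)(v-1)\ge m-v$. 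Taking $v=2$ shows isolation always works once $n\ge m-1$, giving the first positive branch; when $n\mid(m-1)$ a perfect partition of the $m-1$ non-chosen alternatives into the $n$ agents' low rows supplies the needed consistent profile, giving the second. Conversely, when $n<m-1$ and $n\nmid(m-1)$ I would exhibit a profile of maxmin value $2$ in which the top-$2$ sets of $i$ and of the others are disjoint (forcing value exactly $2$), with a tie resolved against $i$; the deviation lowers the current winner and lifts a bottom alternative to value $2$, bringing a strictly preferred alternative into a winning tie that the fixed order resolves in $i$'s favor. Safety holds because at value $2$ the single bottom row of each of the $n<m-1$ other agents cannot suppress all competitors of the raised alternative, so it can never be isolated and always ties with a preferred, tie-break-dominating alternative.

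For part (ii) the same reduction applies, but the tie-break is the report of a fixed agent $\tau$. If $\tau\neq i$ the adversary controls $P_\tau^\star$ and may simply declare the raised worse alternative best, so it wins every tie; isolation becomes unnecessary and a punishing consistent profile exists for all $n$ and $m$. If $\tau=i$, the structural lemma already shows no strictly preferred alternative can be lifted into $\mathcal{M}$, and any value-lowering deviation raises a worse alternative that the other agents, now free to tie-break against $i$, can again make win. The step I expect to be the main obstacle is the exact combinatorial matching in part (i): proving that a consistent punishing profile is available for \emph{every} profitable deviation precisely when $n\ge m-1$ or $n\mid(m-1)$, and dually building the protected safe manipulation in the remaining cases, all while keeping the tie-break argument robust to an arbitrary fixed order $\succ$.
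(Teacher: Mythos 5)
Your overall architecture matches the paper's (a structural lemma showing a profitable misreport must promote a worse alternative, then a consistent ``punishing'' subprofile, and a carefully tied counterexample profile in the remaining parameter range), but several steps as written do not go through. The most serious is the negative branch of part (i): you propose to ``exhibit a profile of maxmin value $2$,'' but when $m>2n$ every profile has maxmin value at least $\lceil m/n\rceil\geq 3$, so no such profile exists; yet the case $n<m-1$, $n\nmid m-1$ includes e.g.\ $n=3$, $m=9$. The paper's counterexample writes $m=nh+r$ with $h\geq 1$, $2\leq r\leq n$, works at maxmin value $h+1$, and uses that at that value there are at least $r\geq 2$ winners so the swapped-down alternative $z$ (last in $\succ$) is never selected. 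Your argument is essentially the $h=1$ instance and does not generalize as stated. A second genuine gap is in part (ii) when the tie-breaking agent is the deviator: your claim that ``no strictly preferred alternative can be lifted into $\mathcal{M}$'' is false. With $P_1:b,a,c$, $P_2:a,b,c$, $P_3:a,c,b$ and $\tau=1$, truth-telling gives $a$, while $P_1':b,c,a$ drags every alternative's minimal position to $1$, making $\mathcal{M}=\{a,b,c\}$ and the (reported) tie-break select $b\mathrel{P_1}a$. So a punishing subprofile must still be constructed for $\tau=i$; the paper's Claims 1--2 do exactly this.

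The remaining issues concern the positive branch. Your structural lemma (``some worse $c$ is strictly raised'') is strictly weaker than what the construction needs: the paper first shows that $f(P)$ itself must drop from position $\bar k=\mathrm{pos}(f(P),P_i)$ to some $\hat k<\bar k$ in $P_i'$, and then by pigeonhole extracts a worse $x$ with $\mathrm{pos}(x,P_i')\geq\bar k>\hat k$. Without this ``flip'' you cannot guarantee that $f(P)$ loses to the targeted alternative under the deviation (a worse alternative raised to some $v<\hat k$ is useless). Relatedly, your counting condition $(n-1)(v-1)\geq m-v$ only handles isolating $c$ at the \emph{deviation} profile; consistency requires separately suppressing every alternative $P_i$-preferred to $f(P)$ to a minimal position below $\bar k$, which is a different set of alternatives with a different threshold (for $v>2$ your bottom-$(v-1)$ packing can leave such an alternative at minimal position in $(\bar k, v)$, so $f(P)$ is not the truthful winner). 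In the divisibility case the feasibility of that suppression is exactly where $n\mid m-1$ enters: $m-1=hn$ forces the maxmin value to be at least $h+1$, hence $|\{y:yP_if(P)\}|\leq h(n-1)$, which fits into the other agents' bottom-$h$ blocks; a ``perfect partition of the $m-1$ non-chosen alternatives into the $n$ agents' low rows'' is not available (agent $i$'s report is fixed) and is not the mechanism. These are fixable, but each requires the sharper lemma and the two-sided (consistency plus punishment) bookkeeping that the proposal currently elides.
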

\begin{proof}
See  Appendix \ref{proof theo maxmin}. 
\end{proof}

The idea behind the proof can be explained as follows. First, we analyze $A$-maxmin rules. 
 To see that regret-free truth-telling implies  $n \geq m-1$ or $n$  divides $m-1$, consider an $A$-maxmin rule in which the tie-breaking has $a$ and $b$ as first and second alternatives, respectively, and alternative $z$ as the last one. 

Suppose that agent 1 has the preference $P_1: b,\dots,a,z,\dots$, that he reports truthfully and that the outcome of the rule is $a$.
Among the subprofiles consistent with $a$, there are some in which alternatives $a$ and $b$ are maxmin winners with respect to that profile, and the outcome is $a$ due to the tie-breaking. 
For any such subprofile, agent 1 could have generated a better outcome by interchanging the order of alternatives $a$ and $z$ in his report. Namely, $a$ would not longer be among the maxmin winners, whereas $b$ would remain a maxmin winner and be selected by the rule due to the tie-breaking. 
The subprofiles where both $a$ and $b$ are maxmin winners, are not the only subprofiles consistent with observing the outcome of the rule being $a$. However, for any subprofile consistent with outcome $a$, agent's 1 misrepresentation by interchanging $a$ and $z$ results in maxmin winners that are either $z$ or an alternative that is at least as good $a$ according to agent $1$'s preferences.  

 The main claim is to prove that if neither $n \geq m-1$ nor $n$ divides $m-1$,  
 for any consistent subprofile there exists a maxmin winner different from $z$ when agent 1 misrepresents his preferences by interchanging $a$ and $z$. Therefore, as $z$ is the last alternative in the tie-breaking and, as we said, the maxmin winners different from $z$ under the misrepresentation and any consistent subprofile  are always at least as good as $a$ in the true preference, the agent regrets truth-telling.

 Next, we argue that  $A$-maxmin rules such that $n\geq m-1$ or $n$  divides $m-1$  and  $N$-maxmin rules are regret-free truth-telling. First, in both cases we prove that in any profitable misrepresentation the outcome of the rule under truth-telling, $f(P)$,  has to be in a lower position than the  one it has in the true preference of the agent. Therefore, in the misrepresentation there is  an alternative  $x$ that is less preferred than $f(P)$  in the true preference that is lifted to a position  greater or equal to the one $f(P)$  has in the true preference.

For each case, we construct a subprofile of the other agents such that: (i) it is consistent  with the true preference of the agent and the outcome $f(P)$, and (ii) $x$ is the only maxmin winner under the misrepresentation and the subprofile. Thus, the agent does not regret truth-telling.
 
 For an $A$-maxmin rule such that $n\geq m-1$ the  subprofile for the other agents is such that $x$ and $f(P)$ are the first and second alternatives for each agent, respectively and for each  alternative different from $x$ and $f(P)$ there is an agent that has that alternative as his bottom alternative. 
 
 For an $A$-maxmin rule such that $n$ divides $m-1$, the definition of the subprofile is more involved but  a similar argument to the one presented in the previous case can be made.
 
 Finally, for a $N$-maxmin rule the subprofile for the other agents is such that $f(P)$ and $x$ are the first and second alternatives for each agent, respectively, while all the other alternatives keep their relative rankings. The technical details can be found in Appendix \ref{proof theo maxmin}.

\subsection{Scoring rules} \label{sec:scoring}

Next, we present the family  of scoring rules. Given  $P\in \mathcal{P}^n$ and $x \in X,$ let $N(P,k,x)=\{i\in
N:t_{k}(P_{i})=x\}$ be the set of agents that have alternative $x$ in the $k$-th position (from bottom to top) in their preferences, and let $n(P,k,x)=\left\vert
N(P,k,x)\right\vert.$ Let $s_k$ be the score associated to the  $k$-th position (from bottom to top) with  $s_{1}\leq s_{2}\leq \ldots \leq s_{m}$  and  $s_1 < s_m.$  The \emph{score of $x\in X$ according
to $P$} is defined by
\begin{equation*}
s(P,x)=\underset{k=1}{\overset{m}{\sum }}[s_{k}\cdot n(P,k,x)].
\end{equation*}
The set of \emph{scoring winners} according to $P$ is
\begin{equation*}
\mathcal{S}(P)=\{x\in X:s(P,x)\geq s(P,y)\text{ for all }y\in X\}.
\end{equation*}

\begin{definition} A rule  $f:\mathcal{P}^{n}\longrightarrow X$ is
\begin{enumerate}[(i)]
\item \textbf{$\boldsymbol{A}$-scoring}  associated to $s_{1}\leq s_{2}\leq \ldots \leq s_{m}$ if there is an order  $\succ$ on $X$ such that, for each  $P\in \mathcal{P}^{n},$ 
$$f(P)=\underset{\succ }{\max } \ \mathcal{S}(P).$$

\item  \textbf{$\boldsymbol{N}$-scoring}   associated to $s_{1}\leq s_{2}\leq \ldots \leq s_{m}$ if there is an agent $i \in N$ such that, for each  $P\in \mathcal{P}^{n},$ 
$$f(P)=\underset{P_{i}}{\max } \ \mathcal{S}(P).$$
\end{enumerate}
\end{definition}

\begin{remark}
If an $A$-scoring rule is efficient, then $s_{m-1}<s_{m}$.
\end{remark}

\begin{remark} Some of the most well known scoring rules are:

\begin{enumerate}[(i)]

\item the \textbf{Borda} rule, in which $s_{k}=k$ for $k=1,\ldots ,m$;

\item the \textbf{Dowdall} rule,  
in which $s_{k}=\frac{1}{m-k+1}$ for $k=1,\ldots ,m$;

\item the \textbf{$\boldsymbol{k}$-approval} rules, in which $0=s_{1}=s_{2}=\ldots=s_{m-k},$ $s_{m-k+1}=\ldots=s_{m-1}=s_{m}=1$ for some $k$ such that $m-1 \geq k\geq 1$, i.e., the top $k$ scores are 1 and the rest are 0. In these rules, agents are asked to name their $k$ best alternatives, and the alternative with most votes wins. 

Within these rules, two subclasses stand out:

\begin{enumerate}

\item[(iii.a)] the \textbf{plurality} rule, where $k=1$. Therefore $s_{1}=s_{2}=\ldots=s_{m-1}=0$ and $s_{m}=1$;

\item[(iii.b)] the \textbf{negative plurality} rule, where $k=m-1$. Therefore $s_{1}=0$ and $s_{2}=\ldots=s_{m-1}=s_{m}=1$. 

\end{enumerate}

\end{enumerate}

\end{remark}

The next theorem considers the cases in which $s_{m-1}<s_m$  and allows us to present conclusive results about efficient $A$-scoring rules, the Borda rule, the Dowdall rule, and the plurality rule.

\begin{theorem}\label{k star m-1}

 Assume $n>2.$ Then,   no (anonymous or neutral) scoring rule  with $s_{m-1}<s_m$ is regret-free truth-telling.

\end{theorem}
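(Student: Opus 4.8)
The plan is to exhibit a single violation of regret-free truth-telling, valid for every scoring rule $f$ with $k^\star=m-1$ (so that $\Delta:=s_m-s_{m-1}>0$) and every $n>2$; note that we may assume $m\ge 3$, since for $m=2$ one has $k^\star=1$ and the rule is an (anonymous or neutral) negative plurality rule already covered by Theorem~\ref{scoring k=1}. I would take as manipulator an agent $i$ that, in the $N$-scoring case, is chosen distinct from the tie-breaking agent $j$ (possible because $n\ge 3$), and give $i$ the true preference $P_i:a,b,c,\ldots$ with three distinct top alternatives, i.e. $t_m(P_i)=a$, $t_{m-1}(P_i)=b$, $t_{m-2}(P_i)=c$. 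The deviation $P_i'$ is the transposition of the top two entries, $P_i':b,a,c,\ldots$. Writing $P=(P_i,P_{-i})$ and $P'=(P_i',P_{-i})$, the only effect of the deviation on scores is to transfer $\Delta$ from $a$ to $b$: $s(P',b)=s(P,b)+\Delta$, $s(P',a)=s(P,a)-\Delta$, and $s(P',x)=s(P,x)$ for all $x\notin\{a,b\}$ and all $P_{-i}$.

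The engine of the argument is a ``never-worse'' lemma: \emph{for every $P_{-i}$ with $f(P)=c$ one has $f(P')\in\{b,c\}$.} To see this, set $M=\max_x s(P,x)=s(P,c)$. Under $P'$ the score of $c$ and of each $x\notin\{a,b\}$ is unchanged (hence $\le M$), the score of $a$ falls strictly below $M$, and only $b$ can rise. If $s(P,b)+\Delta<M$ the winner is still $c$: the argmax loses only $a$, and since the tie-break order is fixed (the order $\succ$ for an $A$-scoring rule, the ballot $P_j$ with $j\ne i$ for an $N$-scoring rule) and none of the remaining maximizers is $a$ or $b$, the tie-break still selects $c$. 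If $s(P,b)+\Delta=M$, the new argmax is the old set of maximizers together with $b$, so the tie-break returns either $c$ or $b$. If $s(P,b)+\Delta>M$, then $b$ is the unique maximizer. In every case $f(P')\in\{b,c\}$.

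Since $b\,P_i\,c$, this lemma already closes most of the proof: against any subprofile consistent with the observed outcome $c$, the deviation $P_i'$ returns something in $\{b,c\}$, which is never strictly worse than $c$ for agent $i$. It therefore remains only to produce \emph{one} profile $P_{-i}$ for which the deviation is strictly profitable, namely $f(P)=c$ while $f(P')=b$. By the trichotomy above it suffices to make $b$ and $c$ co-leaders at $P$, that is $s(P,b)=s(P,c)=M$ with every other alternative (in particular $a$, which gets $s_m$ from agent $i$) strictly below $M$, and with the tie-break resolving the resulting $\{b,c\}$-tie in favour of $c$. Then truth-telling selects $c$, whereas the $\Delta$-boost makes $b$ the \emph{strict} maximizer under $P_i'$, so the deviation selects $b$, and regret-free truth-telling fails. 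The tie-break can always be steered toward $c$ using our freedom to assign roles: for an $A$-scoring rule we pick the alternatives so that $c\succ b$, and for an $N$-scoring rule we let the tie-breaking agent $j$ rank $c$ above $b$.

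The hard part will be this last construction of $P_{-i}$, carried out uniformly over all admissible score vectors with $k^\star=m-1$ and over both tie-breaking regimes. Because agent $i$ already contributes $s_{m-1}$ to $b$ and $s_{m-2}$ to $c$, the remaining $n-1$ agents must supply $c$ with exactly $s_{m-1}-s_{m-2}$ more than $b$, while simultaneously lifting both $b$ and $c$ strictly above every other alternative. This is precisely where $n>2$ enters: with at least two other agents one has enough ballots both to elevate $b$ and $c$ to the top positions and to tune their relative totals to the required value, using ballots that move $b$ and $c$ symmetrically together with one or two ``tuning'' ballots to hit the exact gap and to absorb parity. In degenerate score vectors (for instance plurality, where $s_1=\cdots=s_{m-1}$) the co-leaders will tie with further alternatives, and one must instead ensure that the fixed order $\succ$, or the designated agent $j$, ranks $c$ ahead of all of them. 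Checking that such a $P_{-i}$ exists for every $n\ge 3$, every admissible score vector, and both regimes — as opposed to the clean score-transfer lemma, which is regime-independent — is the delicate, case-laden core of the proof.
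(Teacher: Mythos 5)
Your ``never-worse'' lemma is correct and is essentially the same observation the paper uses (only $b$'s score can rise under the swap, $c$'s is unchanged, and the tie-break --- the fixed order $\succ$ or the ballot of a tie-breaking agent $j\neq i$ --- is identical at $P$ and $P'$). But that is the easy half of the theorem. The proof is not complete until you exhibit, for \emph{every} admissible score vector with $s_{m-1}<s_m$, every $n\ge 3$, and both tie-breaking regimes, a subprofile $P_{-i}$ at which the deviation is strictly profitable, and this is precisely the part you defer as ``the delicate, case-laden core.'' In your setup that construction requires $s(P,c)=\max_x s(P,x)$, the tie-break to select $c$, and $s(P,b)\in\bigl(s(P,c)-\Delta,\,s(P,c)\bigr]$. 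Since agent $i$'s own ballot already gives $b$ exactly $s_{m-1}-s_{m-2}$ more than $c$ --- a quantity with no relation to $\Delta$ --- the other $n-1$ ballots must reverse this gap and land $s(P,b)$ in a window of width $\Delta$, using only increments of the form $s_k-s_{k'}$. Whether this can always be done is exactly the content of the theorem's hard direction, and you have not shown it. (Also, your primary plan of keeping every alternative other than $b,c$ \emph{strictly} below $M$ is already infeasible for Borda with $m=n=3$, where one can check that a three-way tie $s(a)=s(b)=s(c)$ is forced; this is not a ``degenerate'' score vector in your sense, so the fallback via tie-breaking is needed far more often than you suggest.)

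The paper avoids this tuning problem by choosing the roles differently: the manipulator has true preference $c,b,a,\ldots$, the observed outcome is $a$, and the deviation swaps the top two, boosting $b$. The pair that must tie is then $\{a,b\}$, neither of which is touched asymmetrically in a way that depends on the score vector: three agents rank $(a,b)$ in positions $(m,m-2)$, $(m-2,m-1)$ and $(m-1,m)$ respectively, so their contributions to $a$ and to $b$ are both $s_m+s_{m-1}+s_{m-2}$, and the remaining agents are split into two blocks that are mirror images in $a$ and $b$. This yields $s(P,a)=s(P,b)$ \emph{exactly}, for every score vector, at the cost of a parity split on $n$ (and a sub-case in the even-$n$ construction where $f(P)$ may be $c$ and a different agent deviates). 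If you want to salvage your variant, you either need to carry out the window-hitting construction in full generality --- for $m=3$ a cyclic (Latin-square) subprofile making all three scores equal works, but extending it to all $m$, all $n\ge 3$ and both regimes requires its own case analysis --- or switch to a configuration where the critical pair can be tied by symmetry, as the paper does. As written, the proposal is a correct reduction plus an unproved existence claim, so it does not yet prove the theorem.
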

\begin{proof} See Appendix \ref{proof of theo scoring k star m-1}. 
\end{proof}

We sketch the proof of Theorem \ref{k star m-1} for the case of an odd number of voters. Consider a scoring rule in which the tie-breaking has $a$ and $b$ as first and second alternatives, respectively, in the anonymous case or agent 1 in the neutral case. Let $P \in \mathcal{P}^n$ be given by the following table:
\begin{center}
$%
\begin{array}{ccc:ccc:ccc}
P_{1} & P_{2} & P_{3} & P_4 & \cdots & P_{t+3} & P_{t+4} & \cdots & P_{2t+3} \\
\hline

a & c & b & a & \cdots & a & b & \cdots & b \\
c & b & a & b & \cdots & b & a & \cdots & a \\
b & a & c & c & \cdots & c & c & \cdots & c \\
\vdots & \vdots & \vdots & \vdots & \cdots & \vdots & \vdots & \cdots & \vdots\\
\multicolumn{3}{c}{} & \multicolumn{3}{c}{$\upbracefill$} &\multicolumn{3}{c}{$\upbracefill$}\\
\multicolumn{3}{c}{} & \multicolumn{3}{c}{t \text{ agents}} &\multicolumn{3}{c}{t \text{ agents}}\\
\end{array}
$
\end{center}
Then, the alternatives $a$ and $b$ are scoring winners and the tie-breaking chooses $a$. Now, as $s_{m-1} < s_m$, agent 2 can manipulate the rule by interchanging $b$ and $c$ in his report, because the outcome of the rule then changes to $b$. Furthermore, for any profile where agent 2 declares his true preference and the outcome of the rule is $a$, if agent 2 interchanges $b$ and $c$ in his report then the outcome of the rule is at least as good as $a$. Hence, agent 2 regrets truth-telling. 

\begin{corollary}\label{cor Borda et al}
Assume $n>2.$  Then, both anonymous and neutral versions of Borda, plurality, and Dowdall rules are not regret-free truth-telling. Moreover, no efficient $A$-scoring rule is regret-free truth-telling.
\end{corollary}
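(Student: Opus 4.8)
The plan is to derive the entire corollary from Theorem \ref{k star m-1}. That theorem already establishes that, for $n>2$, no anonymous or neutral scoring rule with $k^\star=m-1$ is regret-free truth-telling. Since $k^\star=m-1$ is, by definition, equivalent to the single inequality $s_{m-1}<s_m$, the corollary reduces to a rule-by-rule verification that each score vector satisfies $s_{m-1}<s_m$; the conclusion then follows at once, the hypothesis $n>2$ being shared.

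First I would check the elementary inequality for the three named families. For the Borda rule $s_k=k$, so $s_{m-1}=m-1<m=s_m$. For the plurality rule $s_{m-1}=0<1=s_m$, which is precisely the computation already recorded in the Remark defining these rules (where $k^\star=m-1$ is noted). For the Dowdall rule $s_k=\tfrac{1}{m-k+1}$, whence $s_{m-1}=\tfrac12<1=s_m$. In each case $k^\star=m-1$, and since both the anonymous and the neutral version of every such rule is a scoring rule in the sense of Theorem \ref{k star m-1}, that theorem applies and yields the failure of regret-free truth-telling.

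For the final assertion about efficient $A$-scoring rules, I would invoke the earlier Remark stating that efficiency of an $A$-scoring rule forces $s_{m-1}<s_m$, i.e.\ $k^\star=m-1$; Theorem \ref{k star m-1} then applies directly. Candidly, there is no genuine obstacle in this corollary: all of the substantive work lives in the proof of Theorem \ref{k star m-1}, and what remains is pure bookkeeping. The only points deserving a moment's attention are (i) confirming that each displayed score vector has its second-highest entry strictly below its highest, which is immediate, and (ii) noting that the efficiency case rests on the previously stated Remark rather than on a fresh argument. It is also worth stressing that the hypothesis $n>2$ cannot be dropped here, as it is inherited verbatim from Theorem \ref{k star m-1}.
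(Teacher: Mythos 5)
Your proposal is correct and matches the paper's (implicit) intended argument: the corollary is stated without a separate proof precisely because it follows immediately from Theorem \ref{k star m-1} once one checks that Borda, plurality, and Dowdall all satisfy $s_{m-1}<s_m$ (i.e.\ $k^\star=m-1$) and recalls the Remark that efficiency of an $A$-scoring rule forces $s_{m-1}<s_m$. Your rule-by-rule verification and the appeal to that Remark are exactly the intended bookkeeping.
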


Next, we focus $k$-approval rules. The plurality rule, where $k=1$, is covered under Theorem \ref{k star m-1}. Theorem \ref{scoring k=1} presents the results for the negative plurality rules where $k=m-1$. Lastly, theorem \ref{theo approval} tackles the case of $1<k<m-1$.

\begin{theorem}\label{scoring k=1}
\;
\begin{enumerate}[(i)]

\item An $A$-negative plurality rule is regret-free truth-telling if and only if $n \geq m-1$.\footnote{For $A$-negative plurality rules, Theorem 6 in \cite{reijngoud2012voter} presents a sufficient (though not necessary) condition ($n+2 \geq 2m$) guaranteeing regret-free truth-telling. We present an independent proof that encompasses their result as well.}

\item Any $N$-negative plurality rule is regret-free truth-telling.

\end{enumerate}

\end{theorem}
\begin{proof} See  Appendix \ref{proof of theo scoring k=1}. 
\end{proof}

The proof of Theorem \ref{scoring k=1} follows similar ideas to the ones that describe the proof  of Theorem \ref{theo maxmin rules}. 

\begin{theorem}\label{theo approval}
Assume $n>2$ and $1<k<m-1$. Then, 
\begin{enumerate}[(i)]

\item An anonymous $k$-approval rule  is  regret-free truth-telling if and only if $kn=m(n-1)+1.$%
\footnote{
Theorem \ref{theo approval} (i) contradicts Theorem 2 in \cite{endriss2016strategic} which states that there is no anonymous approval rule that satisfies regret-free truth-telling. The proof in \cite{endriss2016strategic} assumes that there is a preference profile  
in which the outcome chosen by the rule does not receive maximal score by some agent. 
However, this assumption cannot be met in the case where  $kn>m(n-1).$}

\item A neutral $k$-approval rule is  regret-free truth-telling if and only if $kn>m(n-1).$
\end{enumerate}
\end{theorem}

\begin{proof} See  Appendix \ref{proof of theo approval}.
\end{proof}

We have presented Theorem \ref{theo approval} here to provide a comprehensive and transparent description of the results concerning $k$-approval voting. As we will see in Appendix \ref{proof of theo approval}, the proof of this theorem follows from Theorems \ref{scoring positive} and \ref{scoring negative}, which will be introduced later. The main ideas of these proofs, which also clarify the proof of Theorem \ref{theo approval}, will be deferred until their statements are presented.

To study the remaining scoring rules, we need to identify the highest position where the score is not maximal. Given scores  $s_{1}\leq s_{2}\leq \ldots \leq s_{m},$ let  $k^{\star}$ be such that  $s_{1}\leq s_{2}\leq s_{k^{\star }}<s_{k^{\star }+1}=\ldots=s_{m}.$ Observe that, by definition, $k^\star \in \{1, 2, \ldots, m-1\}.$

From now on, we assume that $k^\star$ is such that $1<k^\star <m-1.$\footnote{Case $k^\star=1$ corresponds to negative plurality (that can be defined by simply asking $s_1 <s_2=\ldots=s_m$, without the need of $s_1=0$ and $s_m=1$) and is analyzed in Theorem \ref{scoring k=1}. The case $k^\star=m-1$ is analyzed in Theorem \ref{k star m-1}.} Next, we present some results for scoring rules by means of two  complementary theorems, one  of which can be considered as positive and the other one as negative. 
Theorem \ref{scoring positive}  focuses on the case  $k^{\star }n<m,$ which encompasses the class of scoring rules where, in any preference profile, there is always an alternative that gets maximal score.  This positive result gives a necessary and sufficient condition for an $A$-scoring rule to be regret-free truth-telling and also states that any $N$-scoring rule is regret-free truth-telling.

\begin{theorem}\label{scoring positive}
Assume that $n>2$ and $k^\star n<m.$ Then,
\begin{enumerate}[(i)]

\item An $A$-scoring rule is  regret-free truth-telling if and only if $k^\star n=m-1.$

\item  Any $N$-scoring rule is regret-free truth-telling.
\end{enumerate}
\end{theorem}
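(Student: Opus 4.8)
The plan is to first reduce both rules to a transparent ``tie-break over a common top region'' form, and then mirror the strategy used for maxmin rules in Theorem \ref{theo maxmin rules}. For each agent $i$ write $T_i=\{t_{k^\star+1}(P_i),\ldots,t_m(P_i)\}$ for the set of the $m-k^\star$ alternatives to which $P_i$ assigns the maximal score $s_m$. An alternative attains the global maximum score $n\,s_m$ exactly when every agent places it in its top region, and since $k^\star n<m$ the $n$ bottom blocks of size $k^\star$ cannot cover $X$, so $\bigcap_i T_i\neq\emptyset$; hence I would show $\mathcal S(P)=\{x\in X:mp(x,P)\ge k^\star+1\}=\bigcap_i T_i$. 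Thus $f(P)=\max_{\succ}\bigcap_i T_i$ in the anonymous case and $f(P)=\max_{P_i}\bigcap_i T_i$ in the neutral case, and $|\mathcal S(P)|\ge m-nk^\star$ for every profile. This already isolates the hypothesis: $|\mathcal S(P)|\ge 2$ always when $k^\star n\le m-2$, whereas $k^\star n=m-1$ is exactly the threshold at which the winner set can be a singleton.

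The engine of the positive results is that a profitable manipulation must push $f(P)$ out of the manipulator's top region. Fix $P$, $i$, $P_i'$ with $w:=f(P_i',P_{-i})\,P_i\,c$, where $c:=f(P)$, and set $B:=\bigcap_{j\neq i}T_j$. Since $w\in T_i'\cap B$, I would first rule out $w\in T_i'\setminus T_i$: any such $w$ sits in one of the bottom $k^\star$ positions of $P_i$, hence below $c\in T_i$, contradicting $w P_i c$. Therefore $w\in T_i\cap B$ already belongs to the truthful winner set; as $c$ is its tie-break maximum and $w\neq c$, forcing $w$ to win requires $c\notin T_i'$. This demotion fact makes the punishing worlds below effective. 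For the ``if'' direction of (i), given a profitable $P_i'$ I take the subprofile with common top region $B^\star=\{c\}\cup(X\setminus T_i)$, i.e.\ $c$ together with the manipulator's own bottom $k^\star$ alternatives. Because $k^\star n=m-1$, $|X\setminus B^\star|=m-k^\star-1=(n-1)k^\star$, so $X\setminus B^\star$ partitions into $n-1$ blocks of size $k^\star$; assigning block $j$ as agent $j$'s bottom region realizes $\bigcap_{j\neq i}T^\star_j=B^\star$. Then $T_i\cap B^\star=\{c\}$, so truth still yields $c$, while $c\notin T_i'$ gives $T_i'\cap B^\star=T_i'\cap(X\setminus T_i)$, a nonempty set of alternatives all strictly worse than $c$ for $P_i$; the manipulation is strictly worse, so truth-telling is not regretted.

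For the ``only if'' direction I would exhibit, when $k^\star n\le m-2$, a witnessing profile: with $\succ$-top $a$ and $\succ$-bottom $z$, let agent $1$ rank $a$ in position $k^\star+1$, $z$ in position $k^\star$, and the remaining top-region alternatives above $a$, with truthful outcome $a$ (consistency here means $a$ lies in every agent's top region). The manipulation interchanging $a$ and $z$ always yields a winner in $T_1\setminus\{a\}$: the new winner set contains some $d\neq a$ since $|\mathcal S|\ge 2$, $z$ can never be its $\succ$-maximum because $d\succ z$, and every element of $T_1\setminus\{a\}$ is strictly preferred to $a$. Thus the deviation is strictly better in every consistent world and truth-telling is regretted.

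For part (ii) I would split on whether the deviator is the tie-break agent $i_0$. If $i=i_0$, the truthful outcome $\max_{P_{i_0}}(T_{i_0}\cap B)$ is already $i_0$'s most-preferred feasible alternative, and the deviation lemma shows no profitable $P_{i_0}'$ exists. If $i\neq i_0$, I exploit the freedom to choose $P^\star_{i_0}$, which doubles as the tie-break: pick a realizable common top region $B^\star$ with $m-(n-1)k^\star\le|B^\star|\le m-k^\star$ containing $c$ and the ``added'' alternatives $T_i'\setminus T_i$, and let $P^\star_{i_0}$ rank those added alternatives (all worse than $c$ for $P_i$) at the very top, then $c$, then the rest. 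Consistency holds because the added alternatives are absent from $T_i$, so $c$ is the $P^\star_{i_0}$-maximum of $T_i\cap B^\star$, while the deviation selects one of the added (strictly worse) alternatives. The main obstacle I expect is bookkeeping rather than ideas: making the intersection of the $n-1$ top regions come out \emph{exactly} equal to $B^\star$ while simultaneously respecting the prescribed order of the tie-breaker $i_0$, and, in the anonymous case, verifying that the block partition above is available precisely when $k^\star n=m-1$ and fails otherwise—this is where the arithmetic hypothesis does the real work.
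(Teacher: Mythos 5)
Your proposal is correct and follows essentially the same route as the paper's proof: the identification of the winner set with the common top region $\bigcap_i T_i$, the demotion step (any profitable deviation must push $f(P)$ out of the manipulator's top region), the punishing subprofile whose common top region is $\{f(P)\}\cup(X\setminus T_i)$, and the $a$/$z$-interchange witness for the only-if direction all coincide with the paper's constructions. The only slip is cosmetic: in the only-if direction the second element $d$ of the post-deviation winner set must be taken different from $z$ (not from $a$) for the tie-breaking argument to go through.
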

\begin{proof}
See  Appendix \ref{proof theo scoring positive}. 
\end{proof}

The main idea of the proof of Theorem \ref{scoring positive} is the following. Assume that $n>2$ and $k^\star n<m.$ First, we analyze $A$-scoring rules. To see that regret-free truth-telling implies $k^\star n=m-1$, consider a scoring rule in which the tie-breaking has $a$ and $b$ as first and second alternatives, respectively, and $z$ as the last one.  Assume $k^\star n<m-1$. Then, for any profile of preferences there are at least two alternatives above position $k^\star$ in the preference of each agent.  So, there are at least two score winners (with score equal to $n \cdot s_m$). Let $P \in \mathcal{P}^n$ be such that  $P_1: b, \ldots, a,z,\ldots$ where $z$ is in the $k^\star$-th position and $P_j:a,b,\ldots$ for each $j \in N \setminus \{1\}$. Now, the argument  to show that agent 1 regrets truth-telling is similar to the one used for Theorem \ref{theo maxmin rules} part (i).

Next, we argue that  $A$-scoring rules such that $k^\star n=m-1$ and $N$-scoring rules are regret-free truth-telling. 
In each case we prove that in any  
profitable misrepresentation of preferences the outcome of the rule under truth-telling, $f(P)$ (that must be above the $k^\star$-th position in the true preference), is moved to the $k^\star$-th position or below. Therefore, in the misrepresentation there is  an alternative  $x$, that is in the $k^\star$-th position  or below in the true preference of the agent, that is lifted to a position  greater than $k^\star$.

 For an $A$-scoring rule such that $k^\star n =m-1$, consider a  profitable manipulation for an agent and a subprofile for the other agents such that $x$ and $f(P)$ are the first and second alternatives for each agent, respectively, and for each  alternative different from $f(P)$ and above  the $k^\star$-th position in the true preference of the agent, there is an agent that has that alternative in the $k^\star$-th position or below. By the definition of the subprofile, $f(P)$ is the only alternative whose score is $n \cdot s_m$ and, therefore, it is the outcome under the true preference and that subprofile. Furthermore, as $f(P)$ is in position $k^\star$ or below in the misrepresentation, by the definition of the subprofile the outcome under the misrepresentation and the subprofile must necessarily  be an alternative  that is in the $k^\star$-th position or below in the true preference. Thus, the agent does not regret truth-telling.   
 
 For a $N$-scoring rule, first we prove that the tie-breaking agent does not manipulate. Then, consider a  profitable manipulation for an agent and a subprofile for the other agents  such that $x$ and $f(P)$ are the first and second alternatives for each agent, respectively. By the definition of the subprofile and the tie-breaking agent, $f(P)$ is the outcome under the true preference and that subprofile (because in such profile $f(P)$ has maximum score $n \cdot s_m$ and $x$ does not and the tie-breaking agent prefers $f(P)$ to any alternative different from $x$). Furthermore, the outcome under the misrepresentation and the subprofile is $x$ since it obtains score $n \cdot s_m$ and it is the first alternative for the tie-breaking agent.

  Theorem \ref{scoring negative} below  gives a negative result for the case  $k^{\star}n\geq m$ when  $s_{k^\star-1}=s_{k^\star}$.  When $s_{k^\star-1} \neq s_{k^\star},$ we believe that the existence of regret-free truth-telling rules in the class  depends sensibly on the specific scores defining each rule.

\begin{theorem}\label{scoring negative} 
Assume that $n>2$ and $k^\star n \geq m.$ Then, there is no regret-free truth-telling scoring rule (neither anonymous nor neutral) with  $s_{k^\star-1}=s_{k^\star}.$ 
\end{theorem}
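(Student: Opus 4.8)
The plan is to prove the statement by exhibiting, for an arbitrary scoring rule $f$ of the stated form, a single instance of regret; that is, by negating the definition of regret-free truth-telling. Concretely, I would produce an agent $i$, a profile $P$ with $f(P)=a$, and a report $P_i'$ that is \emph{profitable} at $P$, i.e. $f(P_i',P_{-i})\mathrel{P_i}a$, and moreover \emph{safe}: for every $P_{-i}^\star$ with $f(P_i,P_{-i}^\star)=a$ one has $f(P_i',P_{-i}^\star)\mathrel{R_i}a$. Together these say truth-telling is regretted through $P_i'$, so $f$ is not regret-free. The construction runs in parallel for the two tie-breakings: in the anonymous case $a$ is the $\succ$-maximal winner, in the neutral case it is the $P_{i_0}$-maximal winner of the tie-break agent $i_0$, where I take the deviating agent $i\neq i_0$ (possible since $n>2$). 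Throughout I use $1<k^\star<m-1$, which guarantees that positions $k^\star-1$, $k^\star$, $k^\star+1$ all exist.

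The deviation is the clean part and uses neither hypothesis. I would have agent $i$ rank the target $b$ in position $k^\star$, set $c=t_{k^\star+1}(P_i)$, and let $P_i'$ interchange $b$ and $c$. Since $s_{k^\star}<s_{k^\star+1}=s_m$, this raises the score of $b$ by $\delta:=s_m-s_{k^\star}>0$, lowers that of $c$ by $\delta$, and leaves every other score — in particular the score of $a$ — unchanged. Safety then follows from a short comparison: fix $P_{-i}^\star$ with $f(P_i,P_{-i}^\star)=a$, so $a$ is a winner and the tie-break-maximal one. As $a$'s score is untouched, after the deviation $a$ still attains the old maximal score $M$. If the boosted score $s(\cdot,b)+\delta$ exceeds $M$, then $b$ is the unique new winner and the outcome is $b\mathrel{P_i}a$; otherwise the maximum stays $M$, the winner set only loses $c$ and possibly gains $b$, and its tie-break-maximal element is $a$ or $b$. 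In every case the new outcome lies in $\{a,b\}$, hence is weakly preferred to $a$, and no alternative ranked below $a$ by agent $i$ can surface — exactly safety.

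The substantive step, and where both hypotheses are indispensable, is building the profile $P$ for profitability: one at which $a$ and $b$ are tied at the top score (so the tie-break returns $a$) while every other alternative scores strictly less (so the $\delta$-boost of $b$ strictly breaks the tie). Here I would place $a$ in position $k^\star-1$ and $b$ in position $k^\star$ of $P_i$: by flatness $s_{k^\star-1}=s_{k^\star}$, agent $i$ contributes the \emph{same} score to $a$ and to $b$, which is what makes a top tie between them sustainable. Without flatness these contributions differ and the tie must be repaired by the other agents; I expect precisely this asymmetry to be the obstruction that can permit regret-free rules when $s_{k^\star-1}<s_{k^\star}$. It then remains to suppress everyone else. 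Agent $i$ is forced to put $m-k^\star$ alternatives above position $k^\star$, each receiving $s_m$ from $i$; these are the dangerous competitors. The counting hypothesis enters as $k^\star n\ge m\iff(n-1)k^\star\ge m-k^\star$: the $n-1$ other agents have together $(n-1)k^\star$ positions at or below $k^\star$, enough to route each of these $m-k^\star$ alternatives into sufficiently many low positions of the other ballots to push its total strictly below the common score of $a$ and $b$, while keeping $a$ and $b$ tied at the top.

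The main obstacle is thus this final bookkeeping: writing explicit ballots for the $n-1$ non-deviating agents that simultaneously (i) equalize the totals of $a$ and $b$ at the maximum, (ii) strictly suppress each of the $m-k^\star$ forced-high alternatives and the remaining low ones, and (iii) place $a$ atop the tie-break (as $\max_\succ$ of the winners, or as $\max_{P_{i_0}}$ in the neutral case, with $P_{i_0}$ chosen to rank $a$ above $b$). Point (ii) is the delicate one, needing the suppression slots to be counted exactly; the equality $k^\star n=m-1$ of Theorem \ref{scoring positive} is precisely the threshold below which one can instead force a \emph{unique} top scorer and regain regret-freeness, so the argument must genuinely use the slack of $k^\star n\ge m$. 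Once $P$ is constructed, combining profitability with the safety above gives regret and finishes the proof; the neutral case differs only in reading ``$\max_\succ$'' as ``$\max_{P_{i_0}}$'' throughout.
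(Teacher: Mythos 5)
Your construction is, in all essentials, the paper's: the same deviation (swap $b$ from position $k^\star$ with $t_{k^\star+1}(P_i)$), the same use of flatness to put $a$ at position $k^\star-1$ and $b$ at position $k^\star$ so that agent $i$ contributes equally to both, the same counting $(n-1)k^\star\ge m-k^\star$ to push every alternative that agent $i$ ranks above $b$ into a low slot of some other ballot, and the same safety argument. Two points need repair. First, your target (ii) --- making every other alternative score \emph{strictly} less than the common score $(n-1)s_m+s_{k^\star}$ of $a$ and $b$ --- is not achievable from $k^\star n\ge m$ alone: when $k^\star n=m$ the count gives each forced-high alternative exactly one slot at or below position $k^\star$, hence a score at most, but possibly equal to, $(n-1)s_m+s_{k^\star}$ (and placing it lower does not help when the low scores are flat, as in $k$-approval). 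The paper settles for this weak inequality and instead takes $a$ to be the \emph{global} maximum of the tie-break (the top of $\succ$, or the top alternative of the tie-break agent's ballot --- not merely ranked above $b$), so that $f(P)$ is either $a$ or some alternative ranked below $b$ by agent $i$; profitability then follows because after the swap $b$ alone reaches $ns_m$ while everything agent $i$ prefers to $b$ stays at or below $(n-1)s_m+s_{k^\star}$. Your step (iii) already gestures at this fix, but as written your plan stalls at (ii), which you yourself flag as the delicate step. Second, you restrict to $k^\star<m-1$ throughout, yet the theorem covers $k^\star=m-1$ (e.g., plurality satisfies $s_{k^\star-1}=s_{k^\star}$ and $k^\star n\ge m$); your construction needs $s_{m-1}=s_m$ so that the agents $j\ne i$ can award $b$ the maximal score from position $m-1$ while topping their ballots with $a$, so that case must be handled separately --- the paper disposes of it by invoking Theorem \ref{k star m-1}.
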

\begin{proof}
See  Appendix \ref{proof theo scoring negative}. 
\end{proof}

The main idea of the proof of Theorem \ref{scoring negative} is the following. Assume that $n>2$,  $k^\star n \geq m$, and $s_{k^\star-1}=s_{k^\star}$.   Consider a scoring rule in which the tie-breaking has $a$ and $b$ as first and second alternatives, respectively, in the case of an anonymous rule, or in which agent 1 breaks ties when the rule is neutral. As $\mathit{k^{\star }}n\geq m$, it follows that  $k^{\star }(n-1)\geq m-k^{\star }$, so we can consider profile $P \in \mathcal{P}^n$  such that  $P_2:  \ldots, b, a,\ldots$ where $b$ is in the $k^\star$-th position and $P_j:a,b,\ldots$ for each $j \in N \setminus \{2\}$ and for each alternative above  $b$ in the preference of agent  2 there is another agent that has that alternative below the $k^{\star }$-th position. In this profile, as $s_{k^\star-1}=s_{k^\star}$, alternative $a$ has a score greater than or equal to the score of any alternative above $a$ for agent 2. Then, by the tie-breaking, the rule selects something worse than $b$ for agent 2. Then, agent 2 can manipulate the rule by misrepresenting his preferences by interchanging the alternative in the $(k^\star +1)$-th position and  $b$, because now $b$ is a scoring winner, $a$ is not, and thus the outcome is $b$. Furthermore, for any profile where agent 2 declares his true preference and the outcome of the rule is $f(P)$ (that is below $b$ for agent 2), if agent 2 misrepresents interchanging as said previously, then the outcome of the rule is at least as good as $f(P)$. Hence, agent 2 regrets truth-telling.

The previous theorem extends the result of Theorem 3 in  \cite{reijngoud2012voter} to the case $n=3$ and also to the neutral scoring rules (their result only applies when $n>3$ in the anonymous case). Our proof is independent of theirs.

\subsection{Condorcet consistent rules}
\label{section Condorcet}

Let $P\in \mathcal{P}^n$ and consider two alternatives $a,b\in X.$  Denote by $C_{P}(a,b)$ the number of agents that prefer $a$ to $b$ according to $P,$ i.e., $C_{P}(a,b)=|\{i \in N : aP_{i}b\}|.$ An alternative $a \in X$ is a \emph{Condorcet winner} according to $P$ if for each alternative $b\in X \setminus \{a\},$
\begin{equation}\label{condorcetwinner}
C_{P}(a,b)>C_{P}(b,a).
\end{equation}
Notice that a  Condorcet winner may not always exist but when it does, it is unique. If \eqref{condorcetwinner} holds with weak inequality for each alternative $b\in X \setminus \{a\},$ then $a$ is called a \emph{weak Condorcet winner.}  

\begin{definition}
A rule $f:\mathcal{P}^{n}\longrightarrow X$ is \textbf{Condorcet consistent} if it chooses the Condorcet winner whenever it exists.  
\end{definition}

Next, we introduce a mild monotonicity condition 
which says that if an alternative is below the outcome for an agent and he changes his preferences modifying only the ordering of alternatives above the outcome, then such alternative continues not to be chosen. Formally, 
\begin{definition}
Let $P_{i}, P_i' \in \mathcal{P}$ and let $a \in X$ be such that $a=t_{k}(P_{i}).$ We say that $P_{i}^{\prime }$ is a \textbf{monotonic transformation of $\boldsymbol{P_{i}$ with respect to $a}$} if $t_{k^{\prime
}}(P_{i})=t_{k^{\prime }}(P_{i}^{\prime })$ for each $k^{\prime }\leq k.$ A rule $f:\mathcal{P}^n \longrightarrow X$ is \textbf{monotone} if, for each $P\in
\mathcal{P}^{n},$ each $i \in N,$ and each $b \in X$ such that $f(P)P_{i}b,$
\begin{equation*}
f(P_{i}^{\prime },P_{-i})\neq b
\end{equation*}%
for each $P_{i}^{\prime }$ that is a monotonic transformation of $P_{i}$ with
respect to $f(P)$.
\end{definition}
Notice that when $P'_i$ is a monotonic transformation of $P_i$ with respect to $a$, then $C_P(x,a)=C_{(P_i',P_{-i})}(x,a)$ for each $x \in X \setminus \{a\}.$ Thus, our monotonicity condition is fully compatible with Condorcet consistency.

Furthermore, our notion of monotonicity is weaker than the well-known Maskin monotonicity. Remember that $P_{i}^{\prime } \in \mathcal{P}$ is a \emph{Maskin monotonic transformation} of $P_{i} \in \mathcal{P}$ with
respect to $a \in X$ if $xP_{i}^{\prime }a$ implies  $xP_{i}a.$ Then,
$f:\mathcal{P}^n \longrightarrow X$ is \emph{Maskin monotonic} if, for each $P\in
\mathcal{P}^{n}$,
$f(P_{i}^{\prime },P_{-i})=f(P)$
for each $P_{i}^{\prime } \in \mathcal{P}$ that is a Maskin monotonic transformation of $P_{i}$ with
respect to $f(P).$ %
It is clear that a monotonic transformation of $P_{i}$ (according
to our definition) is a Maskin monotonic transformation of $%
P_{i}.$ 

Besides the intrinsic appeal of our monotonicity condition, this weakening of Maskin's property is necessary since Maskin's monotonicity is incompatible with Condorcet consistency. 
Our mild monotonicity requirement on Condorcet consistent rules leads to the following negative result concerning regret-free truth-telling.

\begin{theorem}\label{theo Condorcet consistent}
Assume $n\notin\{2,4\},$ or $n=4$ and $m>3.$ Then, there is no Condorcet consistent, monotone
and regret-free truth-telling rule.
\end{theorem}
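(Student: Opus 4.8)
The plan is to reduce the whole statement to a single lemma about monotone rules and then feed that lemma a Condorcet cycle.

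First I would prove the following reduction, which is the one place where the three hypotheses interact. \emph{Claim: if $f$ is monotone and there are a profile $P$, an agent $i$, and a report $P_i'$ that is a monotonic transformation of $P_i$ with respect to $f(P)$ and satisfies $f(P_i',P_{-i}) P_i f(P)$, then $f$ is not regret-free truth-telling.} To see this, write $a=f(P)$ and take any $Q_{-i}$ with $f(P_i,Q_{-i})=a$, i.e.\ any subprofile consistent with observing $a$. Since $P_i'$ is a monotonic transformation of $P_i$ with respect to $a=f(P_i,Q_{-i})$, applying monotonicity at $(P_i,Q_{-i})$ gives $f(P_i',Q_{-i})\neq b$ for every $b$ with $a P_i b$; hence $f(P_i',Q_{-i}) R_i a$, so truth-telling never strictly beats $P_i'$ at a consistent subprofile. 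As $P_i'$ is profitable at $P$, the definition of regret-free truth-telling fails at $(i,P,P_i')$. Thus it suffices to exhibit a profile with no Condorcet winner at which, \emph{whatever} value $f(P)$ takes, some agent can manufacture a strictly preferred Condorcet winner by permuting only the alternatives he ranks above $f(P)$: Condorcet consistency then pins the deviation outcome to that winner, and the lemma applies.

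For odd $n$ I would build such a profile explicitly. Place the spare alternatives $d_1,\dots,d_{m-3}$ at the bottom in a common order, let agents $1,2,3$ carry the cycle with top-three orders $a\,b\,c$, $b\,c\,a$, $c\,a\,b$ on $\{a,b,c\}$ (with the $d$'s beneath), and fill the remaining $n-3$ seats (an even number) with mutually reversed pairs, which cancel in every pairwise tally. The resulting majority tournament is exactly that of agents $1,2,3$: all three margins on $\{a,b,c\}$ equal $1$, each of $a,b,c$ beats every $d_j$, and no Condorcet winner exists. If $f(P)\in\{a,b,c\}$, the agent among $1,2,3$ ranking it last within $\{a,b,c\}$ swaps the two alternatives he ranks above it; because each margin is $1$, this single swap breaks the cycle and elects his second-ranked alternative of the three, which he strictly prefers. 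If $f(P)=d_j$, agent $1$ reorders $a,b,c$ (all above $d_j$) to elect $b$, again strictly preferred. Each deviation permutes only alternatives above $f(P)$, so the reduction lemma finishes the odd case for all $m$.

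The hard part will be even $n$ together with the exceptional configurations. For even $n$ every pairwise margin has the parity of $n$, hence is even, so one agent's reordering can upgrade a tie to a strict win but can never reverse a strict win; and on three alternatives the only Condorcet-winner-free tournament is the strict $3$-cycle, whose even margins are unflippable. A direct check shows the three requirements ``overturn $a$'', ``overturn $b$'', ``overturn $c$'' cannot hold at one profile, since each forces the opposite pair to be tied, so all three together would force every comparison to be simultaneously tied and decisive. Hence no single profile works for even $n$ and $m=3$, and the construction must instead spend the extra alternatives---available exactly when $m>3$, which is why $n=4$ is admitted only then---through profiles carrying a \emph{weak} Condorcet winner that ties a single alternative (so the manipulator, now having several alternatives above the outcome, can push it to a strict Condorcet winner), and/or run over a family of rotated no-winner profiles, using monotonicity to propagate the value of $f$ until some profile is forced into an overturnable outcome. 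The parity of $n$ governs whether this family closes up, and the two irreducible cases are precisely $n=2$ (degenerate, since with two agents a Condorcet winner is merely a common top, and which is instead characterized in the two-agent/three-alternative analysis) and $n=4$ with $m=3$, both excluded from the statement. I expect this outcome-by-outcome bookkeeping and the monotonicity propagation to be the main obstacle, whereas the reduction lemma and the odd-$n$ cycle are the conceptual core.
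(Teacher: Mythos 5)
Your reduction lemma is correct and is exactly how the paper uses monotonicity (its displays corresponding to the ``no regret'' step are precisely $f(P_i',P_{-i}^\star)R_i f(P_i,P_{-i}^\star)$ for every consistent $P_{-i}^\star$), and your odd-$n$ construction with three cyclic agents plus cancelling reversed pairs does work: the unit margins let a single swap above $f(P)$ produce a strictly preferred Condorcet winner, so the lemma applies. But the even case is a genuine gap, not merely ``bookkeeping.'' The theorem covers every even $n\geq 6$ as well as $n=4,m>3$, and for those values your proposal offers only a plan (weak Condorcet winners, ``rotated families,'' ``monotonicity propagation'') with no concrete profile and no argument that the propagation closes up. As written, the proof establishes the statement only for odd $n$.

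The missing idea is that the profitable deviation need not elect a Condorcet winner at all; it only needs to change the outcome, after which monotonicity forces the new outcome to be strictly preferred by the deviator. The paper exploits this with a pivotal-agent argument that is parity-free: write $n=3t+2s$ with $t\geq 1$, $s\geq 0$ (possible for all $n\notin\{1,2,4\}$), take the cyclic profile with blocs of sizes $t$, $t+s$, $t+s$ topped by $a,b,c$ / $b,c,a$ / $c,a,b$, and let $x=f(P)$. Some bloc ranks $x$ third or lower; switch that entire bloc, one agent at a time, to the order with its top two alternatives transposed (e.g.\ $b,c,a\mapsto c,b,a$), which makes $c$ a Condorcet winner once the whole bloc has moved, hence changes the outcome away from $x$. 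Along this chain there is a pivotal agent $j^\star$: the outcome is $x$ before his switch and something else after. His switch is a monotonic transformation with respect to $x$, so monotonicity rules out every alternative he ranks below $x$, forcing the post-switch outcome to be strictly $P_{j^\star}$-preferred to $x$; your reduction lemma then finishes. This works for margins of any size and any parity, and a separate four-alternative cycle handles $n=4$, $m>3$ the same way. Replacing your even-$n$ sketch with this argument would complete the proof.
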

\begin{proof}
See  Appendix \ref{proof theo Condorcet consistent}. 
\end{proof}

 The main ideas behind the  proof of Theorem \ref{theo Condorcet consistent}  can be illustrated in the three-agent case, sketched next. Consider profile $P \in \mathcal{P}^3$ given by the following table:
\begin{center}
$
\begin{array}{cccc}
P_1 & P_2 & P_3 \\
\hline
a & b & c  \\
b & c & a  \\
c& a & b  \\
\vdots & \vdots & \vdots \\
\end{array}
$
\end{center}
In this profile there is no Condorcet winner. Let $f(P)$ be the chosen alternative under profile $P$. We can assume, w.l.o.g., that $f(P)$ is worse than $b$ for agent 1. Now, agent $1$ can manipulate the rule by interchanging $a$ and $b$, because the outcome of the rule then changes to $b$. Furthermore, for any profile where agent 1 declares his true preference and the outcome of the rule is $f(P)$ (that is bellow $b$ for agent 1), if agent 1 misrepresents his preferences by interchanging $a$ and $b$, then by monotonicity the outcome of the rule is at least as good as $f(P)$. Hence, agent 1 regrets truth-telling.

\begin{remark}
When $n=4$ and $m=3,$ the previous impossibility result does not apply. Let $X=\{a,b,c\}$. Given $P\in \mathcal{P}^n$ and $x \in X$, let $bottom(P,x)$ be the number of agents that have $x$ in the bottom of their preferences. Now, consider a rule $f: \mathcal{P}^4 \longrightarrow X$ that, for each $P\in \mathcal{P}^4$, selects the Condorcet winner when it exists and, otherwise, the tie-breaking   $a \succ b \succ c$ is used to choose an alternative among those  that minimize $bottom(P,\cdot)$ and are preferred by at least two agents to any other alternative that minimizes $bottom(P,\cdot)$. 
This rule is monotone  since, given $P \in \mathcal{P}^4$ and $i \in N,$ when there are three alternatives a monotonic transformation of $P_i$ with respect to $f(P)$  is different from $P_i$ only when $t_1(P_i)=f(P)$, and in this case there is no $x \in X$ such that $f(P)P_ix.$ Then, monotonicity is trivially satisfied.  To see that this rule is also regret-free truth-telling, consider profile $P \in \mathcal{P}^4$  such that, w.l.o.g., $P_1:x,y,z.$ If $f(P)=x,$ agent $1$ does not manipulate $f.$ If $f(P)=z,$ then $f(P_1',P_{-1})=t_1(P_1)$ for each $P_1' \in \mathcal{P}$ by definition of the rule, so agent $1$ cannot manipulate either. If $f(P)=y$ and agent $1$ manipulates $f$ via $P_1',$ then $f(P_1',P_{-1})=x$ and, by definition of the rule, $t_1(P_1')=y.$ Consider $P_{-1}^\star \in \mathcal{P}^{3}$ such that $P_{2}^\star: y,z,x,$ $P_3^\star=P_2^\star,$ and $P_4^\star:z,y,x.$ Then, $f(P_1, P_{-1}^\star)=y$ and $f(P_1', P_{-1}^\star)=z.$ Therefore, agent 1 does not regret truth-telling. 
\end{remark}

\medskip

Six of the most important Condorcet consistent rules are Simpson, Copeland, Young, Dodgson, Fishburn and Black rules \citep[see][]{fishburn1977condorcet}. Each one of these rules uses pairwise comparison of alternatives in a specific way in order to get a \emph{winner} alternative for each profile of preferences. Their definitions are as follows. Given  $P \in \mathcal{P}^n,$

\begin{enumerate}[(i)]

\item the \emph{Simpson score} of  alternative $a \in X$ is the minimum number $C_P(a,b)$ for $b \neq a,$
\begin{equation*}
Simpson(P,a)=\min_{b\neq a}C_P(a,b)
\end{equation*} and a \emph{Simpson winner} is an alternative with highest such score. 

\item the \emph{Copeland score} of  alternative $a \in X$ is the  number of pairwise victories minus the number of pairwise defeats against all other alternatives
\begin{equation*}
Copeland(P,a)=|\{b : C_P(a,b)>C_P(b,a)\}|-|\{b : C_P(b,a)>C_P(a,b)\}|
\end{equation*}
 and a \emph{Copeland winner} is an alternative with highest such score.
 
 \item the \emph{Young score} of  alternative $a \in X$ is the largest  cardinality of a subset of voters for which alternative $a$ is a weak Condorcet winner
\begin{equation*}
Young(P,a)=\max_{N' \subseteq N} \left\{ |N'| : \{|i \in N' : a P_i b\}| \geq \frac{|N'|}{2} \text{ for all }b \in X\setminus \{a\}\right\}
\end{equation*}
and a \emph{Young winner} is an alternative with highest such score.

\item the \emph{Dodgson score} of  alternative $a \in X,$ $Dodgson(P,a),$ is the fewest inversions\footnote{Let $P_i,P_i' \in \mathcal{P}$ and let $x,y \in X.$ $P_i'$ is an \emph{inversion of $P_i$ with respect $x$ and $y$} if  $xP_iy$ implies  $yP_i'x.$} in the preferences  in $P$ that will make $a$ tie or beat every other alternative in $X$ on the basis of simple majority, and a \emph{Dodgson winner} is an alternative with lowest such score.

\item the \emph{Fishburn partial order} on $X,$ $F_P,$  is defined as follows: $a F_P b$ if and only if for each $x \in X,$ $C_P(x,a) > C_P(a,x)$ implies $C_P(x,b) > C_P(b,x)$ and there is $w \in X$ such that $C_P(w,b)>C_P(b,w)$ and $C_P(a,w)\geq C_P(w,a).$ A \emph{Fishburn winner} is a maximal alternative for $F_P.$

\item a  \emph{Black winner} is a Condorcet winner whenever it exists and, otherwise, a \emph{Borda winner}.\footnote{A \emph{Borda winner} is an alternative with highest Borda score.}

\end{enumerate}

 An \emph{anonymous (neutral) Simpson, (Copeland, Young, Dodgson,  Fishburn, Black) rule}    always chooses a Simpson, (Copeland, Young, Dodgson, Fishburn, Black) winner and uses a fixed order (agent)   as tie-breaker when there are more than one. The following result shows that the six rules are monotonic.

\begin{corollary}\label{corollary Simpson et al}
Assume  $n>2$. Then, the Simpson, Copeland, Young, Dodgson, Fishburn and Black rules are not regret-free truth-telling, regardless of whether we consider their anonymous or neutral versions. 
\end{corollary}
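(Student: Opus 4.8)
The plan is to derive the corollary from Theorem~\ref{theo Condorcet consistent}: I would show that each of the six rules is (a) Condorcet consistent and (b) monotone, so that the theorem applies. (I assume $m\geq 3$; for $m=2$ all six rules reduce to majority voting, which is strategy-proof and hence regret-free.) Granting (a) and (b), Theorem~\ref{theo Condorcet consistent} yields the conclusion whenever $n\notin\{2,4\}$ and also when $n=4$ and $m>3$; since $n>2$, the only case left uncovered is $n=4$, $m=3$, which I would settle separately at the end.

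Condorcet consistency is routine: if a Condorcet winner $w$ exists it beats every other alternative, and then each criterion singles $w$ out --- $Simpson(P,w)>n/2>Simpson(P,c)$ for $c\neq w$; $Copeland(P,w)=m-1$ is uniquely maximal; $Young(P,w)=n$ while every other alternative scores below $n$; $Dodgson(P,w)=0$ is uniquely minimal; $w$ is the unique $F_P$-maximal element; and $w$ is the Black winner by definition. The work is in monotonicity. Fix $P$, an agent $i$, $a=f(P)$, an alternative $b$ with $aP_ib$, and a monotonic transformation $P_i'$ of $P_i$ with respect to $a$; set $P'=(P_i',P_{-i})$. The one structural fact I would isolate is that passing from $P$ to $P'$ merely permutes the alternatives that $i$ ranks strictly above $a$; hence it leaves unchanged every majority comparison $C(x,y)$ except those between two alternatives both ranked above $a$ by $i$ --- in particular every comparison involving $a$ or $b$ is frozen --- and it leaves the rank of $a$ and of $b$ (indeed of every alternative weakly below $a$) unchanged in every ballot, so their Borda scores are frozen.

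From this, the Simpson, Copeland, and Young scores of $a$ and of $b$, and the Borda scores of $a$ and of $b$, are all frozen, and the Fishburn relation (a function of the majority tournament) satisfies $aF_{P'}b\iff aF_Pb$. For the scalar-score rules (Simpson, Copeland, Young, and the Borda fallback of Black) the conclusion is then immediate: because $a$ was selected at $P$ it is a maximizer there, so its score weakly exceeds that of $b$, and freezing preserves this at $P'$; if $b$ is not a maximizer at $P'$ it is not chosen, and otherwise $a$ is a maximizer too and the tie-breaker selects $a$ over $b$ --- a fixed order in the anonymous case, or a fixed agent's ranking of $a$ against $b$ in the neutral case, which stays frozen even when that agent is $i$ itself since $b$ remains below $a$ in $P_i'$. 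For Black one also checks that a Condorcet winner equal to $a$ persists under $P'$ while $b$ can never become one, so the Borda fallback is invoked only when appropriate. When $aF_Pb$ holds, the same reasoning disposes of Fishburn, since then $aF_{P'}b$ shows $b$ is not $F_{P'}$-maximal.

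I expect two places to require real care. First, Dodgson and the remaining Fishburn case: the Dodgson score is not determined by the majority tournament nor by $b$'s rank alone --- computing the minimum inversions that make $b$ a weak Condorcet winner uses the positions of the alternatives above $b$, and reshuffling those above $a$ changes the cost of promoting $b$ past them --- so one must show directly that this reshuffling cannot turn $b$ into the Dodgson winner; similarly, when $a$ does not $F_P$-dominate $b$ one must rule out that the reshuffled top alternatives destroy $a$'s $F_{P'}$-maximality while leaving $b$ maximal. Second, the boundary case $n=4$, $m=3$ lies genuinely outside Theorem~\ref{theo Condorcet consistent} --- as the preceding Remark shows, over three alternatives monotonicity is too weak to force the contradiction --- so I would handle it by explicit computation, exhibiting for each of the six rules a four-voter profile on $\{a,b,c\}$ with no Condorcet winner on which truth-telling is regretted, adapting the three-voter construction in the sketch of Theorem~\ref{theo Condorcet consistent} to the even-margin structure of four voters. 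These two points are where the difficulty concentrates; the reduction to Theorem~\ref{theo Condorcet consistent} is otherwise mechanical.
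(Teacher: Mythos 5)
Your decomposition is exactly the paper's: a lemma that all six rules are monotone, an appeal to Theorem~\ref{theo Condorcet consistent} outside the case $n=4$, $m=3$, and a direct construction for that remaining case. But the two steps you defer are where the substance of the proof lies, and the proposal does not discharge them. For $n=4$, $m=3$ concrete profiles are indispensable, and the paper supplies them: at $P_1:b,a,c$, $P_2=P_3:c,b,a$, $P_4:a,c,b$ the unique Simpson (Young, Dodgson, Fishburn) winner is $c$, agent $1$'s switch to $a,b,c$ produces $a$, and --- this is the trick that makes regret automatic --- $c=t_1(P_1)$ is agent $1$'s \emph{bottom} alternative, so at every subprofile consistent with observing $c$ under truth-telling the deviation yields a weakly better outcome; a second profile (replace $P_2$ by $c,a,b$) handles Copeland and Black. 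Until such profiles are written down and checked for each of the six rules (in both tie-breaking versions), this case is open in your argument.

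On monotonicity: your worry about Dodgson disappears under the paper's definition of an inversion, which is a swap of a single pair of alternatives in one ballot rather than an adjacent transposition; then $Dodgson(P,y)$ for $y\in\{f(P),b\}$ is a function only of the margins $C_P(\cdot,y)$, which your own frozen-comparisons observation covers, so the ``cost of promoting $b$ past the reshuffled top'' never enters. The Fishburn case in which $f(P)$ does not $F_P$-dominate $b$ is, as you say, the genuinely delicate one: freezing all comparisons involving $f(P)$ and $b$ does freeze whether $f(P)\,F_P\,b$ holds, but whether $b$ (or $f(P)$) is $F$-maximal also depends on comparisons among alternatives ranked above $f(P)$ by agent $i$, which are not frozen; neither your proposal nor the one-line argument in Lemma~\ref{lema Simpson et al} spells out why $b$ cannot thereby become the selected maximal element, so this is a step you would actually have to supply. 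In short: right architecture, but the proof is not complete until the $n=4$, $m=3$ profiles and the Dodgson/Fishburn monotonicity arguments are carried out.
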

\begin{proof}
See Appendix \ref{proof of corollary Simpson et al}. 
\end{proof}

Another interesting class of Condorcet consistent rules which are widely used in practice, for instance, by the United States Congress to vote upon a motion and its proposed amendments, is the class of successive elimination rules \citep[see Chapter 9 of][for more detail]{moulin1991axioms}. These rules, which consider an order among alternatives  and consist of sequential majority comparisons, are defined as follows.

\begin{definition} 
A rule $f:\mathcal{P}^{n}\longrightarrow X$ is a \textbf{successive elimination} rule with respect to
an order $\succ$ such that  $a_{1}\succ a_2\succ \ldots \succ  a_{m}$ if it operates in the following way. First, a majority vote decides to  eliminate $a_{1}$ or $a_{2}$, then a majority vote decides to eliminate the survivor from the first
round or $a_{3},$ and so on. The same order $\succ$ is used as tie-breaker in each pairwise comparison,  if necessary.
\end{definition}

It is clear that a successive elimination rule is Condorcet consistent but it may be not monotone, as the next example shows. 

\begin{example}(The successive elimination rule with respect to order $a\succ b \succ c \succ d$ is not monotone). Let $P \in \mathcal{P}^5$ be given by the following table:

\begin{center}
$%
\begin{array}{ccccc}
P_{1} & P_{2} & P_{3} & P_{4} & P_{5} \\ \hline
a & a & c & c & d \\
b & c & d & b & b \\
d & d & a & d & a \\
c & b & b & a & c
\end{array}
$
\end{center}
\vspace{5 pt}
Then, $f(P)=d.$ Now, let $P_1' \in \mathcal{P}$ be such that $P_{1}^{\prime }:  b,a, d,c.$ Then $P_1'$ is a monotonic transformation of $P_1$ with respect to $d$ but  $%
f(P_{1}^{\prime },P_{-1})=c$, so $f$ is not monotone. 
\end{example}

\begin{theorem}\label{theo succesive}
Assume $n >2.$ Then, no successive elimination rule is regret-free truth-telling.
\end{theorem}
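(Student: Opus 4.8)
The plan is to avoid the monotonicity route of Theorem~\ref{theo Condorcet consistent} (which does not apply here, since a successive elimination rule need not be monotone) and instead to exhibit a single profile at which truth-telling hands some agent his \emph{worst} alternative while a deviation changes the outcome. The whole argument rests on one elementary observation: if there are $i\in N$, $P\in\mathcal{P}^n$ and $P_i'\in\mathcal{P}$ with $f(P)=t_1(P_i)$ and $f(P_i',P_{-i})\neq f(P)$, then $f$ is not regret-free truth-telling. Indeed $f(P_i',P_{-i})\,P_i\,f(P)$ (because $f(P)$ is $i$'s unique least-preferred alternative), so $P_i'$ is a profitable misreport at $P$; and for \emph{every} $P_{-i}^\star$ with $f(P_i,P_{-i}^\star)=f(P)$ the truthful outcome equals $t_1(P_i)$, which is $i$'s least-preferred alternative, so $f(P_i,P_{-i}^\star)\,P_i\,f(P_i',P_{-i}^\star)$ can never hold. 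Thus no consistent subprofile makes truth-telling strictly better, and the definition is violated. This bypasses monotonicity entirely: once the outcome is the agent's bottom, \emph{any} outcome-changing report works.

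It remains to construct such a profile for an arbitrary successive elimination rule with tie-break $a_1\succ a_2\succ\cdots\succ a_m$ (with $m\geq 3$); write $a=a_1$, $b=a_2$, $c=a_3$ and let agent $1$ be the manipulator. I would choose $P$ so that the sequential comparisons run ``$a$ survives the first round, then $c$ beats $a$, then $c$ beats each of $a_4,\dots,a_m$'', forcing $f(P)=c$, while setting $P_1:a,b,a_4,\dots,a_m,c$ so that $c=t_1(P_1)$. The deviation $P_1'$ is the transposition of $a$ and $b$ in $P_1$: it reverses the first round so that $b$ survives, after which $b$ beats $c$ and every $a_j$, yielding $f(P_1',P_{-1})=b\neq c$. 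By the observation this completes the proof, and the fact that $c$ is agent $1$'s bottom makes the safety condition automatic.

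The real work is realizing the required pairwise majorities simultaneously. I need agent $1$ to be \emph{pivotal} in the first round (under truth-telling $a$ beats $b$, using the tie-break $a\succ b$ when $n$ is even; after the swap $b$ beats $a$), together with $c$ beating $a$, $b$ beating $c$, and $c$ and $b$ each beating every $a_j$. The main obstacle is that these demands are mutually incompatible for a single voter---$c\succ a$ and $b\succ c$ force $c\succ a\succ b$, which rules out a balanced first round---so unanimity is unavailable and the electorate must be partitioned into blocks whose \emph{aggregate} votes deliver each majority. Concretely I would keep agent $1$ at $a,b,a_4,\dots,a_m,c$; assign $\lceil n/2\rceil-1$ further voters ``first-round-$a$'' preferences (all but one of the form $a,b,c,a_4,\dots$, and one of the form $c,a,b,a_4,\dots$ to supply $c\succ a$); and assign the remaining $\lfloor n/2\rfloor$ voters the preference $b,c,a_4,\dots,a_m,a$, which at once gives $b\succ a$, $c\succ a$, $b\succ c$, $c\succ a_j$ and $b\succ a_j$. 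Since $n>2$, agent $1$ is outvoted $n-1$ to $1$ on each $c\succ a_j$ (and unanimously on each $b\succ a_j$), while $c\succ a$ and $b\succ c$ hold by narrow majorities, and a short parity count---the only genuinely computational step, split on the parity of $n$---confirms that the first-round margin is exactly one in agent $1$'s favour, so his transposition reverses it. For $m=3$ the profile is the plain Condorcet three-cycle; the case $m=2$ is excluded, since there successive elimination is an extended majority voting rule and hence regret-free truth-telling by Corollary~\ref{cor strategy-proof}.
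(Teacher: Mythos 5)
Your proposal is correct and follows essentially the same route as the paper: both exhibit a profile in which the successive elimination outcome is the manipulator's bottom alternative $c$ (via the blocks $a,b,\dots,c$; one $c,a,b,\dots$; roughly $n/2$ of type $b,c,\dots$; and the remainder of type $a,b,c,\dots$), so that swapping $a$ and $b$ changes the winner to $b$ while the ``outcome is the bottom'' observation makes regret automatic. The only blemishes are cosmetic (the $b$-over-$c$ majority is $n-1$ to $1$, not narrow, and the aside should read $b\succ c\succ a$), neither of which affects the argument.
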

\begin{proof}
See  Appendix \ref{proof theo succesive}. 
\end{proof}

 The main ideas behind the  proof of Theorem \ref{theo succesive}  can be found in the three-agent case, so we sketch it here. Let $f$ be a successive elimination rule with associated order $a \succ b \succ c \succ \ldots$ Consider profile $P \in \mathcal{P}^3$ given by the following table:
\begin{center}
$
\begin{array}{cccc}
P_1 & P_2 & P_3 \\
\hline
a & b & c  \\
b & c & a  \\
\vdots & a & b  \\
c & \vdots & \vdots \\
\end{array}
$
\end{center}
In this profile, the outcome of the rule is $c$.  Now, agent $1$ can manipulate the rule by interchanging $a$ and $b$, because the outcome of the rule then changes to $b$. Furthermore, as $c$ is the worst alternative in the true preference of agent 1, agent 1 regrets truth-telling.

\section{Two agents and three alternatives: characterizations}
\label{two and three}

In what follows, we focus in the  case  where we have only two agents, $%
N=\{1,2\}$, and three alternatives, $X=\{a,b,c\}.$ In this case  we can obtain characterizations of the classes of all:  (i) regret-free truth-telling and neutral,  and (ii) regret-free truth-telling, efficient, and anonymous rules. Notice that for the first characterization efficiency is not needed since it is implied by neutrality and regret-free truth-telling, as we prove next in Theorem \ref{theo neutral 2x3}.

First, observe that with two agents and three alternatives a $N$-maxmin rule coincides with:
\begin{enumerate}[(i)]

\item the $N$- negative plurality rule; and,

\item the $N$-scoring rule corresponding to $\overline{s%
}=(\overline{s}_{1},\overline{s}_{2},\overline{s}_{3})=(1,3,4).$

\end{enumerate}
The following theorem shows that $N$-maxmin and dictatorships are the only regret-free truth-telling and neutral rules. 

\begin{theorem}\label{theo neutral 2x3}
Assume $n=2$ and $m=3$. Then, a rule  is regret-free truth-telling and 
neutral if and only if it is a $N$-maxmin rule or a dictatorship.
\end{theorem}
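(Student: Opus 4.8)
The plan is to prove both implications, with the forward one being short. A dictatorship is strategy-proof, hence regret-free truth-telling by the observation opening Section \ref{Regret-freeness}, and it is neutral directly from its definition; an $N$-maxmin rule is regret-free truth-telling by Theorem \ref{theo maxmin rules}(ii) and is neutral because it arises from the neutrality-preserving construction \eqref{cond:neutral}. So the work is entirely in the ``only if'' direction, for which I would first compress the problem using neutrality. Since the only permutation of $\{a,b,c\}$ fixing a given strict preference is the identity, once agent $1$'s report is fixed to $a,b,c$ the six possible reports of agent $2$ lie in six distinct orbits; hence a neutral rule is pinned down by the six values $f(abc,\cdot)$, and the full $6\times 6$ outcome table is then recovered by applying permutations. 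The single engine I would use is the following consequence of regret-free truth-telling, call it $(\star)$: \emph{if at some profile an agent is assigned his worst alternative, then he cannot change the outcome by any unilateral deviation}, since otherwise he has a deviation that is strictly profitable at that profile and that no consistent world can punish (nothing is worse than his worst).

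Next I would split according to whether $f$ ever returns an agent's worst alternative. If it does, say agent $i$ receives his bottom at some profile, then $(\star)$ forces a whole column (or row) of the table to be constant, and neutrality propagates this to \emph{every} column (or row), because all single preferences form a single orbit. Consequently $f$ depends on only one agent and, being a neutral selection from a single preference, must equal ``the $j$-th ranked alternative'' $t_j$ of that agent for a fixed $j$. Regret-free truth-telling then forces $j=m$: if the fixed agent always receives his $j$-th best with $j<m$, he can report his true top in the $j$-th position and secure it in every (necessarily consistent) world, a weakly dominating misreport. Hence $f$ is a dictatorship.

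The complementary case, where $f$ \emph{never} assigns any agent his worst, is where the maxmin rules appear. With three alternatives, if the two agents' bottoms differ then the unique alternative that is neither bottom is exactly the maxmin winner, so $f$ must coincide with maxmin there; if the two bottoms coincide, the outcome must be one of the remaining two, which on profiles with distinct tops is precisely a maxmin tie to be broken, and on profiles with equal tops (i.e.\ identical preferences) is either the top or the middle. Neutrality ties all tie-profiles into a single orbit, so the tie-break is one binary choice reproducing exactly the two $N$-maxmin rules (break toward agent $1$'s or agent $2$'s top). It remains only to show that on the identical-preference diagonal $f$ must select the \emph{top} rather than the middle; granting this, $f$ is an $N$-maxmin rule and the characterization is complete.

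That last step is the main obstacle, because $(\star)$ does not apply (the middle is nobody's worst), and indeed one can write down neutral rules that never assign any worst yet pick the middle on the diagonal. I would dispose of it using the full weak-domination form of regret. Assuming the diagonal yields the middle, neutrality forces the middle everywhere on it, and together with the maxmin constraints this pins the table down to the two candidates indexed by the tie-break choice. In each I exhibit an explicit weakly dominating misreport at the unanimous profile $(abc,abc)$, whose outcome is then $b$: the relevant agent reports $b,a,c$, and over the three worlds consistent with observing $b$ this deviation yields an alternative in $\{a,b\}$ — never the common bottom $c$ — hence weakly better for him and strictly better in the truthful world. This contradiction forces the top and, as a byproduct, efficiency on the unanimous profiles, which is why efficiency need not be assumed. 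The rest is neutrality bookkeeping and routine verification on the finitely many surviving tables.
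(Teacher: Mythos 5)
Your proof is correct, but it is organized quite differently from the paper's. The paper first establishes, as a standalone claim, that any neutral regret-free truth-telling rule is efficient (a two-case argument at profiles where one agent's report is $a,b,c$), and then, assuming non-dictatorship, verifies $f(P)=\max_{P_1}\mathcal{M}(P)$ by a three-way case split on tops and bottoms, ruling out $f(P)=t_1(P_i)$ in the crucial case by showing it would make the other agent a dictator on one top and hence, by neutrality, everywhere. You instead dichotomize on whether $f$ ever assigns some agent his bottom. Your lemma $(\star)$ is precisely the consequence of regret-free truth-telling that the paper also uses repeatedly (an agent who receives his worst alternative cannot influence the outcome, since no consistent subprofile can punish a deviation), but you push it further: combined with the simply transitive action of the permutation group on $\mathcal{P}$, it makes one coordinate of $f$ irrelevant and delivers the dictatorship branch directly, while in the complementary branch the rule is pinned down to finitely many tables (forced when the bottoms differ, a single neutral binary tie-break when they agree) and the only non-maxmin survivors --- middle on the diagonal --- are eliminated by an explicit ex-post dominating misreport at the unanimous profile. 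This buys a more mechanical, fully checkable argument and makes transparent why efficiency need not be assumed (it falls out as a corollary), whereas the paper's efficiency-first route parallels the structure of its anonymous counterpart (Theorem \ref{theo eff anon 2x3}). One small slip to fix when writing this up: for the agent whose deviation to $b,a,c$ does the killing (the one \emph{not} favored by the tie-break), there are four, not three, subprofiles consistent with observing $b$; the extra world still yields an outcome in $\{a,b\}$, so the conclusion stands.
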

\begin{proof}
See  Appendix \ref{proof theo neutral 2x3}. 
\end{proof}

A similar result  to the previous theorem can be obtained changing neutrality  for anonymity. As efficiency is not a consequence of  anonymity and regret-free truth-telling we require it in the next theorem.\footnote{For example, a constant rule is regret-free truth-telling and anonymous but not efficient.} 

In this case, we need to enlarge the class of $A$-maxmin rules by dropping the requirement of transitivity for the tie-breaking associated to the rules and to add the successive  elimination rules into the picture, as we did with dictatorial rules in Theorem \ref{theo neutral 2x3}. 

\begin{definition}\label{def star}
A rule $f:\mathcal{P}^{2}\longrightarrow X$ is an \textbf{$\boldsymbol{A$-maxmin$^\star}$} rule if there is an antisymmetric and complete (not necessarily transitive) binary relation $\succ^\star$ on $X$ such that, for each $P \in \mathcal{P}^2$,
$$f(P)=\underset{\succ ^{\star }}{\max } \ \mathcal{M}(P).$$  
\end{definition}
Observe that, since $n=2$, 
$\left\vert \mathcal{M}(P)\right\vert \leq 2$ and therefore $\underset{\succ ^{\star }}{\max }%
\ \mathcal{M}(P)$ is well defined. In a similar way to Definition \ref{def star} we can define the $A$-scoring$^{\star}$ rule associated to $\succ ^{\star}.$  Notice  that the $A$-maxmin$^\star$ rule associated to $\succ ^{\star}$ coincides
with the $A$-scoring$^{\star}$ rule with $\overline{s}=(\overline{s%
}_{1},\overline{s}_{2},\overline{s}_{3})=(1,3,4)$  associated to $\succ
^{\star}$.

\begin{theorem}\label{theo eff anon 2x3}
Assume $n=2$ and $m=3$. Then, a rule  is regret-free truth-telling, efficient, and anonymous  if and only if it is a successive elimination rule or an  $A$-maxmin$^\star$ rule.
\end{theorem}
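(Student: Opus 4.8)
The plan is to prove both inclusions separately: first that every successive elimination rule and every $A$-maxmin$^\star$ rule is regret-free truth-telling, efficient, and anonymous (the "if" direction), and then that any rule with these three properties must be one of these two types (the "only if" direction, which is the characterization's substance). Both rule families are manifestly anonymous, and efficiency is routine to check for both (successive elimination rules are Condorcet consistent hence unanimous, and with two agents any maxmin winner is Pareto efficient since an alternative Pareto-dominated by $y$ would have $y$ strictly above it in both agents' preferences, forcing $mp(y,P) > mp(\cdot,P)$). For regret-free truth-telling of the $A$-maxmin$^\star$ rules, I would adapt Theorem~\ref{theo maxmin rules}(i): with $n=2$ and $m=3$ we have $n \geq m-1$, so the positive argument sketched after that theorem goes through, and one checks that dropping transitivity of the tie-breaking does not break the construction of the witnessing subprofile $P_{-i}^\star$ (with only two alternatives ever tied, $\max_{\succ^\star}$ is always well-defined as the remark notes). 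For successive elimination rules with two agents, I would verify regret-freeness directly by the case analysis the paper uses elsewhere.

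For the "only if" direction I would fix a regret-free truth-telling, efficient, anonymous rule $f:\mathcal{P}^2 \to X$ and reconstruct it from its behavior on profiles. First I would catalogue the distinct unordered pairs of preferences over $\{a,b,c\}$ — by anonymity $f$ depends only on the unordered pair $\{P_1,P_2\}$ — and split into cases according to how the two agents' preferences relate. When the two agents agree on the top alternative, efficiency (via unanimity-type reasoning) pins down $f(P)$. The delicate profiles are the \emph{conflict} profiles where the agents disagree substantially; there the maxmin winner set $\mathcal{M}(P)$ typically has two elements and the rule's choice among them defines a binary relation $\succ^\star$ on $X$. The key step is to show this induced relation is antisymmetric and complete, and then to show that on \emph{every} profile $f$ agrees either with the $A$-maxmin$^\star$ rule it induces or with a successive elimination rule — the dichotomy arising from whether $\succ^\star$ happens to be transitive or exhibits a specific cyclic pattern that matches sequential elimination.

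The hard part will be establishing the dichotomy: showing there are exactly these two families and no hybrid. Regret-free truth-telling must be used as the engine that forces consistency across profiles — the same manipulation-witnessing constructions that drive Theorems~\ref{theo maxmin rules} and \ref{theo succesive} must here be run in reverse to constrain $f$'s choices so tightly that only these two patterns survive. Concretely, I expect to show that if $f$ is not of maxmin$^\star$ type then the values of $f$ on the cyclic ("Condorcet paradox") profiles force an underlying order $\succ$ under which $f$ coincides with successive elimination, and conversely. Managing the interaction between the tie-breaking on $\mathcal{M}(P)$ and the regret-freeness constraints across the finitely many profile types — ensuring no profile yields a choice incompatible with \emph{both} families — is where the real work lies; I would organize it as an exhaustive but carefully pruned enumeration of the profile classes for $|X|=3$, using regret-free truth-telling at each conflict profile to eliminate the impossible assignments.
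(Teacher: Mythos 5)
Your ``if'' direction is essentially sound: anonymity and efficiency are indeed routine, and adapting Theorem~\ref{theo maxmin rules}(i) for the $A$-maxmin$^\star$ rules is a legitimate alternative to the paper's argument --- the adaptation works because with $n=2$ one always has $|\mathcal{M}(P)|\leq 2$, so $\max_{\succ^\star}$ is single-valued and the step that derives a contradiction from $f(P),f(\widehat P)\in\mathcal{M}(P)\cap\mathcal{M}(\widehat P)$ survives without transitivity. (The paper instead proves regret-freeness of both families from scratch for $n=2$, $m=3$, using the structural facts $f(\widetilde P)\widetilde R_i a$ for successive elimination and $f(\widetilde P)\neq t_1(\widetilde P_i)$ for maxmin$^\star$.)

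The gap is in the organizing principle of your ``only if'' direction. You propose to read $\succ^\star$ off the profiles where $|\mathcal{M}(P)|=2$ and then split according to whether $\succ^\star$ is transitive or ``exhibits a specific cyclic pattern that matches sequential elimination.'' That dichotomy is wrong: the definition of $A$-maxmin$^\star$ explicitly permits cyclic tie-breakings, and the paper's Case~2 construction produces exactly such a rule (with $a\succ^\star b$, $b\succ^\star c$, $c\succ^\star a$) and shows it is maxmin$^\star$, not successive elimination. Moreover, the profiles on which the two families actually disagree are not the shared-bottom profiles where $|\mathcal{M}(P)|=2$ (there efficiency already forces $f(P)\in\mathcal{M}(P)$), but profiles where $\mathcal{M}(P)$ is a singleton: at $P_1:a,b,c$ and $P_2:b,c,a$ the successive elimination rule headed by $a$ selects $a=t_1(P_2)$, while every maxmin$^\star$ rule selects $b$. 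The correct separating criterion --- which your enumeration would need in order to decide which family a given assignment belongs to --- is whether $f$ ever selects an alternative ranked last by some agent. If it never does, efficiency, anonymity, and regret-freeness force $f(P)=\max_{\succ^\star}\mathcal{M}(P)$ for the (possibly cyclic) relation defined on the shared-bottom profiles. If it does at some profile, regret-freeness and anonymity propagate this: that alternative $a$ must then win whenever it is some agent's top, so it behaves as the head of a successive-elimination order and the rule is pinned down as successive elimination. Without this criterion your ``exhaustive but carefully pruned enumeration'' has no principled way to converge, and organizing it around transitivity of $\succ^\star$ would misclassify the cyclic maxmin$^\star$ rules.
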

\begin{proof}
See  Appendix \ref{proof theo eff anon 2x3}. 
\end{proof}

Concerning the independence of axioms in the characterizations, it is clear that  neutrality and regret-free truth-telling in Theorem \ref{theo neutral 2x3} are independent. Successive elimination rules are regret-free truth-telling but not neutral, and the rule that always chooses the bottom of agent 1 is neutral and not regret-free truth-telling.   On the other hand, in Theorem \ref{theo eff anon 2x3}, a constant rule is regret-free truth-telling, anonymous, and not efficient and a dictatorship is regret-free truth-telling, efficient, and not anonymous. Now, given order $a \succ b \succ c,$ consider the rule $f(P)= \max_{\succ} \{t(P_1),t(P_2)\}$. This rule is anonymous, efficient, and not regret-free truth-telling.

\section{Summary}

Table  \ref{tabla caracterizaciones} summarizes the characterization results when there are only two alternatives, or two agents and three alternatives. Table \ref{tabla familias} summarizes our main findings  about tops-only, maxmin, scoring, negative plurality, $k$-approval, and Condorcet consistent rules. For simplicity, we present all the results for $n>2$, although some of them also apply when $n=2$.

\begin{table}[ht] 
\small
\centering 
\begin{threeparttable}
\begin{tabular}{|c |c |c|}
\hline
$m=2$ &  \hspace{42 pt} regret-free  \ \ $\Longleftrightarrow$ \ \  ext. majority voting & Cor. \ref{cor strategy-proof} \\
 \hline \hline
$n=2, m=3$ & \hspace{22 pt} regret-free $+$ neutral \ \ $\Longleftrightarrow$ \ \ $N$-maxmin or dictatorship & Th. \ref{theo neutral 2x3} \\
\hline \hline
$n=2, m=3$ & regret-free $+$ eff. $+$ anon. \ \ $\Longleftrightarrow$ \ \ $A$-maxmin$^\star$ or succ. elim. & Th. \ref{theo eff anon 2x3} \\
\hline
\end{tabular}
\begin{tablenotes}\footnotesize
\item We use ``regret-free" to mean ``regret-free truth-telling" due to space considerations.
\end{tablenotes}
\end{threeparttable}
\caption{\emph{Characterization results with $m=2$ or $n=2$ and $m=3.$}} \label{tabla caracterizaciones}
\end{table}

\begin{table}[ht] 

\small
\centering 
\begin{threeparttable}
\begin{tabular}{| c | c | c  r l | c |}
\hline
Tops-only & \multicolumn{4}{c}{\qquad \qquad strategy-proof \ \  $\Longleftrightarrow$  \ \ regret-free} & \multicolumn{1}{|c|}{Pr. \ref{Prop Tops-only} }\\ 
\hline \hline
$A$-maxmin & \multicolumn{4}{c}{ $ \ n\geq m-1 \text{ or }n \text{ divides }m-1 \ \ \ \Longleftrightarrow$ \ \  regret-free \  \ \  \qquad \qquad} & \multicolumn{1}{|c|}{Th. \ref{theo maxmin rules}}\\ 
\hline \hline
$N$-maxmin & \multicolumn{4}{c|}{ \ \ \ \ \ \ \  \ \qquad \qquad all regret-free} & Th. \ref{theo maxmin rules} \\
\hline \hline
Scoring with & \multicolumn{4}{c|}{\multirow{2}{*}{\ \ \ \ \ \ \  \ \qquad \qquad none regret-free}} & \multirow{2}{*}{Th. \ref{k star m-1}} \\
$s_{m-1}<s_m$ \tnote{$\dag$} & \multicolumn{4}{c|}{ } &  \\
\hline \hline
$A$-negative & \multicolumn{4}{c|}{ \multirow{2}{*}{ \qquad \qquad \ \  \ \ \ \ \ \ $ \ n\geq m-1  \ \ \ \Longleftrightarrow$ \ \  regret-free \  \ \  }} & \multirow{2}{*}{Th. \ref{scoring k=1} (i)} \\
plurality & \multicolumn{4}{c|}{ } &  \\
\hline \hline
$N$-negative & \multicolumn{4}{c|}{ \multirow{2}{*}{ \ \ \ \ \ \ \  \ \qquad \qquad all regret-free}} & \multirow{2}{*}{Th. \ref{scoring k=1} (ii)} \\
plurality & \multicolumn{4}{c|}{ } &  \\
\hline \hline
Anonymous  & \multicolumn{4}{c|}{\multirow{2}{*}{ \qquad \qquad \ \  \ \ \  $ \ k=\frac{m(n-1)+1}{n}   \ \ \Longleftrightarrow$ \ \  regret-free \  \ \  }} &  \multirow{2}{*}{Th. \ref{theo approval} (i)} \\
 $k$-approval \tnote{$\ddag$} & \multicolumn{4}{c|}{} &  \\
\hline \hline
Neutral  & \multicolumn{4}{c|}{\multirow{2}{*}{ \qquad \qquad \ \  \ \ \ \ \  $ \ k>\frac{m(n-1)}{n}  \ \ \ \Longleftrightarrow$ \ \  regret-free \  \ \  }} &  \multirow{2}{*}{Th. \ref{theo approval} (ii)} \\
 $k$-approval \tnote{$\ddag$} & \multicolumn{4}{c|}{} &  \\
\hline \hline

\multirow{3}{*}{\begin{tabular}{c}
     Condorcet\\
      consistent
\end{tabular}} & Monotone & \multicolumn{3}{c|}{   $n\neq 4$ or  $m>3 \ \ \Longrightarrow$ \ \  none regret-free} & Th. \ref{theo Condorcet consistent}\\ \cline{2-6}
 & Successive  & &    \multirow{2}{*}{ \qquad \quad \   \ \ \  none regret-free} &  & \multirow{2}{*}{Th. \ref{theo succesive}}\\

 & elimination &&&&\\
\hline 
\end{tabular}
\begin{tablenotes}\footnotesize
\item We use ``regret-free'' to mean ``regret-free truth-telling'' due to space considerations. The results for tops-only, maxmin, and negative plurality are also valid when $n=2$.
\item[$\dag$]This class includes Borda, plurality, Dowdall, and efficient $A$-scoring rules (see Corollary \ref{cor Borda et al}). 
\item[$\ddag$] We exclude plurality and negative plurality rules here, i.e., we assume $1<k<m-1$.  

\end{tablenotes}
\end{threeparttable}
\caption{\emph{Summary of main results, with $n>2$, for tops-only, maxmin, scoring, negative plurality, $k$-approval, and Condorcet consistent rules.}}
\label{tabla familias}

\end{table}

\vspace{200 pt}

\bibliographystyle{ecta}
\bibliography{biblio-regret-free}

\appendix

\section{Appendix}\label{appendix}

\subsection{Proof of Theorem \ref{theo maxmin rules}} \label{proof theo maxmin}

We first show the equivalence in part (i). Let $f: \mathcal{P}^{n}\longrightarrow X$  be a  $A$-maxmin rule.

\noindent ($\Longrightarrow$) Assume that $%
n<m-1$ and  that $n$ does not divide  $m-1.$ Then, there are $h\geq 1$ and
$1\leq s<n$ such that $m-1=nh+s$, or  $m=nh+r$ with $h\geq 1$ and
$2\leq r\leq n.$ As $m-nh=r$, for any profile of preferences there are at least $r$ alternatives whose minimal position is at least $h+1$. So, the minimal position of a maxmin winner is always at least $h+1$, and in the case that it is exactly $h+1$, there are at least $r$ maxmin winners. Let $P \in \mathcal{P}^n$ be given by the following table:

\begin{center}
$%
\begin{array}{r cccccccc}
 & P_{1} & P_{2} & P_3 & \ldots & P_{r} & P_{r+1} & \ldots & P_n  \\ 
\cline{2-9}
 & b  & a  & b &\ldots & b & b &\ldots & b  \\
 & \vdots  &\vdots &\vdots &\ldots &\vdots &\vdots &\ldots &\vdots  \\
\cdashline{2-6} 
 & a & b &\vdots & \ldots &  
\multicolumn{1}{c:}{\vdots}
& \vdots & \ldots & \vdots \\
\cdashline{7-9}
\multirow{2}{*}{\parbox{15mm}{ \footnotesize{ \hspace*{5 pt} \text{  last $h$ }
\\ \text{ positions }} } $\begin{dcases*} \\ \\ 
\end{dcases*}$} & z &  \vdots &\vdots & \ldots &  \vdots &  \vdots & \ldots &  \vdots \\
 & \vdots  &\vdots &\vdots &\ldots &\vdots &\vdots &\ldots &\vdots  \\

\end{array}
$
\end{center}
where each alternative appears exactly one time below the dashed line (this can be done because $m=nh+r$).

Then, $a,b \in \mathcal{M}(P)$ and
$f(P)=a.$ 
Now, consider preference $P_{1}^{\prime
} \in \mathcal{P}$ that differs from $P_1$ only in that the positions of $a$ and $z$ are interchanged. We have that
$a \notin \mathcal{M}(P_{1}^{\prime
},P_{-1})$, $b \in \mathcal{M}(P_{1}^{\prime
},P_{-1})$ , and $f(P_{1}^{\prime
},P_{-1})=b$. Therefore,
\begin{equation}\label{newnew1}
f(P_{1}^{\prime
},P_{-1})P_{1}f(P_{1},P_{-1}).
\end{equation}
Let $P_{-1}^{\star } \in \mathcal{P}^{n-1}$  be such that
$f(P_{1},P_{-1}^{\star })=f(P)=a.$  As we noted in the first paragraph of this proof, 
it follows that $mp(f(P_{1}^{\prime },P_{-1}^{\star
}), (P_{1}^{\prime },P_{-1}^{\star }))\geq h+1$. There are two cases to consider:

\begin{enumerate}

\item[$\boldsymbol{1}$.] \textbf{$\boldsymbol{mp(f(P_{1}^{\prime
},P_{-1}^{\star }),(P_{1}^{\prime },P_{-1}^{\star
}))>h+1}$.} Then, $f(P_1',P_{-1}^\star) P_1'z$ and, by the definition of  $P_{1}^{\prime }$,
\begin{equation}\label{newnew2}
f(P_{1}^{\prime },P_{-1}^{\star })P_{1}f(P_{1},P_{-1}^{\star }).
\end{equation}
By \eqref{newnew1} and \eqref{newnew2}, $f$ is not regret-free truth-telling. 

\item[$\boldsymbol{2}$.] \textbf{$\boldsymbol{mp(f(P_{1}^{\prime
},P_{-1}^{\star }),(P_{1}^{\prime },P_{-1}^{\star
}))=h+1}$.} As we noted in the first paragraph of this proof,  $\left\vert
\mathcal{M}(P_{1}^{\prime },P_{-1}^{\star })\right\vert \geq r\geq 2$ and  $f(P_{1}^{\prime },P_{-1}^{\star })\neq z$ (because $z$
is the last one in  order $\succ $). Again, $f(P_1',P_{-1}^\star) P_1'z$ and the proof follows as in the previous case. 
\end{enumerate}

\noindent $(\Longleftarrow)$ Assume that there exist $i\in N,$ $(P_{i},P_{-i})\in \mathcal{P}^{n}$ and $%
P_{i}^{\prime }\in \mathcal{P}$ such that
\begin{equation}\label{0}
f(P_{i}^{\prime },P_{-i})P_{i}f(P_{i},P_{-i}). 
\end{equation}%
We will prove that there is $P_{-i}^{\star }\in \mathcal{P}^{n-1}$ such
that $f(P)=f(P_{i},P_{-i}^{\star })$ and $f(P_{i},P_{-i}^{\star
})P_{i}f(P_{i}^{\prime },P_{-i}^{\star }).$ Let $\widehat{P}=(P_{i}^{\prime },P_{-i}).$ As $f$ is an $A$-maxmin rule,
\begin{equation}
mp(P,f(P))\geq mp(P,f(\widehat{P}))  \label{1}
\end{equation}
and
\begin{equation}\label{2}
mp(\widehat{P},f(\widehat{P}))\geq mp(\widehat{P},f(P)).
\end{equation}%
Let $\overline{k}$ be such that $t_{\overline{k}}(P_i)=f(P).$  By (\ref{1}) and since $f(\widehat{P})P_{i}f(P),$
\begin{equation}\label{ktecho}
mp(P,f(\widehat{P}%
))=k^{\star }\leq \overline{k}\leq m-1,
\end{equation}
where $t_{k^{\star }}(\widehat{P}_{j})=f(\widehat{P})$ for some
$j\in N \setminus \{i\}.$ Then, as $\widehat{P}_{j}=P_{j}$ and $k^{\star }\leq \overline{k}$,
\begin{equation}
mp(P,f(\widehat{P}))\geq mp(\widehat{P},f(\widehat{P})).  \label{3}
\end{equation}%
If $mp(P,f(P))=mp(P,f(\widehat{P}))$ and $mp(\widehat{P},f(%
\widehat{P}))=mp(\widehat{P},f(P)),$ then
\begin{equation}
f(P),f(\widehat{P})\in \mathcal{M}(\widehat{P})\cap \mathcal{M}(P),
\label{33}
\end{equation}
contradicting 
that $f(P)\neq f(\widehat{P}).$ Therefore, by \eqref{1} and \eqref{2},
 $mp(\widehat{P},f(%
\widehat{P}))>mp(\widehat{P},f(P))$ or $mp(P,f(P))>mp(P,f(\widehat{P})).$ By (\ref{3}),
\begin{equation}
mp(P,f(P))>mp(\widehat{P},f(P)).  \label{4}
\end{equation}
Let $\widehat{k}$ be such that $mp(\widehat{P},f(P))=\widehat{k}.$ Then, by \eqref{4},  $t_{\widehat{k}}(\widehat{P}%
_{i})=f(P)$ and $f(P)\widehat{P}_{j}t_{\widehat{k}}(\widehat{P}_{j})$ for
all $j\in N \setminus \{i\}.$ If $\overline{k}\leq \widehat{k},$ then
\begin{equation*}
mp(\widehat{P},f(P))=\widehat{k}\geq \overline{k}\geq mp(P,f(P)),
\end{equation*}%
which contradicts (\ref{4}). Therefore,
\begin{equation}
\bar{k%
}>\widehat{k}\label{4a} .
\end{equation}
 This implies that there exists an alternative $x \in X$ such that
\begin{equation}
f(P)=t_{\bar{k}}(P_{i})P_{i}x  \text{  \ and  \ } 
xR_{i}^{\prime }t_{\bar{k}}(P_{i}^{\prime })P_{i}^{\prime }f(P).
\label{5}
\end{equation}
There are two cases to consider:
\begin{enumerate}

\item[$\boldsymbol{1}$.] $\boldsymbol{n\geq m-1}.$  Let  $P_{-i}^{\star } \in \mathcal{P}^{n-1}$ be such that $t(P_{j}^{\star })=f(P),$ $%
t_{m-1}(P_{j}^{\star })=x$ for each $j\in N\setminus \{ i\}$, and for each $x\in X \setminus \{f(P),x\}$
choose an agent $j^{x}$ such that $t_{1}(P_{j^{x}}^{\star })=x$ (this is
feasible because $n-1\geq m-2$). Now, let $P^{\star
}=(P_{i},P_{-i}^{\star }).$ Then, $mp((P_{i},P_{-i}^{\star }),y)=1$ for all $%
y\in X \setminus \{f(P),x\}$ and from definition of $%
P_{-i}^{\star },$ (\ref{5}) and the fact that $\overline{k}\leq m-1$, we have $mp((P_{i},P_{-i}^{\star }),f(P))=\bar{k}%
>mp((P_{i},P_{-i}^{\star }),x)$. Therefore, $f(P_{i},P_{-i}^{\star })=f(P).$
Furthermore, $mp((P_{i}^{\prime },P_{-i}^{\star }),y)=1$ for each $y\in
X \setminus \{f(P),x\}$ and from definition of $%
P_{-i}^{\star },$ (\ref{5}) and the fact that $\overline{k}\leq m-1$, we have $mp((P_{i}^{\prime },P_{-i}^{\star }),x)>mp((P_{i}^{\prime
},P_{-i}^{\star }),f(P))$. 
Therefore, $f(P_{i}^{\prime },P_{-i}^{\star })=c.$

We conclude that $f(P)=f(P_{i},P_{-i}^{\star })$ and, by (\ref{5}) and the fact that  $f(P_{i}^{\prime },P_{-i}^{\star })=x,$ $f(P_{i},P_{-i}^{\star
})P_{i}f(P_{i}^{\prime },P_{-i}^{\star }).$ Hence, $f$ is regret-free truth-telling. 

\item[$\boldsymbol{2}$.] $\boldsymbol{n$ \textbf{divides} $m-1}.$ Thus, $m-1=hn$ with $%
h\geq 1.$ Therefore, 
\begin{equation}
mp(P,f(P))\geq h+1. \label{37}
\end{equation}
Let $Y=\{y\in X:yP_{i}f(P)\}$. Then,
\begin{equation}
\left\vert Y \right\vert <m-mp(P,f(P))\leq m-(h+1)=hn+1-h-1=h(n-1). 
\label{36}
\end{equation}%
Let  $P_{-i}^{\star } \in \mathcal{P}^{n-1}$ be such that $t(P_{j}^{\star })=f(P),$ $%
t_{m-1}(P_{j}^{\star })=x$ for each $j \in N \setminus \{i\}$,  and for each $y\in Y$ 
choose  an agent $j$  and a position $u\leq h $ such that $t_{u}(P_{j}^{\star })=y$ (the construction of $P_{-i}^{\star }$ is feasible by (\ref{36}) and the fact that $m-2=hn-1\geq h(n-1)$). Now, let $P^{\star }=(P_{i},P_{-i}^{\star
}).$ Then, $mp((P_{i},P_{-i}^{\star }),y)\leq h$ for each $y\in Y,$ $mp((P_{i},P_{-i}^{\star }),f(P))\geq mp(P,f(P))\geq h+1$ (this holds by (\ref{37}) and the definition of $P_{-i}^{\star}$), and $mp((P_{i},P_{-i}^{\star }),f(P))=\bar{k}%
>mp((P_{i},P_{-i}^{\star }),y)$  for each  $y\in X \setminus Y$ (this follows from the definitions of $%
P_{-i}^{\star }$ and $Y$). Hence, $f(P_{i},P_{-i}^{\star })=f(P).$
Furthermore, $mp((P_{i}^{\prime },P_{-i}^{\star }),y)\leq h$ for each 
$y\in Y$ and $mp((P_{i}^{\prime },P_{-i}^{\star }),x)>mp((P_{i}^{\prime
},P_{-i}^{\star }),f(P))$ (this  follows from (\ref{ktecho}), (\ref{5}), and the definition of $%
P_{-i}^{\star }$). Therefore, $f(P_{i}^{\prime },P_{-i}^{\star }) \in X \setminus Y$ and  $f(P_{i}^{\prime },P_{-i}^{\star }) \neq f(P)$.

We conclude that $f(P)=f(P_{i},P_{-i}^{\star })$ and $f(P_{i},P_{-i}^{\star
})P_{i}f(P_{i}^{\prime },P_{-i}^{\star }).$  Hence, $f$ is regret-free truth-telling.

\end{enumerate}

Next, we show part (ii). Assume that $f:\mathcal{P}^n \longrightarrow X$ is a $N$-maxmin rule. Then, there exists $\overline{%
j} \in N$ such that \textit{\ }%
\begin{equation}\label{n}
f(\widetilde{P})=\underset{\widetilde{P}_{\overline{j}}}{\max } \ \mathcal{M}(%
\widetilde{P})\text{ for each }\widetilde{P} \in \mathcal{P}^n.
\end{equation}
Let $P,P_i',\widehat{P}, \overline{k},$ and $ \widehat{k}$  be as in  ($\Longleftarrow$) of part (i). It is easy to see that equations  \eqref{0}, \eqref{1}, \eqref{2}, \eqref{ktecho} and   \eqref{3} 
also hold here.

 If $mp(P,f(P))=mp(P,f(\widehat{P}))$ and $mp(\widehat{P},f(%
\widehat{P}))=mp(\widehat{P},f(P)),$ then \eqref{33} holds as in the proof of part (i). As  $f(P_{i}^{\prime
},P_{-i})P_{i}f(P_{i},P_{-i}),$ we have $\overline{j}\neq i$. But then (\ref{33}) contradicts  $f(P)\neq f(\widehat{P})$ since $P_{\overline{j}}=%
\widehat{P}_{\overline{j}}.$ Therefore, by \eqref{1} and \eqref{2}, 
 $mp(\widehat{P},f(%
\widehat{P}))>mp(\widehat{P},f(P))$ or $mp(P,f(P))>mp(P,f(\widehat{P})).$ Now, it is easy to see that equations \eqref{4}, \eqref{4a} and \eqref{5} hold in this proof as well, so there exists $x \in X$ such that   $f(P)=t_{\bar{k}}(P_{i})P_{i}x$ and
$xR_{i}^{\prime }t_{\bar{k}}(P_{i}^{\prime })P_{i}^{\prime }f(P).$

Now, we define profile $P_{-i}^{\star } \in \mathcal{P}^{n-1}$ where, for each $j \in N \setminus \{i\},$ $P_j^\star$ is differs from $P_j$ in that $f(P)$ is now in the top of $P_j^\star$ and $x$ is in the second place, while all the other alternatives keep their relative ranking.  Formally, let $P_{-i}^{\star } \in \mathcal{P}^{n-1}$ be such that, for each $j \in N \setminus \{i\},$ $t(P_{j}^{\star })=f(P),$ $%
t_{m-1}(P_{j}^{\star })=x,$ and if  $k^{\prime }$ and $k^{\prime \prime
}$ are such that $t_{k^{\prime }}(P_{j})=f(P)$ and $t_{k^{\prime
\prime }}(P_{j})=x$, if we let $k_{1}=\max \{k^{\prime },k^{\prime
\prime }\}$ and  $k_{2}=\min \{k^{\prime },k^{\prime \prime }\}$, define 
\begin{equation*}
t_{k}(P_{j}^{\star })=\left\{
\begin{tabular}{l}
$t_{k+2}(P_{j})$ \  \ \ if $m-2\geq k\geq k_{1}-1,$ \\
\\
$t_{k+1}(P_{j})$  \  \ \ if $\bar{k}-1>k\geq k_{2}.$%
\end{tabular}%
\right. 
\end{equation*}

Next, we present two claims.

\noindent \textbf{Claim 1:} $\boldsymbol{f(P_{i},P_{-i}^{\star })=f(P)}.$ Let $P^{\star }=(P_{i},P_{-i}^{\star }).$ Since $%
f(P)P_{i}x$ and by definition of $P_{-i}^{\star },$%
\begin{equation}\label{nueva1bis}
mp(P^{\star },f(P))>mp(P^{\star },x)
\end{equation}
Then, $f(P^{\star })\neq x.$ As $f(P_{i}^{\prime },P_{-i})P_{i}f(P)=t_{\overline{k}}(P_{i}),$
\begin{equation}
mp(P^{\star },f(P))=\overline{k}.
\end{equation}%
Now, let $b \in X\setminus \{f(P),x\}.$ By definition of $P^{\star }$
and the fact that $f$ is a $N$-maxmin rule,
\begin{equation}\label{eq222}
mp(P^{\star },b)\leq mp(P,b)\leq mp(P,f(P))\leq \overline{k}.
\end{equation}%
Therefore, $f(P)\in \mathcal{M}(P^{\star })$ and $mp(P^{\star },f(P))=\overline{k}.$ On the one hand, if  $\overline{j}\neq i,$ then  $t(P_{\overline{j}%
}^{\star })=f(P)$ and, by definition of $f,$ $f(P)=f(P^{\star }).$  On the other hand, if $\overline{j}=i$ and there is $b \in \mathcal{M}(P^{\star }) \setminus \{f(P)\}$, then
by \eqref{nueva1bis} and \eqref{eq222}, $mp(P,b)=mp(P,f(P))=\overline{k}.$
Thus, by (\ref{n}) and the fact that $\overline{j}=i,$ $f(P)P_{i}b.$
Therefore, as $P_{i}^{\star }=P_{i},$%
$f(P)=f(P^{\star }).$
 This proves the Claim. 

\medskip 
\noindent \textbf{Claim 2:} $\boldsymbol{f(P_{i},P_{-i}^{\star })P_{i}f(P_{i}^{\prime
},P_{-i}^{\star })}.$ If $f(P_{i}^{\prime },P_{-i}^{\star })=x,$ then by Claim 1 and \eqref{5} the proof is trivial. Now assume $f(P_{i}^{\prime },P_{-i}^{\star })\neq x.$ First, we will prove that $f(P_{i}^{\prime },P_{-i}^{\star })\neq
f(P).$ As $f(P)=t_{\widehat{k}}(P_{i}^{\prime }),$
\begin{equation*}
\widehat{k}=mp((P_{i}^{\prime },P_{-i}^{\star }),f(P)).
\end{equation*}%
Furthermore, as $xR_{i}^{\prime }t_{\overline{k}}(P_{i}^{\prime })$ and $\bar{k}\leq m-1,$ by
definition of $P_{-i}^{\star },$
\begin{equation}\label{40}
mp((P_{i}^{\prime },P_{-i}^{\star }),x)\geq \bar{k}.
\end{equation}
Then, by (\ref{4a}), 
\begin{equation*}
mp((P_{i}^{\prime },P_{-i}^{\star }),x)>\widehat{k}=mp((P_{i}^{\prime
},P_{-i}^{\star }),f(P)),
\end{equation*}%
implying that $f(P_{i}^{\prime },P_{-i}^{\star })\neq f(P).$ 

Now, let $b \in X \setminus \{f(P),x\}$ be such that  $bP_{i}f(P_{i},P_{-i}^{%
\star }).$ Since $f(P_{i},P_{-i}^{\star })=f(P)=t_{\bar{k}}(P_{i})$, by
definition of $f$ there exists $j\in N \setminus \{ i \}$ such that $t_{\bar{k}}(P_{j})R_{j}b$%
. By definition of $P_{-i}^{\star }$, $t_{\bar{k}}(P_{j}^{\star })R_{j}^{\star
}b$. Therefore,%
\begin{equation}\label{aa}
mp((P_{i}^{\prime },P_{-i}^{\star }),b)\leq \overline{k}.
\end{equation}
On the one hand, if $\overline{j}\neq i,$ since  $t_{m-1}(P_{\overline{j}%
}^{\star })=x$  the definition of $f,$  \eqref{40}, and \eqref{aa} imply that  $f(P_{i}^{\prime },P_{-i}^{\star })\neq b.$ 
On the other hand, if $\overline{j}=i,$ since $bP_{i}f(P_{i},P_{-i}^{\star })$ the definition of $f$ implies  $mp((P_{i},P_{-i}^{\star }),b)<mp((P_{i},P_{-i}^{\star
}),f((P_{i},P_{-i}^{\star }))).$ Then,
\begin{equation*}
mp((P_{i},P_{-i}^{\star }),b)<\overline{k}.
\end{equation*}%
Therefore, as $bP_{i}f(P_{i},P_{-i}^{\star })=t_{\bar{k}}(P_{i}),$
\begin{equation*}
mp((P_{i}^{\prime },P_{-i}^{\star }),b)<\overline{k}
\end{equation*}
Then, by the definition of $f$ and (\ref{40}), $f(P_{i}^{\prime
},P_{-i}^{\star })\neq b$ in this case as well. Therefore, we conclude that
\begin{equation*}
f(P_{i},P_{-i}^{\star })P_{i}f(P_{i}^{\prime },P_{-i}^{\star }),
\end{equation*} proving the Claim. 

By Claims 1 and 2 we conclude that $f$ is regret-free truth-telling. \hfill $\square$

\subsection{Proof of Theorem \ref{k star m-1}}\label{proof of theo scoring k star m-1}

 Let $f:\mathcal{P}^n \longrightarrow X$ be a scoring rule with  $s_{m-1}<s_{m}$. Let $a,b,c \in X$ and assume w.l.o.g. that  if $f$ is an $A$-scoring then the tie-breaking is given by order $\succ$ with $a\succ b\succ c \succ \ldots$, whereas if $f$ is a $N$-scoring rule agent $1$ break ties. There are two cases to consider:

\begin{enumerate}

\item[$\boldsymbol{1}.$] \textbf{$\boldsymbol{n=2t+3$ with $t\geq 0}$}. Let $P \in \mathcal{P}^n$ be given by the following table:
\begin{center}
$%
\begin{array}{ccc:ccc:ccc}
P_{1} & P_{2} & P_{3} & P_4 & \cdots & P_{t+3} & P_{t+4} & \cdots & P_{2t+3} \\
\hline

a & c & b & a & \cdots & a & b & \cdots & b \\
c & b & a & b & \cdots & b & a & \cdots & a \\
b & a & c & c & \cdots & c & c & \cdots & c \\
\vdots & \vdots & \vdots & \vdots & \cdots & \vdots & \vdots & \cdots & \vdots\\
\multicolumn{3}{c}{} & \multicolumn{3}{c}{$\upbracefill$} &\multicolumn{3}{c}{$\upbracefill$}\\
\multicolumn{3}{c}{} & \multicolumn{3}{c}{t \text{ agents}} &\multicolumn{3}{c}{t \text{ agents}}\\
\end{array}
$
\end{center}
As $s(P,a)=s(P,b)\geq s(P,x)$ for each $x \in X\setminus\{a,b\}$, by the tie-breaking it follows that $f(P)=a$. Let $P_{2}^{\prime }\in \mathcal{P}$ be such that $P_2':b,c,a, \ldots$,  and let  $\widehat{P}%
=(P_{2}^{\prime },P_{-2}).$ As $s_{m-1}<s_{m}$, we have that
\begin{equation*}
s(\widehat{P},b)>s(P,b)=s(P,a)=s(\widehat{P},a)
\end{equation*}%
and
\begin{equation*}
s(\widehat{P},b)>s(P,b)\geq s(P,c)>s(\widehat{P},c).
\end{equation*}%
Therefore, 
\begin{equation}\label{nuevo1}
f(\widehat{P})=bP_{2}a=f(P).
\end{equation}
Next, consider $P_{-2}^{\star } \in \mathcal{P}^{n-1}$  such that $f(P_{2},P_{-2}^{\star
})=a.$ 
Then, $f(P_{2}^{\prime },P_{-2}^{\star })\in \{a,b\}$ because $s((P_{2},P_{-2}^{\star }),b)<s((P_{2}^{\prime
},P_{-2}^{\star }),b)$,  $s((P_{2},P_{-2}^{\star
}),c)>s((P_{2}^{\prime },P_{-2}^{\star }),c),$ and 
$%
s((P_{2},P_{-2}^{\star }),x)=s((P_{2}^{\prime },P_{-2}^{\star }),x)$ for each $%
x\in X \setminus \{b,c\}$.  Therefore, 
\begin{equation}\label{nuevo2}
f(P_{2}^{\prime },P_{-2}^{\star })R_{2}f(P_{2},P_{-2}^{\star })
\end{equation} 
By \eqref{nuevo1} and \eqref{nuevo2}, $f$ is not regret-free truth-telling.

\item[$\boldsymbol{2}.$] \textbf{$\boldsymbol{n=2t$ with $t\geq 2}$}. Let $P \in \mathcal{P}^n$ be given by the following table:
\begin{center}
$%
\begin{array}{ccc:ccc:ccc}
P_{1} & P_{2} & P_{3} & P_4 & \cdots & P_{t+1} & P_{t+2} & \cdots & P_{2t} \\
\hline

a & c & c & a & \cdots & a & b & \cdots & b \\
b & b & a & b & \cdots & b & a & \cdots & a \\
c & a & b & c & \cdots & c & c & \cdots & c \\
\vdots & \vdots & \vdots & \vdots & \cdots & \vdots & \vdots & \cdots & \vdots \\
\multicolumn{3}{c}{} & \multicolumn{3}{c}{$\upbracefill$} &\multicolumn{3}{c}{$\upbracefill$}\\
\multicolumn{3}{c}{} & \multicolumn{3}{c}{t-2 \text{ agents}} &\multicolumn{3}{c}{t-1 \text{ agents}}\\
\end{array}%
$
\end{center}
Then,  $f(P)\in \{a,b,c\}.$ Furthermore, as $s(P,a)=s(P,b)$,
$a\succ b$ and $aP_{1}b$, $f(P)\in \{a,c\}$. If $f(P)=a$,  then, $%
s(P,a)=s(P,b)\geq s(P,c)$ and we proceed as in Case 1. If  
$f(P)=c$, then $s(P,c)\geq s(P,a)=s(P,b).$ Consider agent $j$ such that $t+2 \leq j \leq 2t$ (i.e., $P_j:b,a,c,\ldots$) and let $P_j' \in \mathcal{P}$ be such that $P_{j}^{\prime }:a,b,c, \ldots$  and
$\widehat{P}=(P_{j}^{\prime
},P_{-j}).$ As $s_{m-1}<s_{m}$, $s(\widehat{P},a)\geq s(\widehat{P},c)$ and $s(\widehat{P}%
,a)>s(\widehat{P},b)$. Since  $a\succ c$ and $aP_{1}c$,
\begin{equation}\label{nuevo3}
f(\widehat{P})=aP_{j}c=f(P).
\end{equation}
Next, consider $P_{-j}^{\star } \in \mathcal{P}^{n-1}$  such that $f(P_{j},P_{-j}^{\star })=c.$
Then, $f(P_{j}^{\prime },P_{-j}^{\star })\in \{c,a\},$ because $s((P_{j},P_{-j}^{\star }),a)<s((P_{j}^{\prime
},P_{-j}^{\star }),a),$ $(s(P_{j},P_{-j}^{\star
}),b)>s((P_{j}^{\prime },P_{-j}^{\star }),b),$ and  $%
s((P_{j},P_{-j}^{\star }),x)=s((P_{j}^{\prime },P_{-j}^{\star }),x)$ for each $%
x\in X\setminus \{ b,a\}$.   Therefore,
\begin{equation}\label{nuevo4}
f(P_{j}^{\prime },P_{-j}^{\star })R_{j}f(P_{j},P_{-j}^{\star })
\end{equation}
By \eqref{nuevo3} and \eqref{nuevo4}, $f$ is not regret-free truth-telling. 
\end{enumerate} 
\hfill $\square$

\subsection{Proof of Theorem \ref{scoring k=1}}\label{proof of theo scoring k=1}

 We first show the equivalence in part (i). Let $f:\mathcal{P}^n \longrightarrow X$ 
be an $A$-negative plurality rule.

\noindent $(\Longrightarrow)$  Suppose that $n<m-1$ (this
implies $m>3$)$.$
Assume that $a,b$ are the first two  alternatives in the tie-breaking with $%
a\succ b$ and let $z$ the last alternative in the tie-breaking$.$
Let $P \in \mathcal{P}^n$ 
be such that  $t_3(P_i)=b,$ $t_2(P_i)=a,$ $t_1(P_i)=z,$ and  $t_m(P_j)=b,$ $t_{m-1}(P_j)=a,$ and $t_{m-2}(P_j)=z$ for each $j \in N \setminus \{i\}.$  Then, $f(P)=a.$ Now, let $P_{i}^{\prime } \in \mathcal{P}$ be such that
$t_{1}(P_{i}^{\prime })=a.$ Then, $f(P_{i}^{\prime },P_{-i})=b$ and, therefore,
\begin{equation}\label{noregret1}
f(P_{i}^{\prime },P_{-i})P_{i}f(P).
\end{equation}
Now, let $P_{-i}^{\star } \in \mathcal{P}^{n-1}$ be such that $f(P)=f(P_{i},P_{-i}^{\star }).$ As $%
n+1<m,$ $\left\vert \mathcal{S}(P_{i}^{\prime },P_{-i}^{\star
})\right\vert
\geq 2.$ Therefore, as $z$ is the last alternative in the order $\succ ,$ $%
f(P_{i}^{\prime },P_{-i}^{\star })\neq z$ and 
\begin{equation}\label{noregret2}
f(P_{i}^{\prime },P_{-i}^{\star })R_{i}f(P_{i},P_{-i}^{\star }).
\end{equation}
Hence, by \eqref{noregret1} and \eqref{noregret2}, $f$  is not regret-free truth-telling. 

\noindent $(\Longleftarrow)$ Assume  $n \geq m-1$ and  there exist $i\in N,$ $P \in \mathcal{P}^{n}$ and $%
P_{i}^{\prime }\in \mathcal{P}$ such that
\begin{equation}
f(P_{i}^{\prime },P_{-i})P_{i}f(P).  \label{ar}
\end{equation}%
Next, we show there is $P_{-i}^{\star }\in \mathcal{P}^{n-1}$ such
that
$f(P)=f(P_{i},P_{-i}^{\star })$
and 
$f(P_{i},P_{-i}^{\star })P_{i}f(P_{i}^{\prime },P_{-i}^{\star }).$
Let $\widehat{P}=(P_{i}^{\prime },P_{-i}).$ If $t_{1}(P_{i})=t_{1}(P_{i}^{%
\prime }),$ then $\mathcal{S}(P)=\mathcal{S}(\widehat{P}),$ contradicting the definition of  $f$ and the fact that $f(\widehat{P})\neq
f(P_{i},P_{-i}).$ Therefore,
\begin{equation}
t_{1}(P_{i})\neq t_{1}(P_{i}^{\prime }). \label{GGG}
\end{equation}
If $%
t_{1}(P_{i})=f(P),$ then as $t_{1}(P_{i})\neq t_{1}(P_{i}^{\prime }),$ $s(%
\widehat{P},f(P))>s(P,f(P)).$ By definition of $f,$ $s(P%
,f(P)) \geq s(P,x)$ for each $x \in X.$ Then, $s(\widehat{P}%
,f(P))> s(P,x)$ for each $x \in X \setminus \{f(P)\}.$
Now, as $\widehat{P}=(P_{i}^{\prime },P_{-i}),$ $%
s(\widehat{P},f(P))>s(\widehat{P},x)$ for each $x \in X \setminus\{ t_{1}(P_{i})\}.$
Therefore, as $t_{1}(P_{i})=f(P),$ $f(\widehat{P})=f(P)$  which
contradicts \eqref{ar}. 
Thus,
\begin{equation}
t_{1}(P_{i})\neq f(P). \label{GG}
\end{equation}
Furthermore,
\begin{align*}
s(\widehat{P},x)&=s(P,x) \text{ for each } x\notin
\{t_{1}(P_{i}),t_{1}(P_{i}^{\prime })\}, \\
s(\widehat{P},t_{1}(P_{i}))&=s(P,t_{1}(P_{i}))+1, \text{and} \\
s(\widehat{P},t_{1}(P_{i}^{\prime }))&=s(P,t_{1}(P_{i}^{\prime }))-1. %
\end{align*}%
Then, as $s(P,x)\leq s(P,f(P))$ for each $x \in X,$
\begin{equation*}
\mathcal{S}(\widehat{P})=\{t_{1}(P_{i})\}\text{ or }\mathcal{S}%
(P) \setminus \{t_{1}(P_{i}^{\prime })\}\subset
\mathcal{S}(\widehat{P})\subset \mathcal{S}(P)\cup
\{t_{1}(P_{i})\}.
\end{equation*}
Thus, by \eqref{ar},
\begin{equation}\label{G}
\mathcal{S}%
(P) \setminus \{t_{1}(P_{i}^{\prime })\}\subset
\mathcal{S}(\widehat{P})\subset \mathcal{S}(P)\cup
\{t_{1}(P_{i})\}.
\end{equation}
Next, we claim that
\begin{equation}\label{claim3}
t_{1}(P_{i}^{\prime })=f(P)
\end{equation}
holds. Assume otherwise  that $%
t_{1}(P_{i}^{\prime })\neq f(P).$ Then, by \eqref{G}  and the definition
of $f$, $f(\widehat{P})=f(P)$ or $f(\widehat{P})=t_{1}(P_{i}),$ which
contradicts that $f(\widehat{P})P_{i}f(P).$ Then, \eqref{claim3} holds. 

Now, let $P_{-i}^{\star }$ be such that $\{t_{1}(P_{j}^{\star }):j\neq
i\}=X \setminus \{f(P),t_{1}(P_{i})\}$ ($P_{-i}^{\star }$ exists because
$n \geq m-1$). As $t_{1}(P_{i})\neq f(P),
$ $\mathcal{S}(P_{i},P_{-i}^{\star })=\{f(P)\}$ and, therefore,
\begin{equation*}\label{parte1}
f(P)=f(P_{i},P_{-i}^{\star })
\end{equation*}
By \eqref{claim3}, $t_{1}(P_{i}^{\prime })=f(P)$. Then,
$\mathcal{S}(P_{i}^{\prime },P_{-i}^{\star })=\{t_{1}(P_{i})\}$ implying 
$f(P_{i}^{\prime },P_{-i}^{\star })=t_{1}(P_{i})$
and, therefore,
\begin{equation}\label{parte2}
f(P_{i},P_{-i}^{\star })P_{i}f(P_{i}^{\prime },P_{-i}^{\star }).
\end{equation}
By \eqref{ar} and \eqref{parte2}, $f$ is regret-free truth-telling.

\medskip

In order to see (ii), Assume that $f$ is a $N$-negative plurality.  Then, there exists $\overline{%
j}$ such that \textit{\ }%
\begin{equation}\label{ene}
f(\widetilde{P})=\underset{\widetilde{P}_{\overline{j}}}{\max }\mathcal{S}(%
\widetilde{P})\text{ for each }\widetilde{P} \in \mathcal{P}^n.
\end{equation}
Let $P,P_i',\widehat{P},$  be as in ($\Longleftarrow$) of part (i). By definition, \eqref{ar}
also holds here.

If $t_{1}(P_{i})=t_{1}(P_{i}^{%
\prime }),$ then $\mathcal{S}(P)=\mathcal{S}(\widehat{P}).$ As  $f(P_{i}^{\prime
},P_{-i})P_{i}f(P_{i},P_{-i}),$ we have $\overline{j}\in N \setminus \{ i\}$. Now,
$\mathcal{S}(P)=\mathcal{S}(\widehat{P})$ contradicts  $f(P)\neq f(\widehat{P})$ since $P_{\overline{j}}=%
\widehat{P}_{\overline{j}}.$ Therefore, \eqref{GGG} holds here and it follows that both \eqref{GG} and \eqref{G} hold as well. If $\overline{j}=i$, we get a contradiction with 
$f(\widehat{P})P_{i}f(P)$ and $\mathcal{S}(\widehat{P})\subset
\mathcal{S}(P)\cup \{t_{1}(P_{i})\},$ so $\overline{j}\neq i$.

Now, let $P_{-i}^{\star } \in \mathcal{P}^{n-1}$
be such that $t(P_{\overline{j}}^{\star })=t_{1}(P_{i})$, $t_{m-1}(P_{\overline{%
j}}^{\star })=f(P)$ and, for each $j \in N \setminus \{i, \overline{j}\}$,  $t(P_{j}^{\star })=f(P)$ and $t_{2}(P_{j}^{\star
})=t_{1}(P_{i}).$ Therefore, by (\ref{GG}), $f(P)\in
\mathcal{S}(P_{i},P_{-i}^{\star }) $ and $t_{1}(P_{i})\notin
\mathcal{S}(P_{i},P_{-i}^{\star }).$ By definition of $f$
and $P_{\overline{j}}^{\star }$ it follows that
\begin{equation*}\label{rg1}
f(P)=f(P_{i},P_{-i}^{\star }).
\end{equation*}
Then, by (\ref{GGG}),
$t_{1}(P_{i})\in \mathcal{S}(P_{i}^{\prime },P_{-i}^{\star }).$  By
definition of $f$ and $P_{\overline{j}}^{\star }$ we have
\begin{equation*}
f(P_{i}^{\prime },P_{-i}^{\star })=t_{1}(P_{i})
\end{equation*}
Therefore,
\begin{equation}\label{rg2}
f(P_{i},P_{-i}^{\star })P_{i}f(P_{i}^{\prime },P_{-i}^{\star }).
\end{equation}
By \eqref{ar} and \eqref{rg2}, $f$ is regret-free truth-telling. \hfill $\square$

\subsection{Proof of Theorem \ref{scoring positive}}\label{proof theo scoring positive}

Assume $n>2$ and  let $f:\mathcal{P}^n \longrightarrow X$ be a scoring rule such that $k^\star n<m.$ We first show the equivalence in part (i). Let further assume that $f$ is  an $A$-scoring rule. 

\noindent $(\Longrightarrow)$ Assume that $\mathit{k^{\star }}n<m-1,$ we will prove that
$f$ is not regret-free truth-telling. Assume that $a$ and $b$ are the first two alternatives in
the tie-breaking $\succ$ with $a\succ b$ and let $z$ the last alternative in the
tie-breaking$.$ First, notice that  for any profile of preferences there are at least two alternatives above position $k^\star$ in the preference of each agent.  So, there are at least two score winners (with score equal to $n s_m$). 

Let $P \in \mathcal{P}^n$ be such that $t(P_{i})=b,$ $%
t_{k^{\star }+1}(P_{i})=a,$ $t_{k^{\star }}(P_{i})=z$ and, for each $j \in N\setminus \{i\},$ $P_{j}:b,a,\ldots.$ Then, $f(P)=a.$ Now, consider preference $P_{1}^{\prime} \in \mathcal{P}$ that differs from $P_1$ only in that the positions of $a$ and $z$ are interchanged. Therefore, $f(P_{i}^{\prime },P_{-i})=b$ and
\begin{equation}\label{nuevo5}
f(P_{i}^{\prime },P_{-i})P_{i}f(P).
\end{equation}%
Now, let $P_{-i}^{\star} \in \mathcal{P}^{n-1}$  be such that $f(P_{i},P_{-i}^{\star })=f(P).$ As we noted in the first paragraph of this proof, $s(f(P_{i}^{\prime
},P_{-i}^{\star }),(P_{i}^{\prime
},P_{-i}^{\star }))=n s_m$ and $f(P_{i}^{\prime
},P_{-i}^{\star })\neq z$ (because $z$ is the last alternative in the tie-breaking). Then, $f(P_{i}^{\prime
},P_{-i}^{\star })P_i'z$ and, by the definition of $P_i'$,
\begin{equation}\label{nuevo6}
f(P_{i}^{\prime },P_{-i}^{\star })R_{i}f(P_{i},P_{-i}^{\star }).
\end{equation}
By \eqref{nuevo5} and \eqref{nuevo6}, $f$ is not regret-free truth-telling.

\noindent $(\Longleftarrow )$ Assume that $\mathit{k^{\star }}n=m-1.$ Then, for any profile  there is always an alternative with maximal score $n \cdot s_m$.   Thus, given  $P\in \mathcal{P}^{n},$ 
$s(f(P),P)=n\cdot s_{m} 
$ 
 and $f(P)P_{j}t_{k^{\star }}(P_{j})$ for each $j \in N.$

Let $P_i' \in \mathcal{P}$ be such that \begin{equation}\label{nuevo7}
f(P_{i}^{\prime },P_{-i})P_{i}f(P).
\end{equation}%
Then, $f(P_i',P_{-i})P_if(P)P_it_{k^\star}(P_i).$ By definition of $k^{\star },$
\begin{equation}\label{nueva1}
s(f(P_i',P_{-i}),P)\geq s(f(P_i',P_{-i}),(P_{i}^{\prime },P_{-i}))
\end{equation}%
Also, 
\begin{equation}\label{nueva2}
t_{k^{\star}}(P_{i}')R_{i}^{\prime }f(P).
\end{equation}
Otherwise, $f(P)P_{i}^{\prime }t_{k^{\star }}(P_{i}')$ implies
 $s(f(P),P)=s(f(P),(P_{i}^{\prime },P_{-i})).
$ 
By  \eqref{nueva1}, $s(f(P_i',P_{-i},P))=s(f(P),P)$ and $s(f(P),(P_i',P_{-i}))=s(f(P_i',P_{-i}),(P_i',P_{-i})),$ contradicting the definition of $f$ since $f(P_i',P_{-i}) \neq f(P)$. So \eqref{nueva2} holds. 

Therefore, there exists $x\in X$ such that $%
t_{k^{\star }}(P_{i})R_{i}x$ and $xP_{i}^{\prime }t_{k^{\star }}(P_{i}').$ As $\mathit{k^{\star }}n=m-1$ is equivalent to $(n-1)k^{\star
}=m-k^{\star }-1$, we can consider  $P_{-i}^{\star } \in \mathcal{P}^{n-1}$ such that the two following requirements hold: (i) $P_{j}^{\star }:x,f(P),\ldots$ for each $%
j\in N \setminus \{i\},$ and (ii) for each $y\in X\setminus \{f(P)\}$ such that $yP_{i}t_{k^{\star
}}(P_{i})$ there exist $j\in N\setminus \{i\}$ such that $t_{k^{\star
}}(P_{j}^{\star })R_{j}^{\star }y$. Therefore, since now $f(P)$ is the only alternative with score $n \cdot s_m$, $%
\mathcal{S}(P_{i},P_{-i}^{\star })=\{f(P)\}$ and
\begin{equation*}
f(P_{i},P_{-i}^{\star })=f(P).
\end{equation*}%
As $s(x,(P_{i}^{\prime },P_{-i}^{\star }))=n\cdot s_{m}$ and, by \eqref{nueva2} and the definition of $P_{-i}^\star$, $%
s(r,(P_{i}^{\prime },P_{-i}^{\star }))<n\cdot s_{m}$ for each $r$ such that  $%
rP_{i}t_{k^{\star }}(P_{i}),$ it follows that $t_{k^\star}(P_i)R_if(P_{i}^{\prime },P_{-i}^{\star })$ and we have
\begin{equation}\label{nuevo8}
f(P_{i},P_{-i}^{\star })P_{i}f(P_{i}^{\prime },P_{-i}^{\star }).
\end{equation}
By \eqref{nuevo7} and \eqref{nuevo8}, $f$ is regret-free truth-telling. 

\medskip

\noindent To see part (ii), let $f:\mathcal{P}^n \longrightarrow X$ be a $N$-scoring rule. Then, there exists $\overline{j}\in N$ such
that \textit{\ }%
\begin{equation*}
f(\widetilde{P})=\underset{\widetilde{P}_{\overline{j}}}{\max } \ \mathcal{S}(%
\widetilde{P})\text{ for each }\widetilde{P} \in \mathcal{P}^n.  
\end{equation*}
Let $P\in \mathcal{P}^{n}$. As we noted in the first paragraph of this proof, 
$s(f(P),P)=n\cdot s_{m}
$ 
 and $f(P)P_{j}t_{k^{\star }}(P_{j})$ for each $j \in N.$

Let $P_i' \in \mathcal{P}$ be such that \begin{equation}\label{nuevo9}
f(P_{i}^{\prime },P_{-i})P_{i}f(P).
\end{equation}%
Since, by the definition of $k^{\star },$  $s(f(P_i',P_{-i}),P)=s(f(P),P)=n\cdot s_{m},$ it follows that  $i \in N \setminus\{\overline{j}\}.$

Notice that, by the same arguments, \eqref{nueva1} and \eqref{nueva2} also hold here.  Therefore, $t_{k^{\star }}(P_{i})P_{i}'f(P)$ and  there exists $x\in X$ such that $%
t_{k^{\star }}(P_{i})P_{i}x$ and $xP_{i}^{^{\prime }}t_{k^{\star }}(P_{i}).$

Now, let $P_{-i}^{\star} \in \mathcal{P}^{n-1}$ be such that $P_{j}^{\star}:x,f(P),\ldots$ for each $j\in N\setminus\{i\}.$ As $%
s(f(P),(P_{i},P_{-i}^{\star }))=n\cdot s_{m}>s(x,(P_{i},P_{-i}^{\star }))$ and $j\neq i$,
$f(P_{i},P_{-i}^{\star })=f(P).
$
As $s(x,(P_{i}^{\prime },P_{-i}^{\star }))=n\cdot s_{m}$ and $j\neq i$,%
 $f(P_{i}^{\prime },P_{-i}^{\star })=x.
$ 
 Then,
 \begin{equation}\label{nuevo10}
 f(P_i, P_{-i}^{\star})P_if(P_i', P_{-i}^\star).
 \end{equation}
 By \eqref{nuevo9} and \eqref{nuevo10}, $f$ is regret-free truth-telling. \hfill $\square$

\subsection{Proof of Theorem \ref{scoring negative}}\label{proof theo scoring negative}

Assume $n>2$ and let $f:\mathcal{P}^n \longrightarrow X$ be a scoring rule such that $k^{\star }n\geq m$ and $s_{k^{\star}-1}=s_{k^{\star }}$ (this implies $k^\star>1$). If $k^{\star }=m-1,$ the result follows from Theorem \ref{k star m-1}, so assume that $k^{\star }<m-1.$ If $f$ is and $A$-scoring rule, assume that $a$ and $b$ are the first two alternatives in
the tie-breaking $\succ$ with $a\succ b$, whereas if $f$ is a $N$-scoring rule, let agent 1 be the one who break ties.
By the definition of $k^{\star
},$ $s_{k^{\star }-1}=s_{k^{\star
}}<s_{k^{\star }+1}=s_{m-1}=s_{m}$. Let $a,b \in X.$ As $\mathit{k^{\star }}n\geq m$ and $%
k^{\star }>1$,  $k^{\star }(n-1)\geq m-k^{\star }$.  Then,
there exists $P \in \mathcal{P}^n$ such that:

\begin{enumerate}[(i)]
\item $b=t_{k^{\star }}(P_{2})$  and $a=t_{k^{\star }-1}(P_{2})$,  

\item for each $j\in N\setminus \{2\},$ $t(P_{j})=a$ and $t_{m-1}(P_{j})=b,$

\item for each $x\in X$ such that $xP_{2}b$, there exist $j\in N\setminus
\{2\}$ such that $t_{k^{\star }}(P_{j})R_{j}x.$
\end{enumerate}
Since  $s(a,P)\geq s(x,P)$ for each $x \in X$ such that $xR_{2}b,$ $a\succ x$ and $%
aP_{1}b$, it follows that  $bP_{2}f(P)$. Let $P_{2}^{\prime } \in \mathcal{P}$ be such that $t_{k^{\star
}+1}(P_{2}^{\prime })=b=t_{k^{\star }}(P_{2}),$ $t_{k^{\star }}(P_{2}^{\prime
})=t_{k^{\star }+1}(P_{2})$, and $t_{k}(P_{2}^{\prime })=t_{k}(P_{2})$ for
each $k\neq k^{\star },k^{\star }+1$. Let $\widehat{P}=(P_{2}^{\prime
},P_{-2}).$ Then, by the definition of $k^{\star },$
\begin{equation*}
s(b,\widehat{P})>s(b,P)=s(a,P)=s(a,\widehat{P})
\end{equation*}%
and
$s(b,\widehat{P})\geq s(x,\widehat{P})
$ 
   if $bP_2x.$ Therefore, 
\begin{equation}\label{nuevo11}
f(P_2',P_{-2})P_{2}f(P).
\end{equation}   
Let $P_{-2}^{\star } \in \mathcal{P}^{n-1}$ be such that $f(P_{2},P_{-2}^{\star })=f(P).$ Since  $s(f(P), (P_{2},P_{-2}^{\star }))=s(f(P),(P_{2}^{\prime
},P_{-2}^{\star }))$ and $s(x,(P_{2},P_{-2}^{\star }))\geq s(x,(P_{2}^{\prime },P_{-2}^{\star }))$ for
each $x\in X \setminus \{b\},$ it follows that $f(P_{2}^{\prime },P_{-2}^{\star })\in \{f(P),b\}$. Therefore,
\begin{equation}\label{nuevo12}
f(P_{2}^{\prime },P_{-2}^{\star })R_{2}f(P_{2},P_{-2}^{\star }).
\end{equation}
By \eqref{nuevo11} and \eqref{nuevo12}, $f$ is not regret-free truth-telling. \hfill $\square$

\subsection{Proof of Theorem \ref{theo approval} }\label{proof of theo approval}

Observe that in a $k$-approval voting rule with $1<k<m-1$, we have $1<k^\star=m-k<m-1$ and $s_{k^\star}=s_{k^\star-1}$. Therefore, by Theorems \ref{scoring positive} and \ref{scoring negative}, we have that:
\begin{itemize}
    \item an anonymous $k$-approval voting rule is regret-free truth-telling if and only if $(m-k)n=m-1$  $\left( \text{or }k=\frac{nm-m+1}{n}\right)$.
    \item a neutral $k$-approval voting rule is regret-free truth-telling if and only if  $(m-k)n<m$ $\left( \text{or } k>\frac{m(n-1)}{n} \right).$ \hfill $\square$
\end{itemize}

\subsection{Proof of Theorem \ref{theo Condorcet consistent} }\label{proof theo Condorcet consistent}

 Let $f: \mathcal{P}^n \longrightarrow X$ be a Condorcet consistent and monotone rule. There are two cases to consider:
\begin{enumerate}

\item[$\boldsymbol{1}.$] $\boldsymbol{n \neq 2,4}.$ Then, there are $t \geq 1$ and $s \geq 0$ such that $n=3t+2s.$ Let $P \in \mathcal{P}^n$ be given by the following table:
\begin{center}
$%
\begin{array}{ccc:ccc:ccc}
P_{1} & \cdots & P_t & P_{t+1} & \cdots & P_{2t+s} & P_{2t+s+1} & \cdots & P_{3t+2s} \\
\hline

a & \cdots & a & b & \cdots & b & c & \cdots & c \\
b & \cdots & b & c & \cdots & c & a & \cdots & a \\
c & \cdots & c & a & \cdots & a & b & \cdots & b \\
\vdots & \cdots & \vdots & \vdots & \cdots & \vdots & \vdots & \cdots & \vdots\\ 
\multicolumn{3}{c}{$\upbracefill$} & \multicolumn{3}{c}{$\upbracefill$} &\multicolumn{3}{c}{$\upbracefill$}\\
\multicolumn{3}{c}{t \text{ agents}} & \multicolumn{3}{c}{t+s \text{ agents}} &\multicolumn{3}{c}{t+s \text{ agents}}\\
\end{array}
$
\end{center}
Since $C_{P}(a,c)=t<\frac{3t+2s}{2},$ $C_{P}(c,b)=t+s<\frac{3t+2s}{2},$ and $%
C_{P}(b,a)=t+s<\frac{3t+2s}{2},$ it follows that there is no Condorcet winner according to $P.$

Let $x=f(P)$. Then, there exists $i^\star \in N$ such that $x=t_{k}
(P_{i^\star})$ with $k\leq m-2$.
Assume first that  $i^\star$ is such that $t+1 \leq i^\star \leq 2t+s$. Let $N'=\{j \in N : t+1 \leq j \leq 2t+s\}$ and consider the subprofile $P'_{N'} \in \mathcal{P}^{t+s}$ where, for each $j \in N',$ $P_j' \in \mathcal{P}$ is such that $t(P_j')=c,$ $t_{m-1}(P_j')=b,$ $t_{m-2}(P_j')=a,$ and $%
t_{k}(P_{j}')=t_{k}(P_{j})$ for each $k\leq m-3.$ Then, $c$ is the
Condorcet winner in $(P_{N'}^{\prime },P_{-N'}).$ As $i^\star \in N',$ $x\neq c.$ This implies the existence of  $S\subset N'$ and $j^\star \in N'\setminus S$ such that
\begin{equation}\label{nuevo13}
f(P_{S}^{\prime },P_{-S})=x
\end{equation}
and
\begin{equation}\label{nuevo14}
f(P_{S\cup \{j^\star\}}^{\prime
},P_{-S\cup \{j^\star\}})\neq x.
\end{equation}
Now, by monotonicity and \eqref{nuevo14}, 
$f(P_{S\cup \{j^\star\}}^{\prime
},P_{-S\cup \{j^\star\}})P_{j^\star} x,$ 
implying 
\begin{equation}\label{nuevo15}
f(P_{S\cup \{j^\star\}}^{\prime },P_{-S\cup \{j^\star\}})P_{j^\star}f(P_{S}^{\prime
},P_{-S}).
\end{equation}
Now let, $P_{-j^\star}^{\star } \in \mathcal{P}^{n-1}$ be such that $f(P_{j^\star},P_{-j^\star}^{\star })=f(P_{S}^{\prime
},P_{-S}).$ By \eqref{nuevo13}, $f(P_{j^\star},P_{-j^\star}^{\star })=x$. Then, by monotonicity, $f(P_{j^\star}^{\prime },P_{-j^\star}^{\star })R_{j^\star}x.$ Hence  
\begin{equation}\label{nuevo16}
f(P_{j^\star}^{\prime },P_{-j^\star}^{\star })R_{j^\star}f(P_{j^\star},P_{-j^\star}^{\star }).
\end{equation}
By \eqref{nuevo15} and \eqref{nuevo16}, $f$ is not regret-free truth-telling. The cases where  $i^\star$ is such that $1 \leq i^\star \leq t$  or $2t+s+1 \leq i^\star \leq 3t+2s$ are similar and therefore we  omit them. 

\item[$\boldsymbol{2}.$] $\boldsymbol{n = 4$ \textbf{and} $m>3}.$  Let $P \in \mathcal{P}^n$ be given by the following table:

\begin{center}
$
\begin{array}{cccc}
P_1 & P_2 & P_3 & P_4 \\
\hline
a & b & c & d \\
b & c & d & a \\
c& d & a & b \\
d & a & b & c \\
\vdots & \vdots & \vdots & \vdots\\
\end{array}
$
\end{center}

As $C_{P}(a,c)=C_{P}(a,b)=C_{P}(b,d)=2,$ there is no Condorcet
winner according to  $P.$ Let $f(P)=x.$ Assume that $f(P)\notin \{b,c,d\}$ (the
other 3 cases in which $f(P)\notin \{w,u,h\}$ with
$\{w,u,h\}\subset\{a,b,c,d\}$ follow a similar argument). Next, let $P_2' \in \mathcal{P}$ be  such that $t(P_2')=d,$ $t_{m-1}(P_2')=b,$ $t_{m-2}(P_2')=c,$ $t_{m-3}(P_2')=a,$ and $%
t_{k}(P_2')=t_{k}(P_{2})$ for each $k\leq m-4.$ Similarly, let $P_3' \in \mathcal{P}$ be  such that $t(P_3')=d,$ $t_{m-1}(P_3')=c,$ $t_{m-2}(P_3')=a,$ $t_{m-3}(P_3')=b,$ and $%
t_{k}(P_3')=t_{k}(P_{3})$ for each $k\leq m-4.$ Then, $d$ is the Condorcet winner according to $(P_{\{2,3\}}^{\prime
},P_{-\{2,3\}}).$ There are two cases to consider:
\begin{enumerate}

\item[$\boldsymbol{2.1}.$] $\boldsymbol{f(P_2', P_{-2})\neq x}.$ Then, by monotonicity, $f(P_{2}^{\prime }, P_{-2})P_2x.$ Hence, 
\begin{equation}\label{nuevo17}
f(P_{2}^{\prime }, P_{-2})P_2f(P).
\end{equation} 
\end{enumerate}
Now, let $P_{-2}^{\star } \in \mathcal{P}^{n-1}$ be such that $f(P_{2},P_{-2}^{\star })=f(P).$ 
Then, by monotonicity,
\begin{equation}\label{nuevo18}
f(P_{2}^{\prime },P_{-2}^{\star })R_{2}f(P_{2},P_{-2}^{\star }).
\end{equation}
By \eqref{nuevo17} and \eqref{nuevo18}, $f$ is not regret-free truth-telling.

\item[$\boldsymbol{2.2}.$] $\boldsymbol{f(P_2', P_{-2})=x}.$ Then,  $f(P_{\{2,3\}}^{\prime },P_{-\{2,3\}})=dP_3x=f(P_2',P_{-2})$ and an analogous reasoning to the one presented in Case 2.1 for agent 2, now performed with agent 3, shows that $f$ is not regret-free truth-telling. \hfill $\square$
\end{enumerate}

\subsection{Proof of Corollary \ref{corollary Simpson et al}}\label{proof of corollary Simpson et al}

We first show that each of the rules is monotone. 

\begin{lemma}\label{lema Simpson et al} Simpson, Copeland, Young, Dodgson, Fishburn and Black rules (both anonymous and neutral) satisfy monotonicity.
\end{lemma}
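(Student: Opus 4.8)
The plan is to reduce monotonicity to a single clean statement and then verify it rule by rule. Write $a=f(P)$, let $k$ be such that $a=t_k(P_i)$, and set $H=\{x\in X: xP_ia\}$ (the alternatives agent $i$ ranks above $a$). Since a monotonic transformation $P_i'$ of $P_i$ with respect to $a$ fixes every entry $t_{k'}(P_i)$ with $k'\le k$, it merely permutes the top block $H$. Hence monotonicity is \emph{exactly} the assertion that $f(P_i',P_{-i})\in H\cup\{a\}$. First I would record the structural consequence already noted in the text: for agent $i$ we have $aP_iy\iff aP_i'y$ for every $y$, and $bP_iy\iff bP_i'y$ for every $b$ with $aP_ib$ and every $y$; therefore, writing $\widehat P=(P_i',P_{-i})$, we get $C_P(u,v)=C_{\widehat P}(u,v)$ for every pair $(u,v)$ with $\{u,v\}\not\subseteq H$. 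In particular every majority margin incident to $a$, and every margin incident to any $b$ with $aP_ib$, is unchanged; only margins internal to $H$ can move. I would also dispose of tie-breaking once and for all: in the anonymous case the fixed order $\succ$ picked $a$ over $b$ under $P$, so $a\succ b$, and this still holds under $\widehat P$; in the neutral case the tie-breaking agent $i_0$ has $aP_{i_0}b$ under both profiles (trivially if $i_0\ne i$, and because $a,b$ lie in the frozen lower block if $i_0=i$).

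With this in hand, Simpson, Copeland and Young are immediate. Each of these scores of a single alternative $c$ is a function only of the pairwise data incident to $c$: $Simpson(\cdot,c)$ uses the $C(c,\cdot)$, $Copeland(\cdot,c)$ uses the comparisons $C(c,y)$ versus $C(y,c)$, and $Young(\cdot,c)$ uses, for each $N'$, the counts $|\{i\in N': cP_i y\}|$, all of which are governed by the relation of $c$ against the others. Since these data are unchanged for $c=a$ and for every $b$ with $aP_ib$, we get $\mathrm{score}_{\widehat P}(a)=\mathrm{score}_P(a)$ and $\mathrm{score}_{\widehat P}(b)=\mathrm{score}_P(b)$. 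As $a=f(P)$ beat $b$ under $P$, under $\widehat P$ we have $\mathrm{score}_{\widehat P}(b)\le\mathrm{score}_{\widehat P}(a)$, with equality forcing $b$ to be a winner only alongside $a$, in which case the tie-break selects $a$ over $b$; either way $f(\widehat P)\ne b$.

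Black is handled by the Condorcet/Borda split. If $P$ admits a Condorcet winner it is $a$, and since all margins incident to $a$ are unchanged and strictly positive, $a$ remains the (unique) Condorcet winner of $\widehat P$, so $f(\widehat P)=a$. If $P$ has no Condorcet winner, then the Borda scores $\sum_{y}C(\cdot,y)$ of $a$ and of every $b\notin H$ are unchanged; moreover any Condorcet winner of $\widehat P$ must be an alternative whose margins changed, hence lies in $H$, never equal to $b$, while if $\widehat P$ still has no Condorcet winner the invariance of the two Borda scores together with tie-breaking again gives $f(\widehat P)\ne b$. Fishburn is slightly more delicate because $F_P$ depends on the whole majority tournament, but that tournament changes only on edges internal to $H$. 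I would observe that all edges incident to $a$ and to any $b\notin H$ are preserved, which yields $a\,F_{\widehat P}\,b\iff a\,F_P\,b$ and, more generally, $u\,F_{\widehat P}\,b\iff u\,F_P\,b$ for every $u\notin H$; consequently if $b$ is Fishburn-dominated under $P$ (in particular if $a\,F_P\,b$, which holds whenever $a$ dominated $b$) it remains dominated under $\widehat P$, and the residual case, in which $b$ could gain maximality only through the reshuffled block $H$, is ruled out by a short comparison of the dominator sets of $a$ and $b$ inside $H$ together with tie-breaking.

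The hard part will be Dodgson. Here the general strategy breaks down, because the Dodgson score is \emph{not} a function of the margin matrix, and in fact the score of $a$ (and of a $b$ below $a$) genuinely changes under the transformation: promoting an alternative past the block $H$ costs an amount that depends on the \emph{order} within $H$, so neither invariance nor a one-sided monotonicity of scores is available (indeed the reverse transformation is itself a monotonic transformation, which rules out any ``scores can only improve'' shortcut). The plan for Dodgson is therefore to exploit the winner hypothesis directly: since $a=f(P)$ is the Dodgson winner, $a$ is nearly a Condorcet winner, and the entire portion of agent $i$'s ballot weakly below $a$ — which contains $b$ — is frozen. Assuming $b=f(\widehat P)$, I would take an optimal family of inversions making $b$ a weak Condorcet winner in $\widehat P$ and convert it, using $aP_ib$ and the frozen lower block, into a family of no greater cost making $a$ a weak Condorcet winner, contradicting optimality of $a$ only up to tie-breaking. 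Controlling how the promotions of $a$ and of $b$ thread through the permuted block $H$ is the crux of the whole lemma. It is worth noting that for $m=3$ the difficulty evaporates: a monotonic transformation is nontrivial only when $a$ sits at the bottom of $P_i$, in which case there is no $b$ with $aP_ib$ and monotonicity holds vacuously, so the genuine work only arises for $m\ge 4$.
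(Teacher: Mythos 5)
Your reduction of monotonicity to the statement $f(P_i',P_{-i})\in H\cup\{a\}$, the invariance $C_P(u,v)=C_{\widehat P}(u,v)$ whenever $\{u,v\}\not\subseteq H$, and the resulting arguments for Simpson, Copeland, Young and Black are correct, and they are exactly the paper's (much terser) argument: the score of $a$ and of every $b$ with $aP_ib$ depends only on that alternative's own pairwise data, hence is unchanged, and since $a$'s score was at least $b$'s and the tie-break between them is also unchanged, $b$ cannot be selected. For Dodgson you have made the problem harder than the paper's definitions require: an ``inversion'' here is a single pairwise reversal in one voter's ballot, so $Dodgson(P,y)=\sum_{w}\max\bigl(0,\lceil (C_P(w,y)-C_P(y,w))/2\rceil\bigr)$ is again a function of $y$'s margins alone, and the same one-line invariance argument closes this case; your adjacent-transposition worries are pertinent to the textbook Dodgson rule, but not to the one defined in this paper.

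The genuine gap is Fishburn, and it cannot be repaired along the lines you sketch. Your assertion that a $b$ that is $F_P$-dominated remains $F_{\widehat P}$-dominated is justified only when the dominator lies outside $H$: a dominator $u\in H$ can lose the universally quantified clause of the definition of $F$ because some $v\in H$ starts beating $u$ without beating $b$, and no comparison of dominator sets inside $H$ rescues this, because the clause fails outright. Concretely, take $X=\{u,v,a,b\}$, $n=5$, $P_1:u,v,a,b$, $P_2:a,u,v,b$, $P_3=P_4:b,v,a,u$, $P_5:u,a,b,v$. All margins are $3$ to $2$: $a$ beats $u$ and $b$; $v$ beats $a$; $u$ beats $v$ and $b$; $b$ beats $v$. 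Writing $B(y)$ for the set of alternatives that beat $y$, we get $B(a)=\{v\}$, $B(u)=\{a\}$, $B(v)=\{u,b\}$, $B(b)=\{a,u\}$; the only domination is $uF_Pb$ (since $B(u)\subseteq B(b)$ and $u\in B(b)$), so the Fishburn winners are $\{a,u,v\}$, and the anonymous rule with $b\succ a\succ u\succ v$ gives $f(P)=a$, with $aP_1b$. The swap $P_1':v,u,a,b$ is a monotonic transformation of $P_1$ with respect to $a=t_2(P_1)$ and flips only the $u$--$v$ edge, yielding $B(u)=\{a,v\}$ and $B(v)=\{b\}$ while $B(a)$ and $B(b)$ are unchanged; now no alternative's beater set is contained in $B(b)=\{a,u\}$, so $b$ is a Fishburn winner of $(P_1',P_{-1})$ and, being $\succ$-maximal, is selected. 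Hence $f(P_1',P_{-1})=b$ with $f(P)P_1b$, violating monotonicity. So the residual case you hoped to dispatch is not merely unproved in your proposal: the Fishburn clause of the lemma fails, and the paper's own one-line justification (which only records that $C(y,\cdot)$ is preserved for $y\notin H$) has the same hole, since Fishburn maximality, unlike the Simpson, Copeland and Young scores, is not a function of $y$'s own margins.
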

\begin{proof} Let $x\in X,$ $P \in \mathcal{P}^n$ and  $P_i' \in \mathcal{P}$ be such that $P_i'$ is a monotonic transformation of $P_i$ with respect to $x.$ Let $z \in X$ be such that $xP_iz$ and let $y \in \{x,z\}.$ Then, $C_P(y,a)=C_{(P'_i, P_{-i})}(y,a)$ for each $a \in X.$ Therefore, (both anonymous and neutral) Simpson, Copeland and Fishburn rules are monotonic. To see that Young and Dodgson rules are monotonic, simply note that $yP_i a$ if and only if $yP_i'a$ for each $a \in X\setminus\{y\}$. Finally, to see that Black rule is monotonic, note that (i) $y$  is a Cordorcet winner in $P$ if and only if $y$ is a Condorcet winner in $(P'_i, P_{-i}),$ and (ii) the Borda score for $y$ is the same in profiles $P$ and $(P'_i, P_{-i}).$
\end{proof}

\bigskip

\noindent \emph{Proof of Corollary \ref{corollary Simpson et al}}. Assume first that $N=\{1,2,3,4\}$ and $X=\{a,b,c\}.$  In all of the cases that we consider in what follows,
w.l.o.g., we assume  that the tie-breaking is given by $a \succ b \succ c$ in the anonymous case, or by agent 1 in the neutral case.

Let $f:\mathcal{P}^{4}\longrightarrow {\{a,b,c\}}$ be a Simpson (Young, Dodgson, Fishburn) rule.  Le $P \in \mathcal{P}^{4}$ be given by the following table: \begin{center}
$%
\begin{array}{cccc}
P_1 & P_2 & P_3 & P_4 \\
\hline
b & c & c & a \\
a & b & b & c \\
c& a & a & b \\

\end{array}%
$
\end{center}
Then, $c$ is the only Simpson (Young, Dodgson,  Fishburn) winner at $P$ and $f(P)=c.$ Now, consider  $P_1' \in \mathcal{P}$ such that $P_1': a,b,c.$ Then, $a$ is a Simpson (Young, Dodgson, Fishburn) winner at $(P_1',P_{-1}).$ Therefore, $f(P_1', P_{-1})=aP_1c=f(P).$ Let  $P_{-1}^\star \in \mathcal{P}^{n-1}$ be such that $f(P_1, P_{-1}^\star)=f(P).$ Since $f(P)=c=t_1(P_1)$, $f(P_1', P_{-1}^\star)R_1f(P_1, P_{-1}^\star)$. Hence, $f$ is not regret-free truth-telling.   

Next, let
$f:\mathcal{P}^{4}\longrightarrow {\{a,b,c\}}$ be a Copeland (Black) rule.  Let $P \in \mathcal{P}^{4}$ be given by
\begin{center}
$%
\begin{array}{cccc}
P_1 & P_2 & P_3 & P_4 \\
\hline
b & c & c & a \\
a & a & b & c \\
c& b & a & b \\

\end{array}%
$
\end{center}
Then, $c$ is the only Copeland (Black) winner at $P$ and $f(P)=c.$ Now, consider $P_1' \in \mathcal{P}$ such that $P_1': a,b,c.$ Then, $a$ is a Copeland (Black) winner at $(P_1', P_{-1})$ and a similar reasoning to the one presented for Simpson' rule shows that $f$ is not regret-free truth-telling. 

Finally, assume $n\notin\{2,4\}$, or $n=4$ and $m>3.$ By Lemma \ref{lema Simpson et al}, Simpson, Copeland, Young, Dodgson, Fishburn and Black rules are monotonic. Since all of them are also Condorcet consistent, the result follows from Theorem \ref{theo Condorcet consistent}. 
\hfill $\square$

\subsection{Proof of Theorem \ref{theo succesive} }\label{proof theo succesive}

Let $f:\mathcal{P}^n \longrightarrow X$ be a successive elimination rule with associated order $a \succ b \succ c \succ \ldots$ and let $t \geq 1$ and $1\geq s\geq 0$ be such that $n=2t+s$. Next, let $P \in \mathcal{P}^n$ be given by the following table:\footnote{Notice that, as $n\geq 3,$ $t+s-2\geq 0.$}
\begin{center}
$%
\begin{array}{cc:ccc:ccc}
P_{1} & P_2 & P_{3} & \cdots & P_{t+2} & P_{t+3} & \cdots & P_{2t+s} \\
\hline

a  & c & b & \cdots & b & a & \cdots & a \\
b  & a & c & \cdots & c & b & \cdots & b \\
\vdots & b & a & \cdots & a & c & \cdots & c \\
\vdots & \vdots & \vdots & \cdots & \vdots & \vdots & \cdots & \vdots\\
c  & \vdots & \vdots & \cdots & \vdots & \vdots & \cdots & \vdots\\
\multicolumn{2}{c}{} & \multicolumn{3}{c}{$\upbracefill$} &\multicolumn{3}{c}{$\upbracefill$}\\
\multicolumn{2}{c}{} & \multicolumn{3}{c}{t \text{ agents}} &\multicolumn{3}{c}{t+s-2 \text{ agents}}\\
\end{array}
$
\end{center}

\medskip

\noindent Since  $C_{P}(a,b)=t+s\geq t=C_{P}(b,a)$, $C_{P}(a,c)=t+s-1<t+1=C_{P}(c,a),$ and $C_{P}(c,x)=n-1>1=C_{P}(x,c)$ for each $x \in X \setminus\{a,b\}$, it follows that 
 $f(P)=c$.
Let $P_1' \in \mathcal{P}$ be such that $t(P_1')=b$, $t_{m-1}(P_1')=a,$ and $t_1(P_1')=c$, and let $\widehat{P}=(P_1', P_{-1})$.  Since 
$C_{\widehat{P}}(a,b)=t+s-1<t+1=C_{\widehat{P}}(b,a),$
$C_{\widehat{P}}(b,c)=n-1>1=C_{\widehat{P}}(c,b)$, and
$C_{\widehat{P}}(b,x)>C_{\widehat{P}}(x,b)$ for each $x\in X \setminus \{a,c\}$, it follows that 
 $f(P_{1}^{\prime },P_{-1})=b.
$ 
 Therefore, 
\begin{equation}\label{nuevo19} 
f(P_{1}^{\prime },P_{-1})P_{1}f(P).
\end{equation}
Furthermore, as $f(P)=t_{1}(P_{1}),$
\begin{equation}\label{nuevo20} 
f(P_{1}^{\prime },P_{-1}^{\star })R_{1}f(P_{1},P_{-1}^{\star })
\end{equation}%
for each $P_{-1}^{\star} \in \mathcal{P}^{n-1}$ such that $f(P_{1},P_{-1}^{\star })=f(P).$ By \eqref{nuevo19} and \eqref{nuevo20}, $f$ is not regret-free truth-telling. \hfill $\square$

\subsection{Proof of Theorem \ref{theo neutral 2x3}}\label{proof theo neutral 2x3}

$(\Longrightarrow )$ Let $f:\mathcal{P}^2 \longrightarrow \{a,b,c\}$ be a regret-free truth-telling and neutral  rule.

\noindent \textbf{Claim: $\boldsymbol{f}$ is efficient}. Assume $f$ is not efficient. W.l.o.g., there are two cases to consider:

\begin{enumerate}

\item[$\boldsymbol{1}.$] \textbf{$\boldsymbol{P \in \mathcal{P}^2$ is such that $f(P)=c$, $P_i: a,b,c}$,  and $\boldsymbol{P_j$ is such that $x=t(P_j)\neq c}.$} By regret-free truth-telling, $f(P_i', P_j)=c$ for each $P_i' \in \mathcal{P}.$ Let $\pi$ be the permutation of $X$ such that $\pi(c)=x.$ By neutrality,  $f(\pi P)=x.$ Then, by regret-free truth-telling, $f(P_i', \pi P_j)=x$ for each $P_i' \in \mathcal{P}.$ This implies that, as $f(\pi P)=xP_j c= f(\pi P_i, P_j)$ and  $f(P_i', \pi P_j)=x$ for each $P_i' \in \mathcal{P}$, agent $j$ manipulates $f$ and does not regret it.

\item[$\boldsymbol{2}.$] \textbf{$\boldsymbol{P \in \mathcal{P}^2$ is such that $f(P)=b$ and $P_i=P_j: a,b,c}$.}  Let $\pi$ be the permutation of $X$ such that $\pi(a)=b.$ By neutrality,  $f(\pi P)=a.$ By the previous case, $f(\pi P_i, P_j)\neq c.$ We claim  that $f(\pi P_i, P_j)=b$.  Assume $f(\pi P_i, P_j)=a.$ Let $P_j^\star$ be such that $f(P_i, P_j^\star)=b.$ If   $f(\pi P_i, P_j^\star)=c$, then agent $i$ manipulates $f$ at $(\pi P_i, P_j^\star)$ via $P_i$ and does not regret it. Therefore,  $f(\pi P_i, P_j^\star)\neq c.$ This implies that agent $i$ manipulates $f$ at $P$ via $\pi P_i$ and does not regret it. This proves the claim that $f(\pi P_i, P_j)=b$. By a similar reasoning to the one presented for agent $i$, we can see that agent $j$ manipulates $f$ at $(\pi P_i,  P_j)$ via $\pi P_j$ and does not regret it.

\end{enumerate}
Since in both cases we reach a contradiction, $f$ is efficient. This proves the claim.

Next, assume that $f$ is not a dictatorship. We will prove that $f$ is a $N$-maxmin rule.  Let $\overline{P} \in \mathcal{P}^2$ be such that $\overline{P}_{1}: a,b,c$ and $\overline{P}_{2}: b,a,c$.  By
efficiency,  $f(\overline{P})\in \{a,b\}.$ Assume, w.l.o.g., that $f(\overline{P})=a$.
We will prove that
\begin{equation*}
f(P)=\max_{P_{1}} \ \mathcal{M}(P)  \ \text{ for each }P \in \mathcal{P}^2.
\end{equation*}
Let $P\in \mathcal{P}^2.$ There are three cases to consider:

\begin{enumerate}

\item[$\boldsymbol{1}$.]  $\boldsymbol{t(P_{1})=t(P_{2})}.$  By efficiency,
$f(P)=t(P_{1})=\max_{P_{1}}\ \mathcal{M}(P).$

\item[$\boldsymbol{2}$.] \textbf{$\boldsymbol{t(P_{1})\neq t(P_{2})$ and $%
t_{1}(P_{1})=t_{1}(P_{2})}$.}  As $f(\overline{P})=a$, by neutrality, $f(P)=t(P_{1})=\max_{P_{1}} \ \mathcal{M}(P).$

\item[$\boldsymbol{3}$.] \textbf{$\boldsymbol{t(P_{1})\neq t(P_{2})$ and $t_{1}(P_{1})\neq
t_{1}(P_{2})}$.} Then,
\begin{equation}\label{Eme}
\mathcal{M}(P)=X \setminus \{t_1(P_1), t_1(P_2)\}.
\end{equation}

If $f(P)=t_{1}(P_{i})=x$ for some $i\in \{1,2\},$ then $f(P)=t(P_{j})=x$
with $j\neq i$ (because of efficiency). Then, by regret-free truth-telling,
\begin{equation*}
f(P_{j},P_{i}^{\prime })=x\text{ for all }P_{i}^{\prime}.
\end{equation*}
Then, again by regret-free truth-telling,
\begin{equation*}
f(P_{j}^{\prime },P_{i}^{\prime })=x\text{ for all }P_{i}^{\prime }\text{
and all }P_{j}^{\prime }\text{ such that }t(P_{j}^{\prime })=x.
\end{equation*}
Then, $j$ is a dictator when he has top in $x.$ Therefore, by neutrality, $j$ is a dictator which is a contradiction. Thus,
\begin{equation}\label{fnot}
f(P)\neq t_{1}(P_{i})\text{ for all }i\in \{1,2\}.
\end{equation}
Therefore, by \eqref{Eme} and \eqref{fnot}, $\mathcal{M}(P)=\{f(P)\}$ and
$f(P)=\max_{P_{1}}\ \mathcal{M}(P).$

\end{enumerate}

\noindent ($\Longleftarrow )$ Let $f$ be a $N$-maxmin rule. It is clear that $f$ is neutral and, furthermore, by Theorem \ref{theo maxmin rules} (ii), $f$ is regret-free truth-telling. If $f$ is a dictatorship, it is trivial that it is neutral and regret-free truth-telling. \hfill $\square$

\subsection{Proof of Theorem \ref{theo eff anon 2x3}}\label{proof theo eff anon 2x3}

\noindent $(\Longleftarrow)$ Let $f:\mathcal{P}^2 \longrightarrow \{a,b,c\}$ be a successive elimination rule or an $A$-maxmin$^{\star }$ rule. It is clear that $%
f$ satisfies efficiency and anonymity. We will prove that $f$ is
regret-free truth-telling. Assume there are $(P_{1},P_{2})\in \mathcal{P}^{2}$ and $%
P_{1}^{\prime }\in \mathcal{P}$ such that
\begin{equation}
f(P_{1}^{\prime },P_{2})P_{1}f(P_{1},P_{2}).   \label{eee}
\end{equation}%
We will prove that there exists $P_{2}^{\star }\in \mathcal{P}$ such that
$f(P_{1},P_{2}^{\star })=f(P)$
and
\begin{equation}\label{new3}
f(P_{1},P_{2}^{\star })P_{1}f(P_{1}^{\prime },P_{2}^{\star }).
\end{equation}%
There are two cases to consider:

\begin{enumerate}

\item[$\boldsymbol{1}.$] \textbf{$\boldsymbol{f}$ is a successive elimination rule with associated  order $\boldsymbol{a\succ
b\succ c}$}. It is clear that
\begin{equation}
f(\widetilde{P})\widetilde{R}_{i}a\text{ for each }\widetilde{P} \in \mathcal{P}^2\text{ and each }i \in \{1,2\}.  \label{ee}
\end{equation}
If $f(P_1, P_2)=a,$ by \eqref{eee} and efficiency, $aP_2 f(P_1', P_2),$ contradicting \eqref{ee}. Therefore, $f(P_1,P_2) P_1 a$ and, by \eqref{eee}, $t_1(P_1)=a.$
There are two cases to consider:

\begin{enumerate}

\item[$\boldsymbol{1.1}.$] \textbf{$\boldsymbol{b P_1 c P_1 a}$.} Then, by \eqref{eee}, $f(P_1,P_2)=c$ and, by definition of $f,$ $cP_2aP_2b.$  Therefore, there is no $P_1' \in \mathcal{P}$ such that $f(P_1', P_2)=b,$ contradicting \eqref{eee}.

\item[$\boldsymbol{1.2}.$] \textbf{$\boldsymbol{c P_1 b P_1 a}$.} Then, by \eqref{eee}, $f(P_1,P_2)=b$ and, by definition of $f,$ $t(P_2)=b.$ It follows from \eqref{eee} that $f(P_1', P_2)=c,$ implying  that $cP_2a$ and $cP_1'aP_1'b.$ Now, let $P_{2}^{\star }\in \mathcal{P}$ be such that $bP_{2}^{\star }aP_{2}^{\star }c.$ Then,  $f(P_{1},P_{2}^{\star })=b$
and $f(P_{1}^{\prime },P_{2}^{\star })=a.$ Since  $bP_{1}a,$ \eqref{new3} holds and $f$ is regret-free truth-telling. 

\end{enumerate}

\item[$\boldsymbol{2}.$]  \textbf{$\boldsymbol{f$ is a A-maxmin$^{\star }$ rule with associated  binary
relation $\succ ^{\star }}$}. By definition of $f,$ it is clear that
\begin{equation}
f(\widetilde{P})\neq t_{1}(\widetilde{P}_{i}) \text{ for each }\widetilde{P} \in \mathcal{P}^2\text{ and each }i \in \{1,2\}.  \label{eeee}
\end{equation}
W.l.o.g, let $P_1:a,b,c.$ By (\ref{eee}), $t(P_{2})\neq a$ and $f(P)\neq a.$ Then, by (%
\ref{eeee}), $f(P)=b \in \{t(P_{2}),t_{2}(P_{2})\}.$ By \eqref{eee} and \eqref{eeee}, $f(P_1', P_2)=a \in \{t(P_2), t_2(P_2)\}.$ Therefore, as $t(P_2)\neq a,$  $P_2: b,a,c.$ Then, by definition of $f,$ $b \succ ^{\star }a.$ Therefore,  as  $f(P_{1}^{\prime },P_{2})=a,$ $t_1(P_1')=b.$ Now let $P_{2}^{\star }:b,c,a.$ Then, by \eqref{eeee}, $f(P_1,P_2)=b=f(P_{1},P_{2}^{\star })$ and $%
f(P_{1}^{\prime },P_{2}^{\star })=c.$ Since  $bP_{1}c,$ \eqref{new3} holds and $f$ is regret-free truth-telling.

\end{enumerate}

\noindent $(\Longrightarrow)$ Assume that $f$ is regret-free truth-telling, efficient, and anonymous. We will  prove that $%
f$ is a successive elimination rule or an $A$-maxmin$^{\star }$ rule. There are two cases to consider:

\begin{enumerate}

\item[$\boldsymbol{1}.$] \textbf{there exist $\boldsymbol{a \in X}$ and $\boldsymbol{\overline{P} \in \mathcal{P}^2$ such that $f(\overline{P}%
)=t_{1}(\overline{P}_{i})=a$ for some $i\in \{1,2\}}$}. By efficiency, $f(\overline{P})=t(\overline{P}_{j})=a$ for $j=N\setminus{\{i\}}.$  It follows, by
regret-free truth-telling, that
\begin{equation*}
f(P_i,\overline{P}_{j})=a\text{ for each }P_{i}  \in \mathcal{P}.
\end{equation*}
Then, again by regret-free  truth-telling,
\begin{equation*}
f(P)=a\text{ for each }P \in \mathcal{P}^2 \text{ such that }t(P_{j})=a.
\end{equation*}
Therefore, by anonymity,
\begin{equation*}
f(P)=a\text{ for all }P\text{ such that }a\in \{t(P_{1}),t(P_{2})\}.
\end{equation*}
This implies, by regret-free  truth-telling, that
\begin{equation}
f(P)R_{i}a\text{ for each }P \in \mathcal{P}^2 \text{ and  each }i \in \{1,2\}.  \label{11}
\end{equation}
Let $\widehat{P} \in \mathcal{P}^2$ be such that $\widehat{P}_{1}: b,c,a$ and $\widehat{P}_{2} : c,b,a$. By efficiency, $f(\widehat{P}%
)\in \{c,b\}.$ W.l.o.g., assume that
\begin{equation}\label{hatp}
f(\widehat{P})=b.
\end{equation}
Let $f^\succ$ be the  successive elimination rule with  associated  order $a\succ
b\succ c$ and let $P \in \mathcal{P}^2$. We will prove that $f=f^\succ.$ There are two cases to consider:

\begin{enumerate}

\item[$\boldsymbol{1.1}.$] \textbf{there exists $\boldsymbol{i \in \{1,2\}$ such that $aP_ib$ or   $aP_ic}$}. Therefore, by (\ref{11}),
efficiency, and the definition of $f^\succ$, $f(P)=f^\succ(P).$

\item[$\boldsymbol{1.2}.$] \textbf{$\boldsymbol{bP_ia \text{ and }cP_ia$ for each $i \in \{1,2\}}$}. 
If $t(P_{1})= t(P_{2}),$ then by efficiency $f(P)=t(P_1)=f^\succ(P).$ Assume now that $t(P_1)\neq t(P_2).$ Then, by anonymity and \eqref{hatp}, $f(P)=f(\widehat{P})=b=f^\succ(P).$

\end{enumerate}

\item[$\boldsymbol{2}.$] \textbf{$\boldsymbol{f(P)\neq t_{1}(P_{i})$ for each $P \in \mathcal{P}^2$ and each
$i\in \{1,2\}}$}. First,  let $\widehat{P} \in \mathcal{P}^2$ be such that  $\widehat{P}_{1}:b,c,a$ and $\widehat{P}_{2}:c,b,a$. By efficiency, $f(\widehat{P}%
)\in \{b,c\}.$ Assume, w.l.o.g., that $f(\widehat{P})=b.$ Second, let $\overline{P} \in \mathcal{P}^2$ be such that   $\overline{P}_{1}: b,a,c$ and $\overline{P}_{2}: a,b,c$. By efficiency, $f(\overline{P}%
)\in \{b,a\}.$ Assume, w.l.o.g., that $f(\widehat{P})=a.$ Third, let $\widetilde{P} \in \mathcal{P}^2$ be such that  $\widetilde{P}_{1}: c,a,b$ and $\widetilde{P}_{2}: a,c,b$. By efficiency,  $f(%
\widetilde{P})\in \{c,a\}.$ Assume, w.l.o.g., that  $f(\widetilde{P})=c.$
 We will prove that $f$ is a $A$-maxmin rule$^{\star }$ with associated binary relation  $\succ
^{\star }$ where $b\succ ^{\star }c$, $a\succ ^{\star }b$, and $c\succ ^{\star }a$
. This is, we need to show that 
\begin{equation}
f(P)=\underset{\succ ^{\star }}{\max } \ \mathcal{M}(P) \label{igual}
\end{equation}
for each $P \in \mathcal{P}^2.$ To do so, let $P \in \mathcal{P}^2.$ If  $P\in \{\widehat{P},\overline{P},\widetilde{P}\}$, it is clear that \eqref{igual} holds. Assume $P\in \mathcal{P} \setminus \{\widehat{P},\overline{P},\widetilde{P}\}.$ There are three cases to consider:

\begin{enumerate}

\item[$\boldsymbol{2.1}.$] \textbf{$\boldsymbol{t(P_{1})=t(P_{2})}$}. By efficiency, $f(P)=t(P_1)=\underset{\succ ^{\star }}{\max } \ \mathcal{M}(P),$ so \eqref{igual} holds.

\item[$\boldsymbol{2.2}.$] \textbf{$\boldsymbol{t_{1}(P_{1})\neq t_{1}(P_{2})}$}. As $\left\vert
X\right\vert =3,$ there is $x\in X$ such that $\{x\}=X \setminus %
\{t_{3}(P_{1}),t_{3}(P_{2})\}.$ Therefore, 
as $f(P)\neq t_{1}(P_{i})$ for each $i\in \{1,2\}$ (see hypothesis of Case 2)$%
,$ $f(P)=x.$ Furthermore, as $t_{3}(P_{1})\neq t_{3}(P_{2}),$ $\mathcal{M}%
(P)=\{x\}$ and then, \eqref{igual} holds.

\item[$\boldsymbol{2.3}.$] \textbf{$\boldsymbol{t(P_{1})\neq t(P_{2})$ and $t_{1}(P_{1})=t_{1}(P_{2})}$}.  Then, $(P_{1},P_{2})=(P_{1}^{\prime },P_{2}^{\prime })$ with $P^{\prime
}\in \{\widehat{P},\overline{P},\widetilde{P}\}.$ By anonymity and the fact that \eqref{igual} holds for $P'$,
\begin{equation*}
f(P)=f(P^{\prime })=\underset{\succ ^{\star }}{\max } \ \mathcal{M}(P^{\prime
})=\underset{\succ ^{\star }}{\max } \ \mathcal{M}(P).
\end{equation*} 

\end{enumerate}

\end{enumerate}
Therefore, $f$ is a successive elimination rule or an $A$-maxmin$^{\star }$ rule, as stated. \hfill $\square$

\end{document}